\newtheorem{thm}{Theorem}
\newtheorem{lem}[thm]{Lemma}
\newtheorem{prop}[thm]{Proposition}
\newtheorem{cor}[thm]{Corollary}
\newtheorem{defn}{Definition}
\begin{document}



\title{A Study in Thermal:\newline Advantage framework for resource engines}

\author{Jakub Czartowski}
\affiliation{School of Physical and Mathematical Sciences, Nanyang Technological University, 21 Nanyang Link, 637361 Singapore, Republic of Singapore}
\email{jakub.czartowski@ntu.edu.sg}

\author{Rafał Bistroń}
\affiliation{Faculty of Physics, Astronomy and Applied Computer Science, Jagiellonian University, 30-348 Kraków, Poland}

\date{}

\begin{abstract}
    Thermal engines have been one of the principal topics of thermodynamics ever since its beginning. 
    Today, in the era of the second quantum revolution, although the thermal processes in constant temperatures are relatively well understood in the language of resource theories, a framework to describe thermal engines is still in its infancy. In this work we formalize the resource theory of engines, initially put forward in \cite{KAMIL}, and define such quantities as engine efficiency. Then, we 
    turn to a detailed study of thermal engines based on free operations arising from the resource theory of athermality under different restrictions: thermal operation, semilocal thermal operations and local thermal operation with classical communication.   
    In order to provide analytic lower bounds for thermal engine operation
    we construct tree-states -- free states, which can be obtained from Gibbs state in simple protocol consisting solely of two-level operations.
    Furthermore, we derive a full description of the engine based on semilocal thermal operations describing e.g. free states, faithful monotones and catalytic advantages.
\end{abstract}

\maketitle

\section{Introduction}

    Ever since their introduction, heat engines have been the workhorse of Industrial Revolution, converting imbalance between the internal hot and ambient cold temperature into an indispensable resource -- work. They have been subject of in-depth study, using the tools of classical thermodynamics, with one of the cornerstone results being given by the idea of Carnot engine, which operating between given cold and hot temperatures provides an optimal ratio between the heat going in, and the amount of work going out \cite{carnot1824reflexions}. Thus, all the way back in 1824 Carnot set the ultimate limits to what a macro-scale heat engine can provide us with. In modern times of quantum mechanics, it has been realized that even few-level quantum systems can be already seen as models of heat engines \cite{Geusic1959threelevel, Scovil1959threelevel}, thus justifying the advent of the novel field of quantum thermodynamics \cite{RAlicki_1979}. Since then, a significant amount of research has been devoted to understanding the way in which quantum systems can act as heat baths, working media and heat engines themselves \cite{Tonner2005,YingNg2017, Tajima2017, Mitchison2019, Bera2021,bera2021qheatcarnot}
    
    In parallel, in modern times, we are on the verge of the second quantum revolution, fuelled by the advances in the field of quantum information theory. Its influence is believed to provide enhancements in fields such as computation, secure communication, metrology and more \cite{larose2024brief, Portmann2022security, Giovannetti2011}. In order to attain many of the aforementioned advantages, the need for some form of quantum non-locality is recognised, with the most prevalent and best studied being the phenomenon of quantum entanglement \cite{H4entanglement}. In recognition of its value, it has been encapsulated in a form of resource theory of entanglement, with entanglement as the resource at its very core \cite{Bennett1996, Vedral1997}. The family of resource theories extended over time to such resources as coherence \cite{Streltsov2017}, magic \cite{Veitch2014}, athermality \cite{Brandao2013} and more \cite{Gour2019, gour2024resources}. In a recent development, a concept of resource engines has also been put forward, in which two resource theories of similar nature, eg. athermality with respect to two different reservoirs or coherence with respect to two different bases, are used to drive the system initiated in a free state as far as possible from the original free states of both theories \cite{KAMIL}. In another important recent development, a framework for analysing conversion of states out of equilibrium into entangled states using solely thermodynamic means has been put forward \cite{deoliveira2024entanglement}, with a complementary results on heat-based witnesses of entanglement proposed in \cite{deOliveira2025heatentwitness}.

    We revisit the idea of resource engines from the viewpoint of operational advantages coming from the possibility of driving the states away from free sets, or ``resourceless equilibria``. In particular, the operational tasks in question may, although not necessarily, be related to monotones of the underlying resource monotones. This presents a possibility of constructing \textbf{resource theory of engines}, where monotones are connected to the corresponding advantage quantifiers. This aligns with an emerging direction to fuse different resource theories together \cite{son2024robust}. We then focus on simple finite-dimensional engines with operations constrained by thermal restrictions with varying degrees of control over involved systems and baths, demonstrating their ability to provide an advantage in two tasks related directly to thermodynamics -- \textit{cooling} and \textit{heating} -- and problem of obtaining resource crucial for quantum-informational tasks, \textit{entanglement} from athermality generated solely from the action of the engine itself.

    This work is organised as follows. In Section~\ref{sec:prelims} we provide the necessary preliminaries, where we discuss basic notions of the resource theoretic framework, different free operation sets used in the context of resource theory of athermality and basic aspects of bipartite entanglement. In Section~\ref{sec:op_ad} we provide a formalisation of the notion of resource engines and introduce measures of advantage that can be applied to any operational task with a numerical measure of performance. Section~\ref{sec:ad_therm_eng} considers an application of thermal engines to three different operational tasks -- cooling, heating and entanglement generation. Therein, we introduce a notion of tree-states, which allow for analytical interior approximations of the set of free states for thermal engines based on elementary thermal operations and subsets thereof, and provide both numerical and analytical considerations for thermal engines based on different sets of free operations ordered by the degree of control necessary to implement them. We finish the section by providing a complete description of the thermal engine based on semilocal thermal operations. We close the work in Section~\ref{sec:conc} by providing a discussion of the presented results and outlook for potential future directions of research. 
    
\section{Preliminaries}
\label{sec:prelims}
    In what follows we will lay down the basic notions necessary to understand the remainder of the work, starting with a short outline of the resource-theoretic framework. Then we will present the relevant set of approaches to quantum thermodynamics, spanning Gibbs-preserving and thermal operations through semilocal thermal operations to local thermal operations and classical communication. We will then move on to discuss the basics of entanglement theory for bipartite systems and finish with recalling the recently introduced notion of resource engines.

    \subsection{Outline of resource theories}

        We begin the exhibition of the tools by recalling the basic definitions connected with resource theories, starting with the concept itself from a more intuitive, heuristic point of view. In day-to-day life certain objects -- like wooden sticks, for instance -- are readily available, and thus present no value; as such, sticks can be seen as \textit{free}. High quality timber, on the other hand, is hard to obtain and can be used to create useful objects -- hence, it is seen as valuable, or \textit{resourceful}. Wood can be also cut at relatively low cost, given one has a handsaw at their disposal. Using it, one can turn a large piece of timber into smaller planks of different sizes, leaving some waste product behind and reducing its value; on the other hand, the sticks, even if cut down, remain free. This leads to a simplified statement that simple cutting of wood is \textit{a free operation}. This can be contrasted with detailed carving and processing of wood, which may produce furniture, instruments and art -- objects of increased value compared to the original timber. Such processing, however, requires skill and specialized equipment, thus it is not free.

    Naturally, the example of timber as described above is subject to economic laws of supply and demand, with its value defined intersubjectively by the society, which cannot be described in rigorous manner. The situation in physics is quite different -- one can rigorously define, manipulate and quantify resources necessary for certain tasks. 

    Usually, construction of a resource theory starts from a subset $\mathcal{F}\subset\mathcal{S}$ of the full space of states, which is referred to as \textit{the set of free states}. The problem of valuation and free manipulation of resources is then handled by \textit{a monotone} $\mathcal{M}$ (or a set of monotones) and \textit{a set of free operations} $\mathcal{O} = \qty{O:\mathcal{S}\mapsto\mathcal{S}}$. Free operations should not create resource from free states:  for any state $f\in\mathcal{F}$ and operation $O\in\mathcal{O}$ we have $O(f)\in\mathcal{F}$. The monotone $\mathcal{M}:\mathcal{S}\mapsto\mathbb{R}_+$ maps the states to the non-negative real numbers, providing a measure of their values. Using only free operations one should not be able to create additional resources, hence \textit{monotonicity of the monotone under free operations}: for any $s\in\mathcal{S}$ and $O\in\mathcal{O}$ we have $\mathcal{M}(O(s)) \leq \mathcal{M}(s)$. Further, we say that monotone $\mathcal{M}$ is \textit{faithful} whenever it vanishes only on the free states, $\mathcal{M}(f) = 0\Longleftrightarrow f\in\mathcal{F}$.

    The idea of rigorous quantification of resources and their manipulation has been particularly popular in the context of quantum theory, which has seen advent of a whole community of quantum resource theory. Examples of quantities considered to be resources, useful for leveraging advantages presented by quantum phenomena, include coherence, negativity, magic or imaginarity, among others -- for a full review, we refer the reader to a textbook by Gilad Gour \cite{gour2024resources}. In this work, however, we will be concerned mainly with two resources -- entanglement, and athermality.

    \subsection{Quantum thermodynamics frameworks}

        In what follows we will use transformations for systems in contact with thermal reservoirs characterised by inverse temperature $\beta = (k_B T)^{-1}$. Before presenting specific frameworks we will utilise, including thermal operations (TOs), semilocal thermal operations (SLTOs) and local thermal operations and classical communication (LTOCC), we lay down basic notions and notations common to all of them.

        The primary system of dimension $d$, prepared in an initial state $\rho$ is characterised by its Hamiltonian $H = \sum_{i} E_i\op{E_i}$, where we assume the energy levels $E_i$ to be non-degenerate. 
    	The total system-bath initial state is assumed to be uncorrelated, $\rho\otimes\gamma_E$, with the bath characterised by the Hamiltonian $H_E$ starting in the Gibbs state defined as
    	\begin{equation}\label{eq:gibbs_state}
    		\gamma_E = \frac{e^{-\beta H_E}}{\Tr(e^{-\beta H_E})}.
    	\end{equation}
        Hereafter subscript $E$ denotes the thermal environment. 
    	If not stated otherwise, in what follows we will be restricting our considerations of states under thermodynamic evolution to the energy-incoherent states, ie. such that they commute with the Hamiltonian of the primary system, $\comm{\rho}{H} = 0$. With this restriction, there is a one-to-one correspondence between states $\rho = \sum_i p_i\op{E_i}$ and their population vectors $\vb{p}$. Furthermore, the thermal equilibrium state of the system ${\gamma}$ is defined by Eq.~\eqref{eq:gibbs_state} with $H_E$ replaced with system Hamiltonian $H$.

        A common aspect of all frameworks under consideration is the preservation of thermal equilibrium, which is realised in most generic manner by \textit{Gibbs-preserving operations} (GP). A map $\mathcal{E}$ is called Gibbs-preserving, as the name suggests when Gibbs stats $\gamma$ is its stable point, $\mathcal{E}(\gamma) = \gamma$. In particular, as we will be considering also bipartite systems, when one part of the system is in contact with a cold bath with $\beta_C$ and the other with a hot bath with $\beta_H$, the GP condition will correspond to preservation of the product of local thermal equilibria, $\mathcal{E}(\gamma_C\otimes\gamma_H) = \gamma_C\otimes\gamma_H$.
    
        \subsubsection{Thermal operations}

            In order to make the Gibbs-preserving condition operational, it is useful to consider the framework of thermal operations framework (TOs) \cite{Janzing2000, Brando2015, Lostaglio2019}. It starts from the assumption that the system and the bath, prepared in an uncorrelated state $\rho\otimes\gamma_E$, undergo a joint unitary evolution, thus introducing the assumption that they form a closed system,
            so the final state of the system with disregarded bath is given by
        	\begin{equation}
        		\mathcal{E}(\rho) = \Tr_E\qty[U\qty(\rho\otimes\gamma_E)U^\dagger].
        	\end{equation}
            Moreover, to ensure energy conservation,
            it is assumed that the unitary operation $U$ commutes with the joint Hamiltonian,
        	\begin{equation}
        		\label{E_prev}
        		\comm{U}{H\otimes\mathbbm{1}_E+\mathbbm{1}\otimes H_E} = 0,
        	\end{equation}
            so it preserves the energy of the joint system.
        	
            Given a pair of states $\rho$ and $\sigma$, the transition is possible if and only if there exists a thermal operation $\mathcal{E}$ such that $\mathcal{E}(\rho) = \sigma$ and $\mathcal{E}(\gamma) = \gamma$.
            However, as we limit ourselves to energy-incoherent states, the condition reduces to the existence of a stochastic matrix $\Lambda$ acting on the respective populations $\vb{p} = \operatorname{diag}_H(\rho),\,\vb{q} = \operatorname{diag}_H(\sigma)$ such that
        	\begin{align}
            \label{thermal1}
        		\Lambda\vb{p} = \vb{q},&&
        		\Lambda\boldsymbol{\gamma} = \boldsymbol{\gamma}.
        	\end{align}
        	
            In the particular case of infinite temperature, $\beta = 0$, the Gibbs state reduces to the maximally mixed state, with constant populations $\eta_i = \frac{1}{d}$, and the Gibbs-preserving condition $\Lambda \boldsymbol{\eta} = \boldsymbol{\eta}$ defines the set of noisy operations acting on the energy-incoherent states equivalent to the set of bistochastic matrices.
        
        \subsubsection{Semilocal thermal operations}

            Recently, there has been an effort to formulate a notion similar to thermal operations, which would be applicable to distributed systems interacting with different local thermal baths \cite{bera2021qheatcarnot}. This approach resulted in the construction of the theory named semilocal thermal operations (SLTO), which is suitable to consider the ultimate limitations of finite-size thermal engines, as it incorporates simultaneous interactions between all systems and bath elements while implementing restrictions stemming from energy conservation and entropy-based limitations on energy transfer. 
            The exact definition of semilocal thermal operation is as follows:
    
            \begin{defn}[\cite{bera2021qheatcarnot} Definition 1]
                Let us consider a system $S$ consisting of two subsystems $S^{(A)}$, $S^{(B)}$ with Hamiltonian $H = H^{(A)}$ + $H^{(B)}$. An operation $\Lambda^{(AB)}$ with respect to local inverse temperatures $\beta^{(A)}$ and $\beta^{(B)}$ is called \textit{semilocal thermal operation} if there exist two local thermal baths $\mathcal{B}^{(A)}$, $\mathcal{B}^{(B)}$ with Hamiltonians $H^{\mathcal{B}^{(A)}}$, $H^{\mathcal{B}^{(B)}}$ such that
                \begin{equation}\label{SLTO_def}
                \Lambda^{(AB)}(\rho^{(AB)}) = \Tr_{\mathcal{B}^{(A)},\mathcal{B}^{(B)}}\left[U(\gamma^{\mathcal{B}^{(A)}} \otimes \gamma^{\mathcal{B}^{(A)}} \otimes \rho^{(AB)})U^\dagger \right]
                \end{equation}
                where the global unitary matrix $U$ satisfies the following commutation relations:
                \begin{subequations}\label{SLTO_constr}
                \begin{align}
                & [U,H^{(A)} + H^{(B)} + H^{\mathcal{B}^{(A)}} + H^{\mathcal{B}^{(B)}}] = 0 ,\label{SLTO_constr1}\\
                & [U, \beta^{(A)}( H^{(A)} + H^{\mathcal{B}^{(A)}}) + \beta^{(B)}( H^{(B)} + H^{\mathcal{B}^{(B)}})] = 0 \label{SLTO_constr2}
                \end{align}
                \end{subequations}
            \end{defn}
            \noindent The first constraint \eqref{SLTO_constr1} imposes conservation of total system-bath energy across both systems. To understand the second one, one should consider the systems and baths in the thermodynamic limit. Then, the energy flow from or to the thermal bath corresponds to heat exchange,  
            which in this limit is equal to entropy change times the bath's temperature. Thus, the second equation \eqref{SLTO_constr2}, conservation of weighted heat flow, is in fact encoding of the ``second law of thermodynamics'' -- entropy conservation.
            We stress that when both thermal baths have the same temperature, semilocal thermal operations are equivalent to thermal operations on a joint system with one joint bath.
        
            The authors of \cite{bera2021qheatcarnot} provided a strong operational bound on possible transformations of bipartite states using SLTO:
        
            \begin{thm}[\cite{bera2021qheatcarnot}, Supplementary Information, Theorem 11]\label{SLTO_majo}
             The transition between two energy-incoherent states $\rho^{(AB)} \to \sigma^{(AB)}$ can occur under semi-local thermal operation if, and only if, the diagonal of $\rho^{(AB)}$ thermo-majorizes the diagonal of $\sigma^{(AB)}$ with respect to state $\gamma^{(A)}\otimes\gamma^{(B)}$
            \end{thm}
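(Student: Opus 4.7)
My plan is to reduce the statement to a characterization of stochastic maps on population vectors, exploiting the fact that both $\rho^{(AB)}$ and $\sigma^{(AB)}$ are energy-incoherent, and then to prove the two implications separately. Writing $\vb{p} = \operatorname{diag}_H(\rho^{(AB)})$ and $\vb{q} = \operatorname{diag}_H(\sigma^{(AB)})$ with $H = H^{(A)} + H^{(B)}$, the claim becomes equivalent to saying that a SLTO implementing $\vb{p}\to\vb{q}$ exists if and only if there is a stochastic matrix $\Lambda$ with $\Lambda\vb{p} = \vb{q}$ that fixes the joint population vector $\boldsymbol{\gamma}^{(A)}\otimes\boldsymbol{\gamma}^{(B)}$, which is the definitional form of thermo-majorization with respect to $\gamma^{(A)}\otimes\gamma^{(B)}$.

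For necessity (the ``only if'' direction) I would consider the global product Gibbs state at the two local temperatures, $\Gamma = \gamma^{(A)}\otimes\gamma^{(B)}\otimes\gamma^{\mathcal{B}^{(A)}}\otimes\gamma^{\mathcal{B}^{(B)}}$. This state is of the form $e^{-K}/\mathcal{Z}$ with $K = \beta^{(A)}(H^{(A)} + H^{\mathcal{B}^{(A)}}) + \beta^{(B)}(H^{(B)} + H^{\mathcal{B}^{(B)}})$, so constraint \eqref{SLTO_constr2} immediately gives $U\Gamma U^\dagger = \Gamma$. Tracing out the two baths yields $\Lambda^{(AB)}(\gamma^{(A)}\otimes\gamma^{(B)}) = \gamma^{(A)}\otimes\gamma^{(B)}$, and restricting to the diagonal (legitimate because the inputs of interest are energy-incoherent and the joint Hamiltonian is non-degenerate) produces a stochastic matrix with the desired Gibbs-preserving property.

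Sufficiency is the substantive direction, and I would model it on the standard proof for ordinary thermal operations, in which any thermo-majorizing stochastic matrix is decomposed into elementary ``$\beta$-swaps'' on pairs of energy levels. The building block for SLTO is a unitary exchanging joint states $\ket{i}_{AB}\ket{\alpha}_{\mathcal{B}}$ and $\ket{j}_{AB}\ket{\alpha'}_{\mathcal{B}}$ subject to \emph{both} \eqref{SLTO_constr1} and \eqref{SLTO_constr2}. Parametrising the bath-side energy shifts by $\Delta^{(A)}$ and $\Delta^{(B)}$, these two constraints become a $2\times 2$ linear system in those unknowns whose determinant equals $\beta^{(B)}-\beta^{(A)}$, so whenever the bath temperatures differ the system has a unique solution; when $\beta^{(A)}=\beta^{(B)}$ the SLTO setting collapses to ordinary TO on the joint system and the classical thermo-majorization theorem applies directly. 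The weight of the resulting swap, fixed by the Boltzmann factors of the chosen bath levels, then reproduces precisely the Gibbs ratio $\gamma_i^{(A)}\gamma_i^{(B)}/(\gamma_j^{(A)}\gamma_j^{(B)})$ demanded by the classical beta-swap rule.

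The main obstacle I foresee is arguing that the SLTO-compatible $\beta$-swaps constructed above actually generate the \emph{entire} convex set of stochastic matrices preserving $\boldsymbol{\gamma}^{(A)}\otimes\boldsymbol{\gamma}^{(B)}$, rather than a proper subset of it. I would address this through a convex-geometric decomposition combined with a classical-mixing argument: mixtures over bath preparations can themselves be realised as a single SLTO by enlarging the baths, and the extreme rays of the cone of Gibbs-preserving stochastic matrices are spanned by pairwise swaps. Making this precise demands careful bath sizing and quantitative control over the ability to approximate arbitrary Boltzmann ratios by discrete bath-level choices, which in the thermodynamic limit follows from the density of bath spectra but is the technical heart of the argument.
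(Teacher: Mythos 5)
First, a point of comparison: the paper does not prove this statement at all — it is imported verbatim, with citation, from the Supplementary Information of \cite{bera2021qheatcarnot} (Theorem 11), so there is no in-paper proof to match your attempt against; what follows assesses your argument on its own terms.

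Your necessity direction is sound: constraint \eqref{SLTO_constr2} alone gives $[U,e^{-K}]=0$ for $K=\beta^{(A)}(H^{(A)}+H^{\mathcal{B}^{(A)}})+\beta^{(B)}(H^{(B)}+H^{\mathcal{B}^{(B)}})$, hence preservation of the global product Gibbs state, and the induced diagonal action is a stochastic matrix fixing $\gamma^{(A)}\otimes\gamma^{(B)}$, which together with the classical equivalence between Gibbs-stochastic maps and thermo-majorization yields the ``only if''. The genuine gap is in sufficiency. Your plan rests on the claim that the extreme points of the polytope of $\gamma^{(A)}\otimes\gamma^{(B)}$-preserving stochastic matrices are spanned by pairwise swaps, so that compositions and mixtures of two-level SLTO $\beta$-swaps (plus bath enlargement) exhaust every transition permitted by thermo-majorization. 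That claim fails as soon as the joint dimension exceeds two: sequences and convex mixtures of two-level thermal swaps generate only the elementary-thermal-operation set, which is \emph{strictly} weaker than full thermal operations \cite{Lostaglio2018elementarythermal} — a strict inclusion this very paper exploits when it distinguishes ETO/ESLTO engines from full TO/SLTO engines. The obstruction is structural rather than a matter of approximating Boltzmann ratios by dense bath spectra, so the thermodynamic-limit refinement you sketch cannot close it; at best your construction establishes an ``elementary SLTO'' version of sufficiency. (Theorem \ref{thm:ESLTO_properties} of the paper circumvents this only because the engine alternates the two bath assignments over many strokes, a freedom not available within the single SLTO required here.) A correct sufficiency proof instead follows the Horodecki--Oppenheim-type construction adapted to two baths: for each pair with $\vb{p}\succ_{\gamma^{(A)}\otimes\gamma^{(B)}}\vb{q}$ one builds a dedicated bath and a unitary satisfying both commutation constraints whose induced map sends $\vb{p}$ to $\vb{q}$, with the weights $e^{-\beta^{(A)}E^{(A)}-\beta^{(B)}E^{(B)}}$ playing the role of the Gibbs factors — which is what the cited supplementary material does. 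Your observation that the two constraints force each bath to absorb exactly its own subsystem's energy change (determinant $\beta^{(B)}-\beta^{(A)}$) is correct and is the right structural starting point, but it does not by itself deliver the full set of allowed transitions.
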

    
            The above is equivalent to a simple statement in terms of populations:
            The transition between two energy-incoherent states $\rho^{(AB)} \to \sigma^{(AB)}$ can occur under semi-local thermal operation if, and only if, there exist stochastic matrix preserving product of Gibbs states $M(\gamma^{(A)}\otimes\gamma^{(B)}) = \gamma^{(A)}\otimes\gamma^{(B)}$, that maps the spectrum of one state into the other $M(\text{diag}(\rho^{(AB)})) = \text{diag}(\sigma^{(AB)})$.
            Therefore, if one discusses only energy-incoherent states, SLTO can be effectively treated as stochastic Gibbs-preserving matrices.
    
            Furthermore in the recent work \cite{bistron2024local} 
            the authors demonstrated that the set of SLTO is closed under the composition of operations, convex and (topologically) closed for finite-dimensional systems.
            Therefore SLTO constitute an elegant resource theory akin to thermal operations, with a product of local Gibbs states (with different temperatures) as a free state and semilocal thermal operation as free operations.
            
        \subsubsection{Local Thermal Operation and Classical Communication}

            The last resource theory we are going to utilize is the recently postulated \cite{bistron2024local} theory of local thermal operations and classical communication (LTOCC).  The backbone of this theory is an assumption that the join state is distributed between at least two separate parties, which can perform measurements, apply thermal operations on corresponding subsystems and communicate classically. Thus this framework combines the properties of LOCC theory of entanglement with thermal constraints.

            Let us discuss the simplest scheme with two parties: Alice and  Bob. After each of them receives their own part of the state, Alice can perform a measurement on her subsystem and send the information to Bob who, depending on the received information, can perform different thermal operations. If we stack Bob operations conveniently in one tensor $T_{jkl}^{(B)}$, the discussed example corresponds to the stochastic matrix of the form  
        	\begin{equation*}
        		M_{ij,kl} = \delta_{i,k} T_{jkl}^{(B)}~,
        	\end{equation*}
            where $k,l$ are input indices and $i,j$ output indices of Alice's and Bob's subsystems respectively, 
        	since the ensemble of Alice's states remains unchanged.
            Since each of Bob's operations was preserving Bob's Gibbs state $\gamma^{(B)}$, we have:
            \begin{equation}
            \forall_k \sum_l T_{jkl}^{(B)} \gamma_l^{(B)}  =  \gamma_j^{(B)}~.
            \end{equation}
            Next Alice can perform thermal post-processing on her system, which gives one round local thermal operations and classical communication:
        	\begin{equation*}
        		M_{ij,kl} = \Lambda_{i,k}^{(A)} T_{jkl}^{(B)}~,
        	\end{equation*}
        
            The above protocol can be extended to multiple rounds, by composing challenges $M$, or by additionally introducing memory, which allows to condition operations in one round based on measurement outcomes from previous rounds.
        
            Furthermore, one can introduce shared randomness to the frameworks by adding some random probability $\lambda$ conditioning LTOCC operations:
            \begin{equation}
            M = \sum_\alpha \lambda_\alpha M^{\alpha}~.
            \end{equation}
            As presented in \cite{bistron2024local} the introduction of shared randomness effectively makes the set of LTOCC operations convex, thus unless stated otherwise we allow it to simplify the discussion.
            
            Before finishing this subsection we present the simplest protocol with nontrivial memory usage which can be performed in two rounds -- parallel LTOCC. 
        	Assume that after Alice's measurement, Bobs performs identity operations but preserves obtained information. In the next round, Bob performs a measurement of his state and sends the information to Alice. Only then both parties perform conditioned thermal operations. Because measurement in the energy eigenbasis does not alter the ensembles of received states we can describe this joint operation as:
        	\begin{equation}
        		\label{one_round}
        		M_{ij,kl} = T_{ikl}^{(A)} T_{jkl}^{(B)}
        	\end{equation}
        	which correspond to \textit{parallel} Local Thermal Operations and Classical communication. The requirements of local thermality, similarly as before correspond to 
        	\begin{equation}
            \label{LTLOCC_req}
        		\forall_l\sum_{k} T_{ikl}^{(A)} {\gamma}_k^{(A)} = {\gamma}_i^{(A)} \;,~~ 
        		\forall_k\sum_{l} T_{jkl}^{(B)} {\gamma}_l^{(B)} = {\gamma}_j^{(B)} \;,
        	\end{equation}
            and are not altered by possible Alice and Bob's (thermal) pre- and post-processing.
            Furthermore one can demand additional symmetric constraints defining \textit{Symmetric Thermal Operations and Classical Communication} ($\mathcal{S}$LTOCC)
        	\begin{equation}
            \label{LTLOCC_req2}
        		\forall_k\sum_{l} T_{ikl}^{(A)} {\gamma}_l^{(B)} = {\gamma}_i^{(A)} \;,~~ 
        		\forall_l\sum_{k} T_{jkl}^{(A)} {\gamma}_k^{(A)} = {\gamma}_j^{(A)} \;,
        	\end{equation}
            Which states that Bob's output marginal must be a Gibbs state even if Alice's input state was a Gibbs state and vice versa.

    \subsection{Bipartite entanglement}

        A bipartite quantum state $\rho\in\mathcal{D}(\mathcal{H}_A\otimes\mathcal{H}_B)$ is entangled if it cannot be written as a convex combination of products of local states, $\rho_{AB}\neq \sum_i p_i \rho_{A,i}\otimes\rho_{B,i}$. For pure states $\ket{\psi_{AB}}\in\mathcal{H}_A\otimes\mathcal{H}_B$ the statement reduces to saying that the state is not of the product form, $\ket{\psi_{AB}}\neq\ket{\psi_A}\otimes\ket{\psi_B}$, which can be seen as a consequence of the fact that pure states are extremal points of the space of mixed states, thus are themselves non-decomposable. Hence stating that a given state does not belong to the product sub-manifold is sufficient. In consequence, Schmidt decomposition $\ket{\psi_{AB}} = \sum_{i=1}^d \sqrt{\lambda_i}\ket{\psi_{A,i}}\otimes\ket{\psi_{B,i}}$ and the corresponding Schmidt vector provide a complete description of entanglement structure of a given state -- a fact that has been leveraged for considerations of transformability between pure entangled states in terms of majorisation relations between the respective Schmidt vectors \cite{H4entanglement}.

        For mixed states the problem of deciding whether a given state is entangled or not is relatively complicated. One possible way is via application of a positive, but not completely positive, trace-preserving map to one of the subsystems and deciding whether the output is still a quantum state. This basic intuition stands behind the positive partial transpose (PPT) criterion, which states that a state $\rho$ is separable only if its partial transpose is positive, $\rho^{\Gamma_A}\geq 0$. This provides a basis for a measure of entanglement called negativity and defined as
        \begin{equation}
            \mathcal{N}(\rho) = \frac{\norm*{\rho^{\Gamma_A}}_1 - 1}{2}.
        \end{equation}
        Due to the statement above, we know that $\mathcal{N}(\rho) = 0$ whenever a state is not entangled. However, it is known that negativity is a faithful monotone only for systems of dimension $2\times2$ and $2\times 3$ \cite{peres1995quantum, horodecki1996separability}; for any larger systems there exist PPT entangled states.
        In particular, for two qubits we are guaranteed to have at most one negative eigenvalue $\lambda_{min}$ for the partially transposed state $\rho^{\Gamma_A}$, and thus negativity reduces to
        \begin{equation}
            \mathcal{N}(\rho) = \max(-\lambda_{min}, 0).
        \end{equation}
        Utilising its faithfulness and simplicity, in what follows we will use negativity to quantify the degree of entanglement for two-qubit systems.

\section{Operational advantage for resource engines}
\label{sec:op_ad}

Recently a compelling concept of resource engines has been introduced, utilising two (or more) input resource theories to drive otherwise free states out of the free set of either of the theories \cite{KAMIL}.

More specifically, let us consider two resource theories, characterized by triples $\mathcal{R}_1 = (\mathcal{F}_1,\mathcal{O}_1,\mathcal{M}_1)$ and $\mathcal{R}_2 = (\mathcal{F}_2,\mathcal{O}_2,\mathcal{M}_2)$. Following the idea introduced in \cite{KAMIL}, we will consider engine theories, characterized by a pair $\mathfrak{E}(\mathcal{R}_1,\,\mathcal{R}_2)\equiv \mathfrak{E}=(\mathfrak{F},\mathfrak{O})$, where the set of free states contains free sets of the starting theory, $\mathfrak{F}\supset \mathcal{F}_1\cup\mathcal{F}_2$; similarly, the set of free operations is assumed to contain sets of free operations from the underlying theories $\mathfrak{O}\supset\mathcal{O}_1\cup\mathcal{O}_2$ and, additionally, is assumed to be closed under composition.

We begin by putting forward the following property of the set of free operations in any engine theory.
\begin{prop}
\label{prop_combine}
    Given an engine theory $\mathfrak{E}\supset\mathcal{R}_1\cup\mathcal{R}_2$ the set of free operations is given in general by
    \begin{equation}
    \label{eq:1st_def_free_O}
        \mathfrak{O} = \qty{\prod_{i=1}^k \hat{O}_i,\,\hat{O}_i\in\mathcal{O}_1\cup\mathcal{O}_2, k \in \mathbb{N} \cup \{\infty\}}.
    \end{equation}
    Furthermore, if both $\mathcal{O}_1$ and $\mathcal{O}_2$ (are closed and) have a convex structure, to preserve the convexity of the theories one can define it in terms of extreme points, as the smallest closed convex set containing the composition of free operations: 
    \begin{equation}\label{eq:2nd_def_free_O}
        \mathfrak{O} = \overline{\operatorname{conv}\qty(\qty{\prod_{i=1}^k \hat{O}_i,\,\hat{O}_i\in\operatorname{ext}\qty[\mathcal{O}_{(i\operatorname{mod}2)+1}],k \in \mathbb{N} \cup \{\infty\}})}.
    \end{equation}
\end{prop}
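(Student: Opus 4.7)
The plan is to establish both characterizations by a double inclusion, with the convex version following from distributivity of composition over convex combinations together with the fact that each $\mathcal{O}_i$ is already closed under self-composition.

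For equation~\eqref{eq:1st_def_free_O}, I would first note that by the hypothesis $\mathfrak{O}\supset\mathcal{O}_1\cup\mathcal{O}_2$ together with closure of $\mathfrak{O}$ under composition, every finite product $\hat O_k\circ\cdots\circ\hat O_1$ with $\hat O_i\in\mathcal{O}_1\cup\mathcal{O}_2$ must already belong to $\mathfrak{O}$. Conversely, the collection of all such finite products is itself closed under composition (concatenating two products is again a product) and contains $\mathcal{O}_1\cup\mathcal{O}_2$ as the $k=1$ slice, so it is the smallest candidate compatible with the hypotheses; reading the proposition's ``in general'' as the canonical/minimal choice then identifies the two sets. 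The $k=\infty$ case is absorbed by declaring $\mathfrak{O}$ topologically closed, so that norm-limits of Cauchy sequences of finite compositions are included as well.

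For equation~\eqref{eq:2nd_def_free_O}, I would proceed in two steps. Step (i) is an \emph{alternation reduction}: because each $\mathcal{O}_i$ is itself closed under composition, any finite product drawn from $\mathcal{O}_1\cup\mathcal{O}_2$ can be rewritten, by collapsing maximal consecutive blocks of same-theory factors into a single operation of that theory, as a strictly alternating product with $\hat O_i\in\mathcal{O}_{(i\bmod 2)+1}$. Step (ii) uses the assumed closed convex structure: each factor admits an extreme-point decomposition $\hat O_i=\sum_{\alpha_i}p^{(i)}_{\alpha_i}E^{(i)}_{\alpha_i}$ (a finite sum in finite dimension, by Carath\'eodory), and distributing composition over these convex combinations yields
\begin{equation}
    \hat O_k\circ\cdots\circ\hat O_1 \;=\; \sum_{\alpha_1,\dots,\alpha_k}\Bigl(\prod_{i=1}^k p^{(i)}_{\alpha_i}\Bigr)\, E^{(k)}_{\alpha_k}\circ\cdots\circ E^{(1)}_{\alpha_1},
\end{equation}
which is manifestly a convex combination of alternating compositions of extreme points and therefore lies in the closed convex hull on the right-hand side of~\eqref{eq:2nd_def_free_O}. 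The reverse inclusion is immediate, since every element of that closed convex hull is, by definition, a limit of convex combinations of finite compositions of free operations, each of which already lies in the set produced by the first part.

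The main technical obstacle I anticipate is the $k=\infty$ regime and the bookkeeping for interchanging closure and convex combination with composition. To make the argument fully rigorous one needs continuity of $\circ$ in whichever operator topology renders $\mathfrak{O}$ closed, and — strictly speaking, in infinite dimensions — a Choquet-type integral replacement for the finite extreme-point decomposition used in Step (ii). In the finite-dimensional, norm-topology setting relevant to the thermal engines studied in the remainder of the paper, these points are routine and the two decompositions of Step (i) and Step (ii) go through as stated; a fully general formulation would benefit from an explicit topological discussion, but is not needed for the applications below.
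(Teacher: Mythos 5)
Your proposal is correct and follows essentially the same route as the paper: establish \eqref{eq:1st_def_free_O} directly from closure under composition, reduce arbitrary products to alternating ones by merging consecutive same-theory factors, and then decompose each factor into extreme points and distribute composition over the convex combinations, treating the $k\to\infty$ case via closure. Your version is somewhat more explicit about the distributivity step and the topological caveats (the paper instead argues briefly that all extremal points of $\mathfrak{O}$ must be products of extremal operations), but the underlying argument is the same.
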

\begin{proof} 
    Equation \eqref{eq:1st_def_free_O} is trivially satisfied by demanding that $\mathfrak{O}$ is closed under composition.
    
    To show that \eqref{eq:2nd_def_free_O} holds, let us first note that the composition of two consecutive free operations from one resource theory results in another free operation within this theory so we may only consider alternating sequences of operations.
    
    If we restrict ourselves to a finite number of strokes $k < \infty$ then any term in a product can be represented as a finite convex composition of extremal free operations, thus their product can be finitely decomposed as well. Moreover, all extremal points of $\mathfrak{O}$ must be products of extremal operations, which justify the proposition.

    To consider the limit $k\to \infty$ let us
    notice that all extremal points of $\mathfrak{O}$ must be once again, products of extremal operations. Hence all other elements of $\mathfrak{O}$ are convex combinations of the former. 
\end{proof}

Similarly, one can easily formulate similar statement for the set of free states.

\begin{prop}
    Given an engine theory $\mathfrak{E}\supset\mathcal{R}_1\cup\mathcal{R}_2$ the set of free states is given by
    \begin{subequations}
        \begin{align}
            \mathfrak{F} & = \qty{\hat{O} s,\,\hat{O}\in\mathfrak{O},\,s\in\mathcal{F}_1\cup\mathcal{F}_2} \\
            & = \operatorname{conv}\qty(\qty{\hat{O} s,\,\hat{O}\in\operatorname{ext}(\mathfrak{O}),\,s\in\operatorname{ext}(\mathcal{F}_1)\cup\operatorname{ext}(\mathcal{F}_2)}) \label{eq:free_states_engine} 
        \end{align}
    \end{subequations}
    with \eqref{eq:free_states_engine} holding if all free sets $\mathcal{F}_1,\,\mathcal{F}_2,\,\mathfrak{O}$ are convex, with elements of $\mathfrak{O}$ linear.
\end{prop}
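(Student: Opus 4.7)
The plan is to establish each of the two equalities by mutual inclusion, leveraging the closure-under-composition property of $\mathfrak{O}$ guaranteed by Proposition~\ref{prop_combine} together with the basic defining properties of an engine theory ($\mathfrak{F}\supset\mathcal{F}_1\cup\mathcal{F}_2$ and free operations mapping free states to free states).

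For the first equality, I would argue that $\mathfrak{F}$ is the minimal set satisfying both (i) $\mathfrak{F}\supset\mathcal{F}_1\cup\mathcal{F}_2$ and (ii) $\hat{O}(\mathfrak{F})\subset\mathfrak{F}$ for all $\hat{O}\in\mathfrak{O}$. The inclusion $\mathfrak{F}\supset\{\hat{O}s:\hat{O}\in\mathfrak{O},\,s\in\mathcal{F}_1\cup\mathcal{F}_2\}$ is immediate from (i) and (ii), since each such $s$ lies in $\mathfrak{F}$ and each $\hat{O}$ preserves $\mathfrak{F}$. Conversely, the right-hand side itself contains $\mathcal{F}_1\cup\mathcal{F}_2$ (take $\hat{O}=\operatorname{id}\in\mathcal{O}_1\cap\mathcal{O}_2\subset\mathfrak{O}$) and is closed under the action of $\mathfrak{O}$, because $\hat{O}'(\hat{O}s)=(\hat{O}'\circ\hat{O})s$ with $\hat{O}'\circ\hat{O}\in\mathfrak{O}$ by Proposition~\ref{prop_combine}. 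Hence the right-hand side satisfies (i) and (ii), and by minimality the two sets coincide.

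For the second equality, under the convexity and linearity assumptions, the inclusion $\supseteq$ is trivial since extremal elements of $\mathfrak{O}$ and of $\mathcal{F}_k$ belong to $\mathfrak{O}$ and $\mathcal{F}_k$ respectively, and the orbit set is automatically contained in its convex hull. For the reverse inclusion $\subseteq$, I would take any $\hat{O}s$ with $\hat{O}\in\mathfrak{O}$ and $s\in\mathcal{F}_k$ for some $k\in\{1,2\}$, decompose $\hat{O}=\sum_j \mu_j \hat{O}_j$ and $s=\sum_i \lambda_i s_i$ into convex combinations of extremal elements (available by Krein–Milman for the convex, compact sets at hand), and exploit linearity to write
\begin{equation*}
    \hat{O}s \;=\; \sum_{i,j} \mu_j \lambda_i\, \hat{O}_j s_i,
\end{equation*}
which is manifestly a convex combination of terms of the form $\hat{O}_j s_i$ with $\hat{O}_j\in\operatorname{ext}(\mathfrak{O})$ and $s_i\in\operatorname{ext}(\mathcal{F}_k)\subset\operatorname{ext}(\mathcal{F}_1)\cup\operatorname{ext}(\mathcal{F}_2)$. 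Then, by the first equality, every element of $\mathfrak{F}$ is of this orbit form, so every element of $\mathfrak{F}$ lies in the convex hull on the right-hand side.

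The main obstacle I anticipate is the careful handling of extremality in two places: first, keeping the extremal decomposition of a free state $s\in\mathcal{F}_k$ within the extremal points of the same $\mathcal{F}_k$ (rather than spuriously mixing extremal points of the two possibly overlapping free sets), which is straightforward but needs to be stated; second, the distinction between $\operatorname{conv}$ and $\overline{\operatorname{conv}}$. Because Proposition~\ref{prop_combine} in fact uses the closed convex hull when defining $\mathfrak{O}$, an extremal $\hat{O}\in\mathfrak{O}$ need not itself be a product of finitely many extremal $\hat{O}_i\in\mathcal{O}_{(i\bmod 2)+1}$ but only a limit of such; passing this limit through linearity of the action on $s$ is routine in finite dimensions but should in principle replace $\operatorname{conv}$ by $\overline{\operatorname{conv}}$ in the statement. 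Barring that minor refinement, the argument is a direct consequence of linearity and the convex structure already exploited in Proposition~\ref{prop_combine}.
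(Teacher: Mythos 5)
The paper states this proposition without proof (it is introduced as an easy analogue of Proposition~\ref{prop_combine}), so there is no official argument to compare against; judged on its own merits, your proof of the first equality is fine under the natural reading of $\mathfrak{F}$ as the \emph{generated} (minimal) free set, and your assumption $\mathrm{id}\in\mathcal{O}_1\cap\mathcal{O}_2$ is harmless. Likewise the inclusion $\mathfrak{F}\subseteq\operatorname{conv}\{\hat{O}s:\hat{O}\in\operatorname{ext}(\mathfrak{O}),\,s\in\operatorname{ext}(\mathcal{F}_1)\cup\operatorname{ext}(\mathcal{F}_2)\}$ via the double extremal decomposition and linearity is correct (and in the paper's finite-dimensional setting Minkowski's theorem gives $\mathfrak{O}=\operatorname{conv}(\operatorname{ext}\mathfrak{O})$ for the compact set $\mathfrak{O}$, so your $\operatorname{conv}$ versus $\overline{\operatorname{conv}}$ caveat is not the real issue).

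The genuine gap is the opposite inclusion, which you dismiss as trivial. To conclude $\operatorname{conv}(\{\hat{O}s\})\subseteq\mathfrak{F}$ you need the orbit set $\{\hat{O}s:\hat{O}\in\mathfrak{O},\,s\in\mathcal{F}_1\cup\mathcal{F}_2\}$ to be convex, and this does \emph{not} follow from convexity of $\mathcal{F}_1,\mathcal{F}_2,\mathfrak{O}$ and linearity alone: convexity of $\mathfrak{O}$ lets you merge mixtures taken over a \emph{common} seed, $\lambda\hat{O}_1 s+(1-\lambda)\hat{O}_2 s=(\lambda\hat{O}_1+(1-\lambda)\hat{O}_2)s$, but a combination $\lambda\hat{O}_1 s_1+(1-\lambda)\hat{O}_2 s_2$ with two different seeds (one in $\mathcal{F}_1$, one in $\mathcal{F}_2$, or even two distinct extreme points of the same $\mathcal{F}_k$) has no reason to be expressible as a single $\hat{O}s$. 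A degenerate illustration: $\mathcal{O}_1=\mathcal{O}_2=\{\mathrm{id}\}$ with convex $\mathcal{F}_1\neq\mathcal{F}_2$ gives orbit set $\mathcal{F}_1\cup\mathcal{F}_2$ but convex hull $\operatorname{conv}(\mathcal{F}_1\cup\mathcal{F}_2)$, which are different. So as literally stated the second equality needs an extra ingredient, and your proof should name it: either read the second line as the convexified \emph{definition} of $\mathfrak{F}$ (mirroring the ``one can define'' phrasing of Proposition~\ref{prop_combine}), or add the hypothesis that all of $\mathcal{F}_1\cup\mathcal{F}_2$ lies in a single $\mathfrak{O}$-orbit -- e.g.\ that $\mathfrak{O}$ contains reset maps onto free states, as it does for the thermal engines considered later (complete thermalisation sends $\gamma_c$ to $\gamma_h$), in which case every generator shares a common seed and your mixing identity does make the orbit set convex.
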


It is relatively simple to put forward conditions under which an engine theory $\mathfrak{E}$ reduces to one of the underlying theories.

\begin{prop}
    Engine theory $\mathfrak{E}\supset\mathcal{R}_1\cup\mathcal{R}_2$ is equivalent to one of the underlying resource theories if and only if $\mathcal{F}_2\subseteq \mathcal{F}_1,\,\mathcal{O}_2\subseteq\mathcal{O}_1$.
\end{prop}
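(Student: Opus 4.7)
The plan is to unpack what ``equivalent to one of the underlying resource theories'' means and then verify both implications directly, leaning on Proposition~\ref{prop_combine} together with the standing assumption that any resource theory's free operations are closed under composition.

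For the backward direction, I would assume $\mathcal{F}_2\subseteq\mathcal{F}_1$ and $\mathcal{O}_2\subseteq\mathcal{O}_1$ and show $\mathfrak{E}=\mathcal{R}_1$. Under these inclusions $\mathcal{O}_1\cup\mathcal{O}_2=\mathcal{O}_1$, so Eq.~\eqref{eq:1st_def_free_O} gives $\mathfrak{O}$ as the collection of finite compositions of elements of $\mathcal{O}_1$. Since $\mathcal{O}_1$ is by assumption a set of free operations of a resource theory, it is closed under composition, and thus every such product already lies in $\mathcal{O}_1$; combined with $\mathfrak{O}\supseteq\mathcal{O}_1$ this yields $\mathfrak{O}=\mathcal{O}_1$. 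An analogous argument using the previous proposition, together with the fact that free operations cannot map free states outside the free set, gives $\mathfrak{F}=\mathcal{F}_1$.

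For the forward direction, I would assume $\mathfrak{E}$ equals one of the underlying theories, say $\mathfrak{E}=\mathcal{R}_1$, which by definition means $\mathfrak{F}=\mathcal{F}_1$ and $\mathfrak{O}=\mathcal{O}_1$. The defining containments $\mathfrak{F}\supseteq\mathcal{F}_1\cup\mathcal{F}_2$ and $\mathfrak{O}\supseteq\mathcal{O}_1\cup\mathcal{O}_2$ immediately force $\mathcal{F}_2\subseteq\mathcal{F}_1$ and $\mathcal{O}_2\subseteq\mathcal{O}_1$. Swapping the roles of indices~$1$ and~$2$ covers the other case, so up to relabelling the claimed condition is both necessary and sufficient.

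The argument is essentially bookkeeping, so there is no deep obstacle; the only subtlety worth flagging explicitly is that ``equivalence'' should be read as equality of the respective free sets and free operations, and that one must invoke the closure of $\mathcal{O}_1$ under composition to collapse the infinite product set in Eq.~\eqref{eq:1st_def_free_O} back into $\mathcal{O}_1$ itself. If the paper later wants to treat convex engine theories, the same reasoning can be repeated starting from Eq.~\eqref{eq:2nd_def_free_O}, since a convex, composition-closed $\mathcal{O}_1$ already equals its own closed convex hull of compositions of its extreme points.
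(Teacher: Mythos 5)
Your proposal is correct and follows essentially the same route as the paper: both directions reduce to the observation that the inclusions collapse $\mathcal{O}_1\cup\mathcal{O}_2$ to $\mathcal{O}_1$ (with composition-closure and invariance of $\mathcal{F}_1$ giving $\mathfrak{O}=\mathcal{O}_1$, $\mathfrak{F}=\mathcal{F}_1$), while the converse reads the inclusions off the defining containments of $\mathfrak{E}$, exactly as the paper does via the $k=1$ case of Proposition~\ref{prop_combine} and its WLOG relabelling. Your explicit handling of the relabelling and of closure under composition only makes the same bookkeeping slightly more detailed.
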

\begin{proof}
    First lets assume that $\mathcal{F}_2\subseteq \mathcal{F}_1,\,\mathcal{O}_2\subseteq\mathcal{O}_1$. From the second inclusion, we immediately obtain that $\mathcal{O}_1 = \mathfrak{O}$. Furthermore from the first inclusion $\mathcal{F}_1\cup\mathcal{F}_2 = \mathcal{F}_1$, which is an invariant set under $\mathcal{O}_1$. Thus the implication in the first direction is fulfilled.

    To prove the implication in the opposite direction, without loss of generality let us assume that $\mathfrak{O} = \mathcal{O}_1$. Since $\mathcal{O}_2 \subseteq \mathfrak{O}$, which can be seen by setting $k = 1$ in Proposition \ref{prop_combine} and by taking any $O_1 \in \mathcal{O}_2$, we obtain $\mathcal{O}_2\subseteq\mathcal{O}_1$. Similarly one can show that $\mathcal{F}_2 \subseteq \mathfrak{F} = \mathcal{F}_1$ which ends the proof.
\end{proof}

Heuristically, engine's main task is to provide its user with an advantage at a given operational task. Thus, in what follows we put forward possible measures of operational advantage obtained by using the imbalance between the input resource theories -- two quantities that are relatively easily understood from intuition are \textit{relative advantage} $\mathfrak{A}_r$ and \textit{free advantage} $\mathfrak{A}_f$, defined as follows
\begin{subequations}
\label{amplifications}
    \begin{alignat}{3}
        &\mathfrak{A}_r(s;\mathcal{M}_i) =&\max_{\hat{O}\in\mathfrak{O}}\frac{\mathcal{M}_i(\hat{O} s) - \mathcal{M}_i(s)}{\mathcal{M}_i(s)} & \\
        &\mathfrak{A}_f(s;\mathcal{M}_i) = & 
        \max_{\hat{O}\in\mathfrak{O}}
        \qty(\min_{f\in\mathcal{F}_j,\mathcal{M}_i(f)>0}
        \frac{\mathcal{M}_i(\hat{O} s) - \mathcal{M}_i(f)}{\mathcal{M}_i(f)}) & = & \max_{\hat{O}\in\mathfrak{O}} \frac{\mathcal{M}_i(\hat{O} s) - \mathcal{M}_i(f_*)}{\mathcal{M}_i(f_*)} \label{eq:free_advantage} 
    \end{alignat}
\end{subequations}

Relative advantage $\mathfrak{A}_r$ shows how much operational advantage, as measured by the monotone $\mathcal{M}_i$ from $i^{\text{th}}$ resource theory, has been generated by the engine in comparison with the capability of the original state~$s$. It is worth noting that if the set of free states $\mathfrak{F}$ is compact, there will be a state for which this quantity goes to zero, $\exists s\in \mathfrak{F}:\mathfrak{A}_r(s;\mathcal{M}_i) = 0$. It can be easily seen by considering a free state $s$ which maximises $\mathcal{M}_i(s)$, so $\mathcal{M}_i(\hat{O}s)$ cannot increase as free operations map free states into free states. 
In particular, if the monotone and the set of free states $\mathfrak{F}$ are additionally convex, $\mathcal{M}_i$ will be maximised on the boundaries of $\mathfrak{F}$.

Free advantage $\mathfrak{A}_f$, on the other hand, shows the largest advantage in relation to all free sets of engine components except for $\mathcal{F}_i$, for which $\mathcal{M}_i$ vanishes trivially. Most importantly, there will be a unique state $f_*^i\in\bigcup_{j\neq i} \mathcal{F}_j$ for which $\mathcal{M}_i(f_*^i) = \min_{f\in\mathcal{F}_j,j\neq i} \mathcal{M}_i(f)$, thus simplifying the calculation. Since for any state $s$, monotone $\mathcal{M}_i$ and fixed free operation $\hat{O}_1 \in \mathfrak{O}$  one has $\max_{\hat{O} \in \mathfrak{O}} \mathcal{M}_i(\hat{O}\hat{O}_1 s )\leq \max_{\hat{O} \in \mathfrak{O}} \mathcal{M}_i(\hat{O} s )$, we easily see that free advantage is monotonic under global free operations.

Both $\mathfrak{A}_f$ and $\mathfrak{A}_r$ are suitable to measure the advantage generated by a resource engine for any state $s$. However, we would like to obtain a measure that would differentiate between free states $\mathfrak{F}$ and the remainder of the state space. We put forward a measure referred to as \textit{engine efficiency} $\mathfrak{Ef}$, which compares the best free advantage attainable on any state from the free set $\mathfrak{F}$ and the free advantage of the input state $s$,
\begin{equation}
    \mathfrak{Ef}(s) = \max\qty(\max_{i\in\{1,2\}} \frac{\mathfrak{A}_{f}(s;\mathcal{M}_i)}{\max_{s'\in\mathfrak{F}} \mathfrak{A}_{f}(s';\mathcal{M}_i)}-1,0).
\end{equation}
Monotone formulated in this way is guaranteed to be zero for all $s\in\mathfrak{F}$ and greater than zero for all other states.

Finally, let us note that both measures of advantage can be used with an arbitrary function of benefit $\mathcal{B}$ instead of monotones $\mathcal{M}_i$ to evaluate advantage brought by the engine to tasks that may not be directly related to the underlying resource theories. In particular, in what follows in addition to heating and cooling, relatively standard tasks from the perspective of thermal resource theories, we will consider the task of entanglement generation, which can be generated from athermality, even though connection is far from immediate.

In the remainder of this work, we focus primarily on engines' free states $\mathfrak{F}$. Therefore the above-derived quantities will be used mostly for free states to characterize how well the engine works without ``extra fuel''. As a consequence, efficiency $\mathfrak{Ef}$ vanishes trivially. On the other hand, free advantage $\mathfrak{A}_f$ is a suitable measure to characterize best states within $\mathfrak{F}$ with respect to the selected operational tasks.

\section{Advantages from thermal engines}
\label{sec:ad_therm_eng}

Thermal engines usually operate between two baths in thermal equilibrium, usually with two different temperatures $T_c < T_h$, and thus referred to as cold and hot baths, respectively. Our working medium, unless stated otherwise, will consist in two qubits with identical Hamiltonians which, unless stated otherwise, we take to have energies $E_0 = 0,\, E_1 = 1$ with eigenstates given by the computational basis. Thus, the hot and cold Gibbs states of one of the qubits are fully described by their populations,
\begin{align}
    \boldsymbol{\gamma} = \mqty(\gamma,&1-\gamma) = \mqty(\frac{1}{1+e^{-\beta_c}},&\frac{e^{-\beta_c}}{1+e^{-\beta_c}}), &&
    \boldsymbol{\Gamma} = \mqty(\Gamma,&1-\Gamma) = \mqty(\frac{1}{1+e^{-\beta_h}},&\frac{e^{-\beta_h}}{1+e^{-\beta_h}}),
\end{align}
where $\beta_x = 1/T_x$. Since we consider systems with two identical components, in this section we denote populations as $p_{ij}$ where indices $i\, j$ denote the energy levels of the first and second subsystem, respectively.

In what follows we will focus on three operational tasks: cooling, heating and entanglement generation. First two notions are natural in the context of thermodynamic resource theories and thus have been partially investigated in \cite{KAMIL}. On the other hand, generating entanglement from athermality has been introduced only recently in the resource-theoretic context of a single thermal bath \cite{deoliveira2024entanglement}. Nevertheless, the task of entanglement generation is more closely aligned with the standard understanding of heat engines, which are used to convert the imbalance between internal and ambient temperature to work or electricity -- resource on the output is, in certain sense, different than the one on the input.

As a reference point, let us consider the best cooling, heating and entanglement achievable when our operations are restricted to local thermalisation with ambient temperature, which can be seen as a Markovian thermal process as described in \cite{lostaglio2022continuous}. In what follows cooling will be quantified by ground state population $\mathcal{P}_G$, and in the case of local thermalization only the maximum is achieved for
\begin{equation}
    \mathcal{P}_G\qty(\boldsymbol{\gamma}^{\otimes 2}) = \gamma^2
\end{equation}

Similarly, heating will be quantified by maximal achievable population of the maximally excited state $\mathcal{P}_E$, which in the present case is given by
\begin{equation}
    \mathcal{P}_E\qty(\boldsymbol{\Gamma}^{\otimes 2}) = (1-\Gamma)^2
\end{equation}

Finally, for entanglement we will consider a situation where in the final step of the protocol both qubits are brought into contact with a common thermal bath, thus allowing to entangle them using unitary operation $U$ restricted to the energy-degenerate subspace  $\operatorname{span}(\ket{01},\ket{10})$. Thus, following results from \cite{deoliveira2024entanglement}, maximal achievable negativity achievable through thermal operations for a population distribution $\vb{p}$ is given by
\begin{equation}
    \mathcal{N}_{max}(\vb{p}) = \max_U \mathcal{N}(U\rho U^\dagger) = \frac{1}{2}\max\qty(\sqrt{(p_{00}-p_{11})^2 + (p_{01}-p_{10})^2} - p_{00} - p_{11}, 0)
\end{equation}
with $\rho$ diagonal and $\operatorname{diag}(\rho) = \vb{p}$. In particular, $\mathcal{N}_{max} = 0$ whenever $4p_1 p_4 \geq (p_2-p_3)^2$. From this we find that a state $\boldsymbol{\gamma}\otimes\boldsymbol{\Gamma}$ can be entangled by the means described above if and only if
$4 \Gamma^{2} \gamma^{2} - 4 \Gamma^{2} \gamma - \Gamma^{2} - 4 \Gamma \gamma^{2} + 6 \Gamma \gamma - \gamma^{2}>0$. This translates to a simple relation between exponentiated temperatures \cite{deoliveira2024entanglement}
\begin{equation} \label{eq:no_entanglement_sep_qub}
    \beta_c-\beta_h > \log(3+2\sqrt{2}) \Longleftrightarrow \mathcal{N}_{max}(\boldsymbol{\gamma}\otimes\boldsymbol{\Gamma}) > 0.
\end{equation}

The above quantities -- population of the ground state, highest excited state and maximal negativity -- are presented in Fig.~\ref{fig:sep_qubits} as a function of exponentiated inverse temperatures $\exp(-\beta_c),\,\exp(-\beta_h)$.

\begin{figure}[H]
    \centering
    \includegraphics[width=\linewidth]{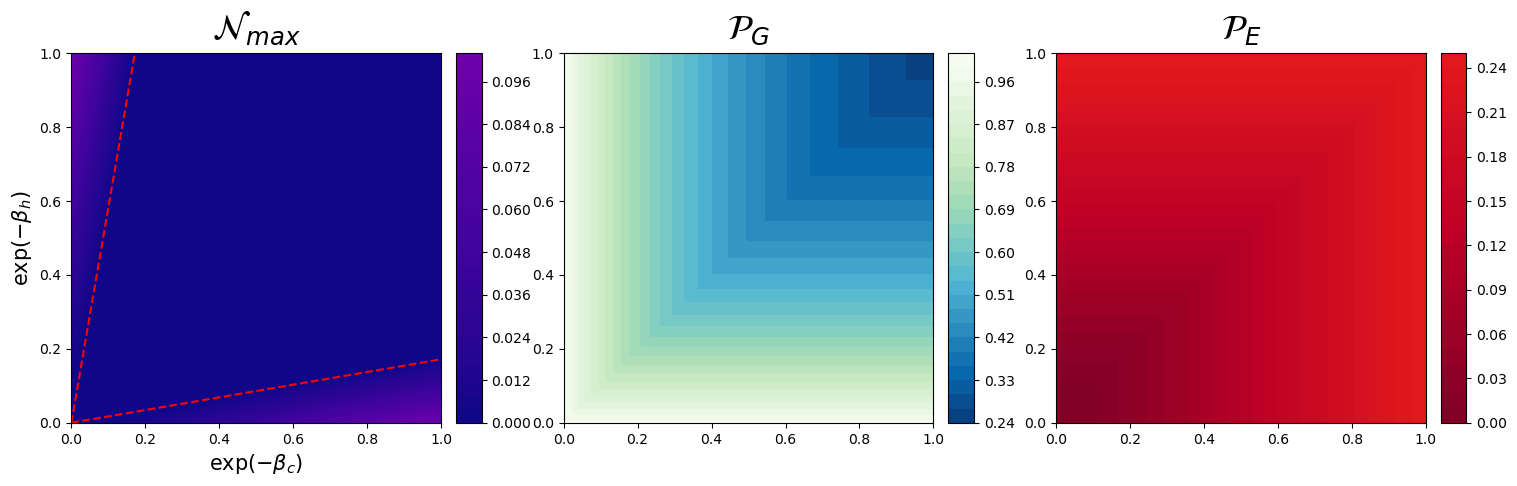}
    \caption{
    \textbf{Two equilibrated qubits:} Under assumption of no operations other than equilibration with the local baths, the space of accessible states is limited to convex combinations of $\boldsymbol{\gamma}$ and $\boldsymbol{\Gamma}$ on either subspace. Thus, optimal values of negativity $\mathcal{N}_{max}$, ground state population $\mathcal{P}_G$ and maximally excited state $\mathcal{P}_E$ are achieved for $\boldsymbol{\gamma}\otimes \boldsymbol{\Gamma}$, $\boldsymbol{\gamma}\otimes \boldsymbol{\gamma}$ and  $\boldsymbol{\Gamma}\otimes \boldsymbol{\Gamma}$, respectively. Additionally, the region between red dashed lines corresponds temperatures for which no entanglement can be generated. Note that for the sake of plots symmetry we allow $\beta_h > \beta_c$.}
    \label{fig:sep_qubits}
\end{figure}

The values of $\mathcal{P}_G,\,\mathcal{P}_E$ and $\mathcal{N}_{max}$ for two equilibrated qubits, which can be seen as best values for Markovian thermal processes\footnote{Note that when $d>2$, one can do better than equilibration, as described in \cite{lostaglio2022continuous}.}, will serve as reference points for the following sections. In particular, note that $\vb{\gamma}^{\otimes 2}$ and $\vb{\Gamma}^{\otimes 2}$ are the free states for cold and hot temperature, respectively; as such, they serve as the optimal states $f^*$ for the free advantages as defined in \eqref{eq:free_advantage}, which we denote as $\mathfrak{A}_f(\vb{\gamma},\mathcal{P}_G) \equiv \mathfrak{A}_f(\mathcal{P}_G)$ and $\mathfrak{A}_f(\vb{\gamma},\mathcal{P}_E) \equiv \mathfrak{A}_f(\mathcal{P}_E)$, respectively in the remainder of the work. Furthermore, note that even by allowing independent equilibration and thus treating $\vb{\gamma}\otimes\vb{\Gamma}$ as a free state for the starting theory, which would be bending the formalism, $\mathcal{N}_{max}(\vb{\gamma}\otimes\vb{\Gamma}) = 0$ for most temperatures, and thus $\mathfrak{A}(\mathcal{N}_{max})$ is not well defined. For this reason, $\mathcal{N}_{max}$ will be considered directly.

In what follows we will consider thermal engines operating under progressively relaxed restrictions on the control over the subsystems interacting cyclically with hot and cold baths. We will begin with two separate qubit engines capable of performing full thermal operations, but not interacting between the qubits. Next, we move on to the case of joint bath control, where at each cycle both qubits interact with the same temperature -- within this setting, we consider one- and two-round LTOCC, elementary TOs and full 2-qubit TOs. Then we move on to assumption that qubits can interact with baths separately, leading to possibility of interaction with different temperatures during a single cycle; this setting allows us to consider one- and two-round LTOCC with different local temperatures, elementary SLTOs and full two-qubit SLTOs. Finally, we shift to a radically more powerful set of operations, LTOCC with memory.

    \subsection{Two qubit engines \textit{revisited}}\label{two_qubit_simple}

    Natural stepping stone towards more complicated control models is a scenario in which two qubits can be put into contact with two separate baths without interaction between them in order to drive them as far out of equilibrium as possible. This corresponds to utilising local thermal operations with two different temperatures acting on two qubits separately, which we will write as $\mathcal{R}_\gamma \text{ ``='' } TO_2^{\otimes 2}(\gamma)$ and analogously for $\mathcal{R}_\Gamma$. As this case has been already considered at length in \cite{KAMIL}, let us shortly recall the results presented therein. 
    
    For a single qubit, the hottest and coldest states that one can drive it into are given by  
    $\tilde{\boldsymbol{\gamma}}  = (\tilde{\gamma},1-\tilde{\gamma})$ and $\tilde{\boldsymbol{\Gamma}} = (\tilde{\Gamma},1-\tilde{\Gamma})$ with 
    \begin{equation} \label{eq:extreme_qubit_pops}
    \tilde{\gamma} = \min\qty(\frac{(2 \gamma - 1)\Gamma}{\Gamma + \gamma - 1},1) \qq{and} 
    \tilde{\Gamma} = \max\qty(\frac{(2 \Gamma - 1)\gamma}{\Gamma + \gamma - 1},0).
    \end{equation} 
    The result is easily derived by alternative application of extreme thermal-swap \cite{KAMIL} from either of the theories to the initial thermal state $\vb{\gamma}$. The set of free states is essentially given as interpolation between these two points, $t\tilde{\boldsymbol{\gamma}} + (1-t)\tilde{\boldsymbol{\Gamma}}$ for $0\leq t \leq 1$. 
    Note that for any state in the theory, either free or resourceful one, the alternative application of thermal swaps  will exponentially drive the state, from inside or outside of the free states' set, into either $\tilde{\boldsymbol{\Gamma}}$ or $\tilde{\boldsymbol{\gamma}}$ as well as described in detail in
    \cite{KAMIL}.

    Since for each qubit, the set of free states is a convex hull of $\tilde{\boldsymbol{\Gamma}}$ and $\tilde{\gamma}$, the free states in joint theory are products from those two sets, $\mathfrak{F} = \operatorname{conv}(\tilde{\boldsymbol{\gamma}},\tilde{\boldsymbol{\Gamma}})^{\otimes 2}$. Note that $\mathfrak{F}$ is not a convex set, since $\operatorname{conv}(\tilde{\boldsymbol{\gamma}}^{\otimes2},
    \tilde{\boldsymbol{\gamma}}\otimes \tilde{\boldsymbol{\Gamma}},
    \tilde{\boldsymbol{\Gamma}}\otimes \tilde{\boldsymbol{\gamma}},
    \tilde{\boldsymbol{\Gamma}}^{\otimes2}) \neq \operatorname{conv}(\tilde{\boldsymbol{\gamma}},\tilde{\boldsymbol{\Gamma}})^{\otimes 2}$.

    With this at hand, we may proceed to quantify the advantage brought by the engine action compared to qubits in thermal equilibrium. Maxima for $\mathcal{P}_G$ and $\mathcal{P}_E$ are obtained trivially for $\tilde{\boldsymbol{\gamma}}^{\otimes2}$ and $\tilde{\boldsymbol{\Gamma}}^{\otimes2}$, respectively, while $\mathcal{N}_{max}$ is a more challenging quantity, which has been considered already in \cite{deoliveira2024entanglement}, where it has been shown that for a fixed energy-incoherent input state it is optimal to utilise a Hadamard gate restricted to the energy-degenerate subspace $\operatorname{span}(\ket{01}, \ket{10})$ to yield maximal entanglement while keeping the operation energy-preserving. Here, we assume that the setting allows for such an interaction only at the final stage, as control over any interaction between systems is costly, and thus the systems are evolved separately until the very last step. Additionally, since negativity is a convex function, one can easily show that to maximize it for any fixed state of one qubit, one has to set another qubit in an appropriate extremal state and vice versa. Thus we may consider only the products of extremal states. The states $\tilde{\boldsymbol{\gamma}}\otimes \tilde{\boldsymbol{\gamma}}$ and $\tilde{\boldsymbol{\Gamma}}\otimes \tilde{\boldsymbol{\Gamma}}$  are maximally mixed in $\qty{\ket{01},\ket{10}}$ subspace, hence no coherences can be created. Therefore the best free states for such protocol is $\tilde{\boldsymbol{\gamma}}\otimes \tilde{\boldsymbol{\Gamma}}$ and $\tilde{\boldsymbol{\Gamma}}\otimes \tilde{\boldsymbol{\gamma}}$, for which
    the negativity is equal 
    \begin{equation} \label{eq:max_ent_two_qub_eng}
        \begin{aligned}
            \mathcal{N}_{max}(\tilde{\boldsymbol{\gamma}}\otimes\tilde{\boldsymbol{\Gamma}})&  = \max\left[0,\frac{\gamma ^2 (-8 (\Gamma -1) \Gamma -1)+8 \gamma  (\Gamma -1) \Gamma +\gamma -\Gamma ^2+\Gamma }{(\gamma +\Gamma -1)^2}\right. \\
            + \frac{1}{2}&\left.\left(\sqrt{\frac{\gamma ^2 (16 (\Gamma -1) \Gamma +5)-2 \gamma  (\Gamma  (8 \Gamma -7)+2)+\Gamma  (5 \Gamma -4)+1}{(\gamma +\Gamma -1)^2}}-2\right)\right].
        \end{aligned}
    \end{equation}
    Although it is possible to put forward an explicit form of the curve along which $\mathcal{N}_{max}(\tilde{\boldsymbol{\gamma}}\otimes\tilde{\boldsymbol{\Gamma}}) = 0$, which should be understood as boundary between temperature pairs, for which entanglement generation using two separate qubit engines is possible, for the sake of clarity we resort to a plot in Fig.~\ref{fig:two_separate_engines}, in which the zero-curve is given by a red dashed line.

    Based on the considerations above it is relatively simple to derive analytic expressions for the free advantage $\mathfrak{A}_f(\mathcal{P}_G)$ and $\mathfrak{A}_f(\mathcal{P}_E)$, evaluated according to eq. \eqref{eq:free_advantage} as described in detail in the section above. However, due to the degree of complication, we refer the reader to the qualitative depiction as given in Fig.~\ref{fig:two_separate_engines}.

    Note, that this specific model of a two-qubit system interacting with two heat baths is minimal when it comes to interaction between the constituents -- for heating and cooling the interaction is absent, while for entanglement it is limited to the single final step, which corresponds to a rotation within the energy-degenerate subspace necessary to turn population imbalance into coherences. As such, it will serve as a benchmark for the subsequent models, which will provide us with increasingly sophisticated control over interaction between qubits and their respective baths, as well as between the qubit systems themselves.

    \begin{figure}[H]
        \centering
        \includegraphics[width=1\linewidth]{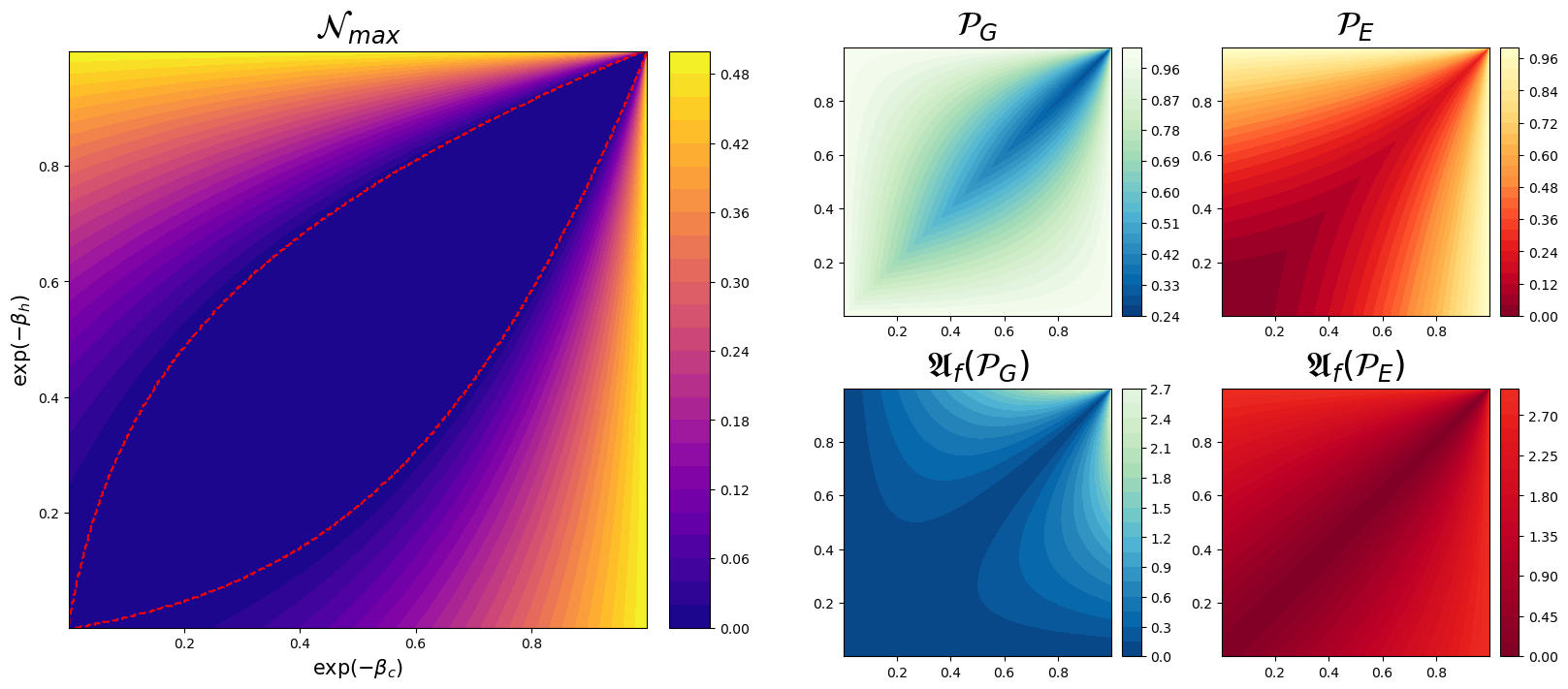}
        \caption{\textbf{Separate qubit engines:} Assuming the possibility of driving two qubits out of equilibrium shows a significant advantage over equilibrated qubits as shown in Fig.~\ref{fig:sep_qubits}. In particular, note the possibility of attaining maximal negativity $\mathcal{N}_{max}$ as $\beta_h\rightarrow 0$ (see eq.~\eqref{eq:max_ent_two_qub_eng}) and maximal excited state population $\mathcal{P}_E$ in the same limit, as prescribed by eq. \eqref{eq:extreme_qubit_pops}. Similarly, maximal ground state population $\mathcal{P}_G$ is attainable either for $\beta_h\rightarrow0$ or $\beta_c\rightarrow\infty$. In particular, note that the no-entanglement region is limited to a tear-shaped region. Both advantages $\mathfrak{A}_f(\mathcal{P}_G)$ and $\mathfrak{A}_f(\mathcal{P}_E)$ reach their maximum for very high temperatures, while showing limited advantage in low-temperature regime. We emphasise that in this and future figures $\mathfrak{A}_f(\mathcal{P}_G)$ must tend to zero in the limit of $\beta_C, \beta_H \to 0$ since in that case any Gibbs state already has population accumulated on ground level.}
        \label{fig:two_separate_engines}
    \end{figure}

    \subsection*{Interlude -- Tree-states}
    \label{sec:interlude}

    In this section, we present a general family of protocols that can be used to obtain highly non-equilibrium states in a controlled manner using a set of currently available elementary two-level operations, which are realisable experimentally via (intensity-dependent) Jaynes-Cummings interactions and collisional models \cite{Lostaglio2018elementarythermal}. We will refer to the resulting states as \textit{tree-states}, due to their relation to the underlying graphs, which will be elucidated below.
    Let us highlight that the following protocols may not provide extremal states for engines using ETO.
    Nevertheless, they provide interior bounds on the set of achievable states in a similar spirit to \cite{KAMIL}, which we find to be stronger in certain regimes.
    
    Let us begin with a simplest example of a 3-level system with an associated Gibbs state~$\gamma$. 
    First, we will consider a subroutine 
    in which an engine repeatedly performs thermal swaps between two levels $0$ and $1$ corresponding to two initial populations $p^{(0)}_{0} = \gamma_0$, $p^{(0)}_{1} = \gamma_1$ and $p^{(0)}_{2} = \gamma_2$. 
    After a large number of rounds, the distribution on those two levels is driven to, up to exponential precision, $(p^{(0)}_{0},p^{(0)}_{1})\to (p^{(0)}_{0} + p^{(0)}_{1}) \vb{\tilde{\gamma}} :=(p^{(1)}_{0}, p^{(1)}_{1})$, 
    with $\tilde{\vb{\gamma}}$ given by eq. \eqref{eq:extreme_qubit_pops} applied in a slight notation abuse in restriction to the $(0,1)$ levels of the 3-level system under consideration.
    Next one can perform a similar subroutine between levels $0$ and $2$ with initial populations $p^{(1)}_{0}$ and $p^{(1)}_{2} = p^{(0)}_{2}$ driving them to $(p^{(1)}_{0}, p^{(1)}_{2}) \to (p^{(1)}_{0}+ p^{(1)}_{2}) \vb{\tilde{\Gamma}}:= (p^{(2)}_{0}, p^{(2)}_{2})$, where, once again, we introduce slight abuse of notation by considering $\vb{\tilde{\Gamma}}$ in restriction to $(0,2)$ levels of the system. 
    By applying the above subroutines in a loop one obtains a final state in which relations between final populations are given, up to an exponentially decaying error, by
    \begin{equation}
    \begin{aligned}
    \lim_{n\rightarrow\infty} \frac{p_0^{(n)}}{p_1^{(n)}} = \frac{\tilde{\gamma}}{1-\tilde{\gamma}} & = \frac{\left(1-e^{\beta_c \Delta E_{10}}\right) e^{\beta_h\Delta E_{10}}}{1-e^{\beta_h\Delta E_{10}}}, \\
    \lim_{n\rightarrow\infty} \frac{p_0^{(n)}}{p_2^{(n)}} = \frac{\tilde{\Gamma}}{1 - \tilde{\Gamma}} & = \frac{\left(1-e^{\beta_h \Delta E_{20}}\right) e^{\beta_c\Delta E_{20}}}{1-e^{\beta_c\Delta E_{20}}}
    \end{aligned}
    \end{equation}
    where $\Delta E_{ij} = E_i - E_j$. Thus, we created a state with a large proportion of the population ``driven'' to a certain level.
    In this exemplary scheme, we say that there is \textit{a cooling coupling} between levels  $0$ and $1$ and \textit{a warming coupling} between levels $0$ and $2$.

    Using the restricted three-level example presented above we may proceed to a fully 
    general engine with a total dimension equal $d$. 
    Let us consider a $d$-vertex graph, with one-to-one correspondence between vertices and energy levels of the system under consideration. The set of directed edges will correspond to a select subset of allowed two-level operations (couplings) which we use to arrive at a stationary state. Below we present several desired properties for such a graph.
    
    \begin{enumerate}
        \item \textbf{No same-level couplings}: As the aim of the procedure is to drive an initial state as far from thermal equilibrium as possible, it is natural to exclude couplings between populations with the same energy, as such couplings would lead only to periodic oscillation.
        \item 
        \textbf{Acyclicity:} 
        If there were some cycles in the graph, then for any two coupled levels in the cycle, there would be two ways to calculate the ratio of populations between them: direct one and around the cycle. If those two ways coincide, then direct coupling is unnecessary and can be removed. If this is not the case, then the state described by such a graph would not be stationary but constantly driven around inside a set of achievable states.
        \item 
        \textbf{Connectedness:} 
        Consider, to the contrary, a state $\vb{p}$ corresponding to a graph consisting of at least two disconnected components, corresponding to a subset $I$ of indices and its complement $\overline{I}$. It is straightforward to see that
        \begin{align}
            \sum_{i\in I} p_i = \sum_{i\in I} \gamma_i,&&
            \sum_{i\in \overline{I}} p_i = \sum_{i\in \overline{I}} \gamma_i.
        \end{align}
        In other words, populations corresponding to $I$ are in ``collective equilibrium'' with the remaining energy levels. Thus, there exists a pair of levels $i\in I,\, j \in \overline{I}$ for which introduction of coupling drives the state further away from equilibrium.
    \end{enumerate}

    Inferring from the above properties, one can easily notice that the desired graphs are trees.
    To be precise, let $G$ be a graph with vertices representing all energy levels and edges representing all couplings between levels with different energies allowed by engine theory. Then, all desired graphs are tree subgraphs of $G$, containing all vertices -- spanning trees. 

    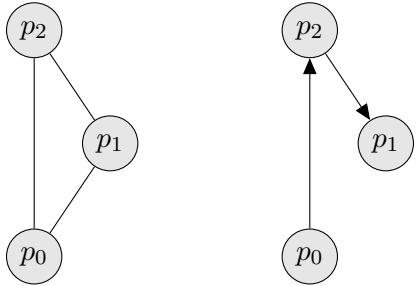
\begin{figure}[h]
    \centering
    \begin{tikzpicture}[main_node/.style={circle,fill=black!10,draw,minimum size=1em,inner sep=3pt]}]
    
    \node[main_node] (1) at (0,0) {$p_{2}$};
    \node[main_node] (2) at (1, -1.5) {$p_{1}$};
    \node[main_node] (3) at (0, -3) {$p_{0}$};
    
    \draw (1) -- (2) -- (3) -- (1);
    \end{tikzpicture}
    ~\hspace{1.5cm}
    \begin{tikzpicture}[main_node/.style={circle,fill=black!10,draw,minimum size=1em,inner sep=3pt]},
    ->, >={Triangle[length=6pt,width=5pt]}]
    \node[main_node] (1) at (0,0) {$p_{2}$};
    \node[main_node] (2) at (1, -1.5) {$p_{1}$};
    \node[main_node] (3) at (0, -3) {$p_{0}$};
        
    \draw[->] (1)--(2);
    \draw[->] (3)--(1);
    \end{tikzpicture}    
    \caption{Graph representing possible couplings between populations of qutrit thermal engine (left) and graphical representation of exemplar qutrit tree state (right). For each used coupling we indicate by arrow the direction in which the population was driven.}
    \label{qtrut_trees}
    \end{figure}

    With that observation, we may present the procedure to construct such highly athermal states, which due to the underlying graphs, we call \textit{tree-states}.
    First, we calculate all spanning trees of the graph of $G$, whose edges represent all non-trivial couplings. Algorithms for the generation of spanning trees are widely known, eg. \cite{SpanningTreesGen}.
    Next, for each spanning tree, we consider all $2^{k}$ combinations of ratios of populations between coupled levels (each coupling is driven towards $\tilde{\vb{\gamma}}$ or $\tilde{\vb{\Gamma}}$ like state as given in \eqref{eq:extreme_qubit_pops} restricted to a selected pair of levels), where $k = d - 1$ is the number of edges (couplings) in the tree and $d$ the dimension of the system. For each combination, we express each population as $p_{0}$ times the product of all ratios along the path from $p_{0}$ to the given population.  Finally, for each state, we determine the value of $p_{0}$ by the normalization condition $\sum_{i} p_{i} = 1$.

    In the following sections, we will use tree states to obtain lower bounds for different thermal engines.
    Finally in Appendix \ref{App:h_tree_&_Kamil} we extend the construction of tree states above elementary operations using the notion of hyper trees.

    \subsection{Joint bath temperature control}

    In the following three models we assume, that the constituent qubits are transferred between the baths synchronously -- meaning that we assume no means of transferring them between baths independently. It may be a result of both systems prepared in a common, extended environment at an equilibrium, which allows for control of ambient temperature significantly faster than the timescale required for our working medium -- the qubits -- to thermalise.

    We begin by considering comparison between LTOCC and ETO, two frameworks for athermality manipulations for which, in discussed examples, free operations consist exclusively of two-level operations. Here, we demonstrate the application of tree-states to ETO and 2-round LTOCC, while showcasing the limitation for 1-round LTOCC. Then we move on to full thermal operations, where we stage analytically derived critical temperatures against the numerical results.
    
        \subsubsection{LTOCC vs ETO}

        In this section we will compare two restricted versions of thermal operations -- local thermal operations and classical operations (LTOCC) and elementary thermal operations (ETO). Note that LTOCC may be an applicable restriction if the experimental setup allows for separate local manipulation of constituent systems, but has no way to address transitions which involve more than one party. However, given that measurement of a subsystem can be performed and the result used to effect conditional operations on another subsystem significantly faster than thermalisation timescale, one can implement LTOCC. On the other hand, ETO have been shown to be realisable by (intensity-dependent) Jaynes-Cummings model, thus rendering them as a physically feasible lab model of thermal operations \cite{jaynes-cummings_original, Shore1993}. 
        
        We begin by studying a 1-round realisation of LTOCC protocol, in which a single measure-and-condition operation is allowed per round. Here, we assume that Alice and Bob take turns in measuring, ie. during a cold stroke Alice measures, sends the information to Bob, and then he performs the operations, while during a hot stroke, it is Bob who measures, and sends the information over to Alice. This setup prevents us from applying the tree-state formalism from assessing, how well an engine based on this protocol would perform, since neither party can freely couple two energy levels in two different temperatures. As such, we resort to a numerical approach by approximating the limit of a number of strokes by considering a finite number of strokes with convergence conditions; details of the approach have been described in Appendix \ref{app:numerical_proc}. The results are presented in Fig.~\ref{fig:LTOCC_v1_1r_vals}. Compared to two separate qubit engines, ground-state population $\mathcal{P}_G$ is qualitatively indistinguishable. Excited state population $\mathcal{P}_E$ exhibits more advantage due to LTOCC, with visible change of convexity of level sets. However, the most drastic qualitative change can be seen of the negativity $\mathcal{N}_{max}$. Here, there are several features to be noted: First, we note reduction in the no-entanglement set delineated by the dashed red line, especially for very low temperatures, $e^{-\beta} \approx 0$ which we discuss later for the LTOCC engine with larger number of rounds. Next, we note appearance of almost-maximal $\mathcal{N}_{max}$ achievable whenever one of the baths is at very low temperature, $e^{-\beta}\approx 0$, which was not present for separate engines. Finally, we note a surprising qualitative symmetry under $e^{-\beta_c} \leftrightarrow 1 - e^{-\beta_h}$ exchange. This, however, is only approximate, which is easily checked by appropriate reflection and overlay of the numeric data.

        \begin{figure}[H]
            \centering
            \includegraphics[width=1\linewidth]{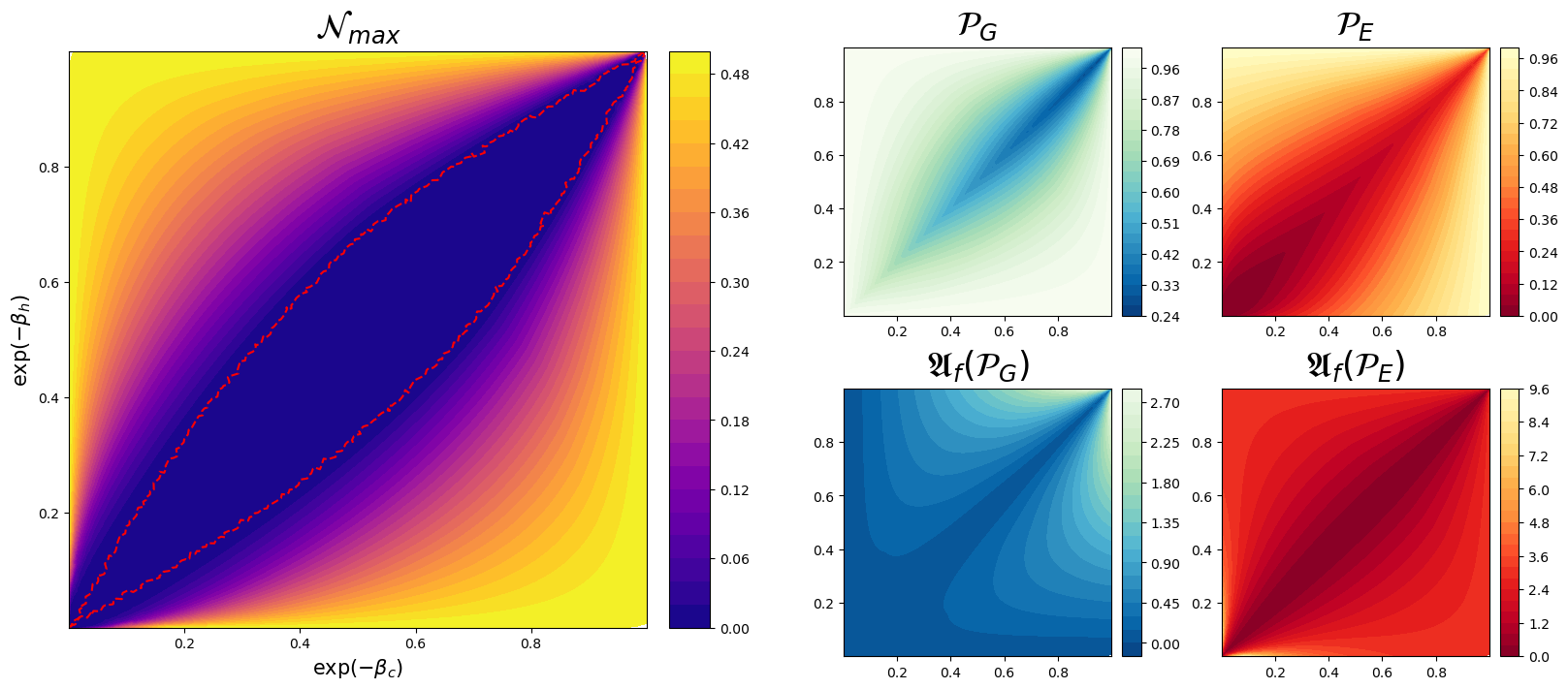}
            \caption{\textbf{One-round LTOCC engine at identical temperatures:} Introduction of a single-round LTOCC protocol with operations at each stroke taken at a constant temperature allows significant advantage with respect to the negativity generation -- we observe possibility of generating maximal $\mathcal{N}_{max}$ also in the limit $\beta_c\rightarrow \infty$. Excited state population $\mathcal{P}_E$ seems to be affected by change in convexity of level set boundaries, and ground state population $\mathcal{P}_G$ does not experience qualitative changes with respect to separate qubit engines. Ground state advantage $\mathfrak{A}_f(\mathcal{P}_G)$ exhibits qualitative similarity to the one calculated for separate engines, while $\mathfrak{A}_f(\mathcal{P}_E)$ grows significantly for the low temperature regime.}
            \label{fig:LTOCC_v1_1r_vals}
        \end{figure}

        Extended to two rounds of memoryless LTOCC per stroke, the situation becomes more convoluted; naive counting leads to the number of potential operations per engine cycle as large as $32^2 = 2^{10}$, out of which not all are extremal operations in the sense of TO. However, by leveraging the fact that we consider engines with an arbitrary number of strokes one may partially simplify the situation. 

        Let us discuss one engine stroke step by step. First Alice performs a measurement and sends its result to Bob, who applies conditional thermal operation. Next Alice can perform thermal post-processing on her subsystem, which is followed by thermal operation conditioned by Bob's measurement. However, those two consecutive thermal operations on Alice's subsystem can be merged into one making her post-processing redundant. Finally, the stroke ends with Bob's thermal post-processing, after which the temperature is changed. However, if we skip the new temperature (set all operations to identity) and come back to the original one we may map Bob's postprocessing into (trivially) conditioned thermal operation on Alice's measurement in the next stroke and only then proceed to the operations in new temperatures. 
        In such a way one may disregard all thermal post-processing in the protocols at the cost of making them at most twice longer.

        By the same token, we may show a very useful statement

        \begin{lem}\label{Lem:2_vs_multi_LTOCC}
            Two-round LTOCC thermal engine is equivalent to any LTOCC thermal engine with a larger number of rounds, in the regime of arbitrarily long protocols.
        \end{lem}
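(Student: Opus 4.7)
The plan is to reduce an $n$-round LTOCC protocol at any single stroke to a sequence of 2-round LTOCC strokes interleaved with trivial strokes at the opposite temperature. Since the lemma concerns the regime of arbitrarily long protocols, expanding one $n$-round stroke into a fixed number of 2-round strokes is admissible. The proof is constructive: I would exhibit a simulation of the $n$-round stroke by a chain of 2-round strokes.

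First, I would invoke the reduction established in the paragraph immediately preceding the lemma: any end-of-stroke thermal post-processing by either party can be absorbed into the opening conditional operations of the following stroke, after inserting at most one identity (trivial) stroke at the complementary temperature. This eliminates dangling post-processing and leaves us with strokes made purely of alternating measure-and-condition rounds. Next I would proceed by induction on $n$, with the base case $n=2$ being immediate. For the inductive step, split the $n$-round stroke at temperature $T$ after the second round into a self-contained 2-round block at $T$ and a residual $(n-2)$-round block that must also execute at $T$. Between the two blocks I would insert a trivial 2-round LTOCC stroke at the complementary temperature $T'$ in which every conditional tensor $T^{(A)}_{ikl}$ and $T^{(B)}_{jkl}$ is the identity; this preserves the joint state exactly and serves merely as a ``bounce'' that allows the classical records produced in the first block to be carried over and used as the conditioning history for the second block. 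Applying the inductive hypothesis to the $(n-2)$-round residual finishes the construction, and the total number of strokes grows by a factor linear in $n$.

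The main obstacle I anticipate is verifying that the Gibbs-preservation constraints \eqref{LTLOCC_req} at each local site are genuinely inherited by every spliced 2-round stroke. For the trivial bounces at $T'$ this is automatic since the identity preserves $\gamma^{(A)}$ and $\gamma^{(B)}$; the delicate case is the 2-round stroke at $T$ that uses conditioning on the prior measurement history. Here I would argue that because the conditioning variables are classical outcomes, independent of the quantum state to which the tensor is applied, the map for each fixed history is still an honest thermal operation satisfying \eqref{LTLOCC_req} in its local temperature, so the original $n$-round local Gibbs constraints decompose cleanly into local constraints for each 2-round piece. Once this bookkeeping is settled, the chain of 2-round strokes reproduces the same overall stochastic matrix on the joint populations as the original $n$-round protocol, and the lemma follows.
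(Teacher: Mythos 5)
Your construction—splitting the $n$ rounds into consecutive 2-round blocks at the correct temperature and inserting identity strokes at the complementary temperature—is essentially the paper's own argument, which notes the trivial converse inclusion and observes that any $n$-round stroke can be executed in at most $n/2$ two-round strokes by skipping the wrong-temperature steps. One small caveat: in the memoryless setting of this lemma each one-round channel conditions only on its own measurement, so no classical records need to be carried across blocks, and your remark that the ``bounce'' stroke transports conditioning history is unnecessary (and, taken literally, would invoke memory, which the paper treats as a separate, strictly stronger resource).
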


        \begin{proof}
        The inclusion in one direction is trivial since two-round LTOCC is a subset of $n$-round LTOCC with $n \geq 2$. The inclusion in the opposite direction follows from the similar argument as above. Each $n$-round LTOCC can be implemented in at most $n/2$ strokes of a 2-round LTOCC engine by performing consecutive rounds if the temperature is correct and skipping the steps with inappropriate temperature.
        \end{proof}

        Leveraging above discussion we can apply the tree-state formalism to provide accessible interior bounds. In this scenario, the accessible couplings follow from the conditioned thermal operation and are presented in the Figure~\ref{Starting_graph} (left). The extremal tree states are presented graphically in Figure~\ref{Extremal_tree_states} (left side) which leads us to the following lower bounds for $p_{11}$ and $p_{00}$ populations for LTOCC engines: 
        \begin{subequations}
        \label{tree_ltocc_heating_cooling}
            \begin{align}
            &\max_{\text{LTOCC tree states}} p_{11} =\frac{\left(e^{\beta_c}-1\right){}^3 e^{\beta_h}}{\left(e^{\beta_c+\beta_h}-1\right) \left(e^{2 \beta_c}+\left(1-e^{\beta_c} \left(e^{\beta_c}+2\right)\right) e^{\beta_h}+e^{2 \left(\beta_c+\beta_h\right)}\right)} \\
            &\max_{\text{LTOCC tree states}} p_{00} = \frac{\left(e^{\beta_c}-1\right){}^3 e^{2 \beta_h}}{\left(e^{\beta_c+\beta_h}-1\right) \left(e^{\beta_h} \left(e^{\beta_c} \left(e^{\beta_c}-2\right)+e^{\beta_h}-1\right)+1\right)}
            \end{align}
        \end{subequations}
        and we let ourselves omit the expression for lower bound for maximum negativity (obtained with extra operation in energy degenerate subspace) and for advantages due to a highly convoluted formula, referring the reader to Fig.~\ref{fig:treestates_vals} for its visual representation instead. 

        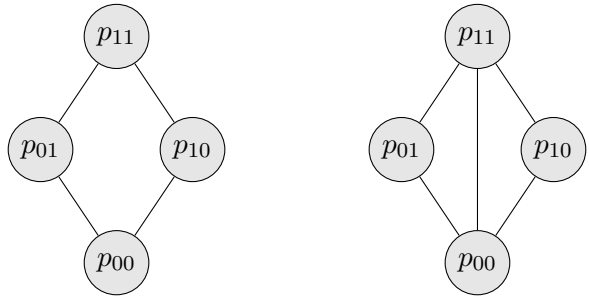
\begin{figure}[h]
        \centering
        \begin{tikzpicture}[main_node/.style={circle,fill=black!10,draw,minimum size=1em,inner sep=3pt]}]
    
        \node[main_node] (1) at (0,0) {$p_{11}$};
        \node[main_node] (2) at (-1, -1.5)  {$p_{01}$};
        \node[main_node] (3) at (1, -1.5) {$p_{10}$};
        \node[main_node] (4) at (0, -3) {$p_{00}$};
    
        \draw (1) -- (2) -- (4) -- (3) -- (1);
        \end{tikzpicture}
        ~\hspace{1.5cm}
        \begin{tikzpicture}[main_node/.style={circle,fill=black!10,draw,minimum size=1em,inner sep=3pt]}]
    
        \node[main_node] (1) at (0,0) {$p_{11}$};
        \node[main_node] (2) at (-1, -1.5)  {$p_{01}$};
        \node[main_node] (3) at (1, -1.5) {$p_{10}$};
        \node[main_node] (4) at (0, -3) {$p_{00}$};
    
        \draw (1) -- (2) -- (4) -- (3) -- (1);
        \draw (1) -- (4) ;
        \end{tikzpicture}
        \caption{\textbf{Tree-state skeleton graphs:} Graph representing possible couplings between populations for 2-qubit LTOCC engine (left) and ETO engine (right). The starting point to construct extremal tree states.
        }
        \label{Starting_graph}
        \end{figure}

        \begin{figure}[h]
\centering
\begin{tikzpicture}[
    scale=.65, 
    main_node/.style={circle,fill=black!10,draw,minimum size=1em,inner sep=3pt},
    ->, >={Triangle[length=6pt,width=5pt]}
  ]

  \newcommand{\drawnodes}{
    \node[main_node] (1) at (0,0) {};
    \node[main_node] (2) at (-1, -1.5) {};
    \node[main_node] (3) at (1, -1.5)  {};
    \node[main_node] (4) at (0, -3) {};
  }

  \begin{scope}[xshift=0cm, yshift=0cm]
    \drawnodes
    \draw[->] (2)--(1);
    \draw[->] (4)--(2);
    \draw[->] (3)--(4);
  \end{scope}

  \begin{scope}[xshift=-3cm, yshift=-4.5cm]
    \drawnodes
    \draw[->] (3)--(1);
    \draw[->] (1)--(2);
    \draw[->] (2)--(4);
  \end{scope}

  \begin{scope}[xshift=3cm, yshift=-4.5cm]
    \drawnodes
    \draw[->] (3)--(1);
    \draw[->] (2)--(1);
    \draw[->, red, dashed] (1)--(4);
  \end{scope}

  \begin{scope}[xshift=-6cm, yshift=-9cm]
    \drawnodes
    \draw[->] (4)--(3);
    \draw[->] (3)--(1);
    \draw[->] (1)--(2);
  \end{scope}

  \begin{scope}[xshift=-2cm, yshift=-9cm]
    \drawnodes
    \draw[->] (1)--(3);
    \draw[->] (3)--(4);
    \draw[->] (4)--(2);
  \end{scope}

  \begin{scope}[xshift=2cm, yshift=-9cm]
    \drawnodes
    \draw[->] (3)--(1);
    \draw[->, red, dashed] (1)--(4);
    \draw[->] (4)--(2);
  \end{scope}

  \begin{scope}[xshift=6cm, yshift=-9cm]
    \drawnodes
    \draw[->] (3)--(4);
    \draw[->, red, dashed] (4)--(1);
    \draw[->] (1)--(2);
  \end{scope}

\end{tikzpicture}

\caption{\textbf{Optimal tree-state graphs:} Graphical representations of tree-states of two-qubit LTOCC engine and ETO engine, which obtains maximal $p_{11}$ population (top row), maximal $p_{00}$ population (middle row), and maximal entanglement (bottom row). For each coupling, we indicate by the arrow in which direction the populations were driven. Note that the graphs containing red dashed arrows cannot be realised by LTOCC, and thus are restricted to ETO protocols. Additionally, due to their essential equivalence, the graphs with exchanged populations $p_{10}\leftrightarrow p_{01}$ are omitted. 
}
\label{Extremal_tree_states}
\end{figure}
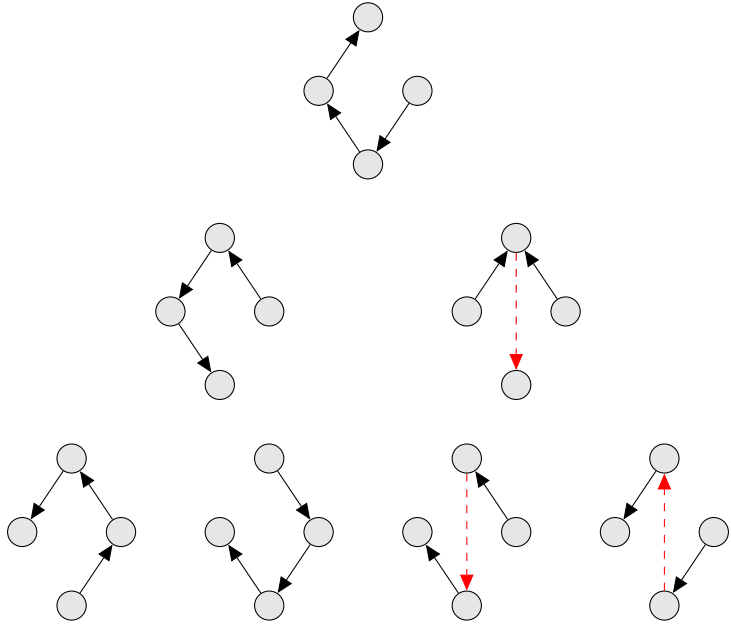

        Finally, assuming complete control over two-level transitions, we may proceed to analyse an engine operating using ETO at each stroke. Note that since conditioning may be implemented by thermal operation \cite{bistron2024local}, then as long as LTOCC leverages only elementary operations, engine protocols based on them form a subset of ETO engine protocols. 
        Thus we have the following inclusion.

        \begin{lem}\label{Lem:El_inclusion}
            The set of protocols possible to implement on the LTOCC engine using elementary conditioned operations forms a subset of ETO engine protocols.
        \end{lem}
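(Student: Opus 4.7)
The plan is to show that every primitive of an LTOCC engine protocol restricted to elementary conditioned operations lifts to a finite composition of joint two-qubit elementary thermal operations acting on the 4-level product system. Since an LTOCC protocol is a composition of such primitives (local elementary thermal operation on Alice, local elementary thermal operation on Bob, and a measurement-broadcast-condition round with elementary conditional operation), closing each primitive under the joint ETO set is enough to conclude the inclusion by concatenation.

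First I would handle \emph{local} elementary thermal operations. An elementary thermal operation $T$ on Alice's subsystem that mixes two of her energy levels $(i, i')$ extends to $T \otimes I_B$ on the joint system. This extension can be rewritten as the (mutually commuting, since they act on disjoint level pairs) composition over Bob's levels $j$ of two-level joint operations on pairs $\{(i,j),(i',j)\}$. Since $T$ preserves the ratio $\gamma_A(i)/\gamma_A(i')$, each joint two-level operation preserves $\gamma_A(i)\gamma_B(j)/\gamma_A(i')\gamma_B(j)$, and is therefore a bona fide joint ETO. The symmetric statement holds verbatim for Bob.

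Next I would address the conditional primitive. If Bob applies an elementary thermal operation $T_k$ on his levels $(l_1,l_2)$ whenever Alice's outcome is $k$, the joint effect is a two-level operation on $\{(k,l_1),(k,l_2)\}$; factoring out the common $\gamma_A(k)$ shows that the Gibbs-preservation condition of $T_k$ with respect to $\gamma_B$ is exactly the Gibbs-preservation condition of the joint map with respect to $\gamma_A\otimes\gamma_B$ restricted to those two levels, so the joint map is an ETO. The measurement-and-broadcast wrapper is then absorbed by invoking the result of \cite{bistron2024local} that classical conditioning on an energy-basis measurement outcome is itself implementable as a thermal operation on the joint system, which at the elementary level decomposes into a sequence of ETOs on pairs of joint energy eigenstates.

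The main conceptual obstacle is precisely the lifting step: one must check that a two-level stochastic map preserving $\gamma_B$ on $\{l_1,l_2\}$ also preserves $\gamma_A\otimes\gamma_B$ on each coset $\{(k,l_1),(k,l_2)\}$. Once this factorisation observation is in place, the rest of the argument is bookkeeping, showing that each LTOCC primitive has been rewritten as a composition of joint two-level Gibbs-preserving stochastic maps, i.e.\ joint ETOs, and therefore the full LTOCC-elementary protocol lies in the ETO engine set.
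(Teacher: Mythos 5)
Your proposal is correct and follows essentially the same route as the paper, which justifies the lemma by noting that conditioning can be implemented as a thermal operation on the joint system (citing \cite{bistron2024local}), equivalently that the ETO coupling graph is a supergraph of the LTOCC one. The only difference is that you verify the key fact directly — the Gibbs ratio $\gamma_A(k)\gamma_B(l_1):\gamma_A(k)\gamma_B(l_2)$ reduces to $\gamma_B(l_1):\gamma_B(l_2)$, so each conditioned (or locally extended) elementary map lifts to a joint two-level Gibbs-preserving map and hence an ETO — rather than appealing to the cited result, which makes your argument self-contained but not substantively different.
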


        The alternative argument for the above result is the observation that the graph of couplings for the ETO engine, presented in Figure~\ref{Starting_graph} (right) for the 2-qubit engine, is a super-graph of couplings for LTOCC. The same holds also for the set of extremal tree states for elementary LTOCC and ETO, presented in Figure~\ref{Extremal_tree_states}: the former is a superset of extremal tree states for the latter. Since there are no new tree states that maximize $p_{11}$ population, the bound \eqref{tree_ltocc_heating_cooling} still holds, however new state which which may maximize $p_{00}$ population appears giving the result
        \begin{equation}
        \label{tree_states_ETO}
        \begin{aligned}
        \max_{\text{ETO tree states}} p_{00}  = \max \Bigg\{& \max_{\text{LTOCC tree states}} p_{00}  ~, \\
        &\frac{\left(e^{\beta_c}-1\right){}^2 \left(e^{\beta_c}+1\right) e^{2 \beta_h}}{e^{\beta_c} \left(e^{\beta_h} \left(e^{\beta_h} \left(e^{\beta_c} \left(e^{\beta_c}-1\right)+2 e^{\beta_h}-2\right)-2\right)+1\right)+1} \Bigg\} 
        \end{aligned}
        \end{equation}
        where for each term there exists a regime in which it dominates.
        Similarly, for ETO engine, we found $4$ different tree states giving maximal entanglement, after combining with extra operation in energy-degenerate subspace, each of them dominating in a different regime, but we let ourselves omit explicitly formulate due to a long and convoluted structure.

        \begin{figure}[H]
            \centering
            \includegraphics[width=1\linewidth]{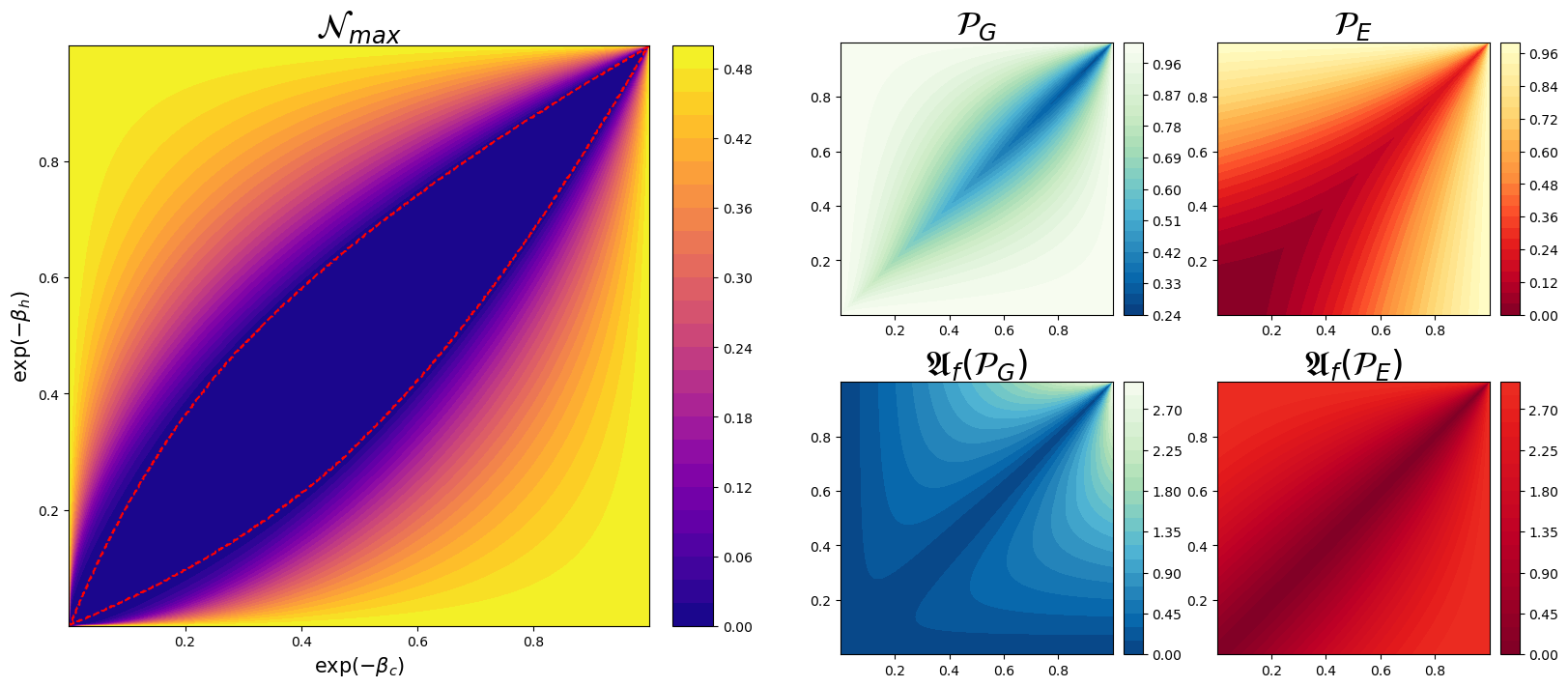}
            \caption{\textbf{Tree-state ETO engine:} By restricting thermal engines' free states to tree-states as described in the \hyperref[sec:interlude]{Interlude}, we obtain analytical interior bounds on achievable states and, as a consequence, lower bounds on $\mathcal{N}_{max},\,\mathcal{P}_G$ and $\mathcal{P}_E$ for full ETO and LTOCC engines. Note that ground state advantage $\mathfrak{A}_f(\mathcal{P}_G)$ is very close to one-round LTOCC engine (Fig.~\ref{fig:LTOCC_v1_1r_vals}), while $\mathfrak{A}_f(\mathcal{P}_E)$ fails to show similar growth in low temperatures, exhibiting similarity to separate qubits (Fig.~\ref{fig:sep_qubits}).}
            \label{fig:treestates_vals}
        \end{figure}

        \begin{figure}[H]
            \centering
            \includegraphics[width=0.76\linewidth]{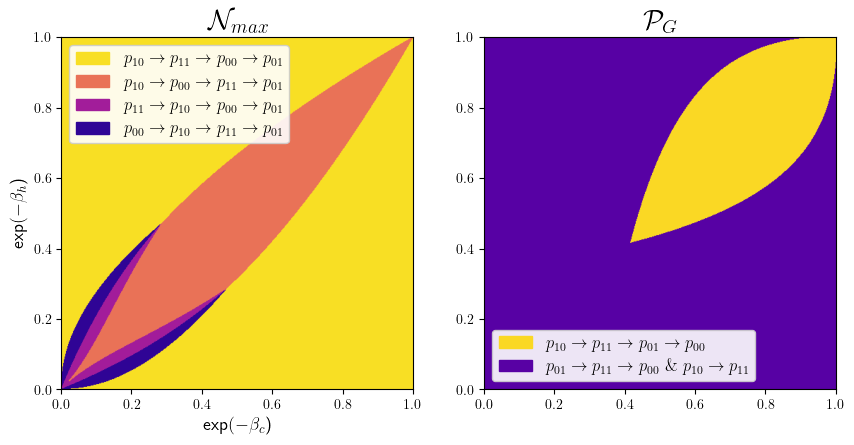}
            \caption{\textbf{Tree-state optimality regions:} The regions in which different tree state protocols for ETO engine are optimal, with the left corresponding to the entanglement generation and the right to cooling. 
            Different colours from navy-blue to yellow indicate consecutive tree states presented in Figure~\ref{Extremal_tree_states} from left to right. The directions of population flow in tree states are also denoted in the legends.}
            \label{fig:treestates_reg}
        \end{figure}

        Finally one can understand the qualitative difference between the abilities of LTOCC (and ETO) engine compared to two qubit engines, by noticing that the optimal protocols of the latter, discussed in Section~\ref{two_qubit_simple}, correspond to simple tree states, see Fig.~\ref{trees_for_2_separate_q}.

        \begin{figure}[h]
            \centering
            
            \begin{tikzpicture}[main_node/.style={circle,fill=black!10,draw,minimum size=1em,inner sep=3pt]},
            ->, >={Triangle[length=6pt,width=5pt]}]
            \node[main_node] (1) at (0,0) {$p_{11}$};
            \node[main_node] (2) at (-1, -1.5)  {$p_{01}$};
            \node[main_node] (3) at (1, -1.5) {$p_{10}$};
            \node[main_node] (4) at (0, -3) {$p_{00}$};
        
            \draw[->] (4)--(2);
            \draw[->] (2)--(1);
            \draw[->] (4)--(3);
            \draw[->,gray,dashed] (3)--(1);
            \end{tikzpicture}
            ~\hspace{0.4cm}
            \begin{tikzpicture}[main_node/.style={circle,fill=black!10,draw,minimum size=1em,inner sep=3pt]},
            ->, >={Triangle[length=6pt,width=5pt]}]
            \node[main_node] (1) at (0,0) {$p_{11}$};
            \node[main_node] (2) at (-1, -1.5)  {$p_{01}$};
            \node[main_node] (3) at (1, -1.5) {$p_{10}$};
            \node[main_node] (4) at (0, -3) {$p_{00}$};
        
            \draw[->] (1)--(2);
            \draw[->] (2)--(4);
            \draw[->] (1)--(3);
            \draw[->,gray,dashed] (3)--(4);
            \end{tikzpicture}
            ~\hspace{0.4cm}
            \begin{tikzpicture}[main_node/.style={circle,fill=black!10,draw,minimum size=1em,inner sep=3pt]},
            ->, >={Triangle[length=6pt,width=5pt]}]
            \node[main_node] (1) at (0,0) {$p_{11}$};
            \node[main_node] (2) at (-1, -1.5)  {$p_{01}$};
            \node[main_node] (3) at (1, -1.5) {$p_{10}$};
            \node[main_node] (4) at (0, -3) {$p_{00}$};
        
            \draw[->] (3)--(1);
            \draw[->] (1)--(2);
            \draw[->] (3)--(4);
            \draw[->,gray,dashed] (4)--(2);
            \end{tikzpicture}
            
            \caption{\textbf{Flow redundancy:} Graphical representations of optimal protocols in 2 qubit engines to obtain maximal $p_{11}$ population, maximal $p_{00}$ population and maximal entanglement (from left to right) as tree-states. For each coupling we indicate by arrow in which direction the populations were driven. Grey arrows correspond to redundant couplings, which can be omitted resulting in tree states. 
            }
            \label{trees_for_2_separate_q}
        \end{figure}
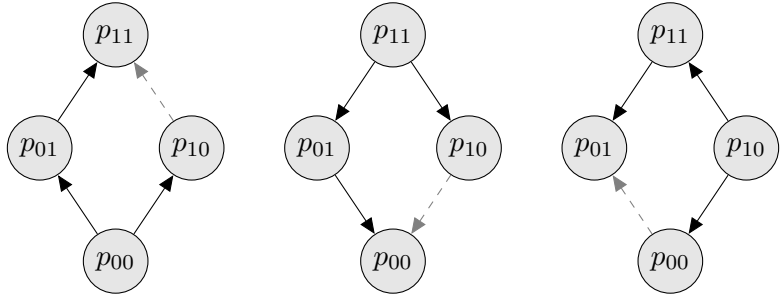

        Notice that for each considered extremal state of 2 qubit engines, there exists a tree state of LTOCC engine with the same couplings, but with direction of population's driving switched on one coupling, due to greater freedom. This switch in each case allows one to reduce undesired population, $p_{10}$ for warming and cooling and $p_{11}$ for entanglement, by a factor $\frac{1-\vb{\tilde{\gamma}}}{\vb{\tilde{\gamma}}}\frac{\vb{\tilde{\Gamma}}}{1 - \vb{\tilde{\Gamma}}}$ (compared to other populations), without affecting other rations between populations. In the limit $e^{-\beta_c}\approx 0$ above factor has a pole of a form $e^{-\beta_h}/(e^{-\beta_c}(1 -e^{-\beta_h})^2)$ resulting in strong suppression of undesired population. This effect is most clearly seen in the case of entanglement generation, since the minimization of $p_{11}$ population results in an almost sharp state in subspace $(00,11)$, which makes the total state entangled.

        We close consideration of LTOCC and ETO engines with equal bath temperatures by demonstrating numerical results.
        Due to their qualitative similarity to Figure~\ref{fig:treestates_vals} we defer the corresponding Figures~\ref{fig:LTOCC_v1_2r_vals} and \ref{fig:ETO_vals}, presenting the capacity of LTOCC and ETO engines, to the Appendix \ref{app:extra_figures}, whereas here we demonstrate differences between the full action of an engine approximated numerically with the analytical bounds achievable values as obtained from the tree-states. The differences are presented qualitatively in Figure~\ref{fig:tree_LTOCC_ETO_comparison}. Note that comparison does not provide surprises -- first, we see that LTOCC trees provide a valid interior bound for both full LTOCC and ETO engines, which can be seen in the left-side plots. Next, we see on the right-side plots that ETO tree states are valid interior bounds for full ETO engine (bottom right), while, somewhat predictably, they provide an advantage over full LTOCC engine. In addition, top-bottom comparison reveals differences between full ETO and LTOCC engines, which would be otherwise hard to spot from the figures presented in Appendix \ref{app:extra_figures}. 
        \begin{figure}[H]
            \centering
            \includegraphics[width = 0.48\linewidth]{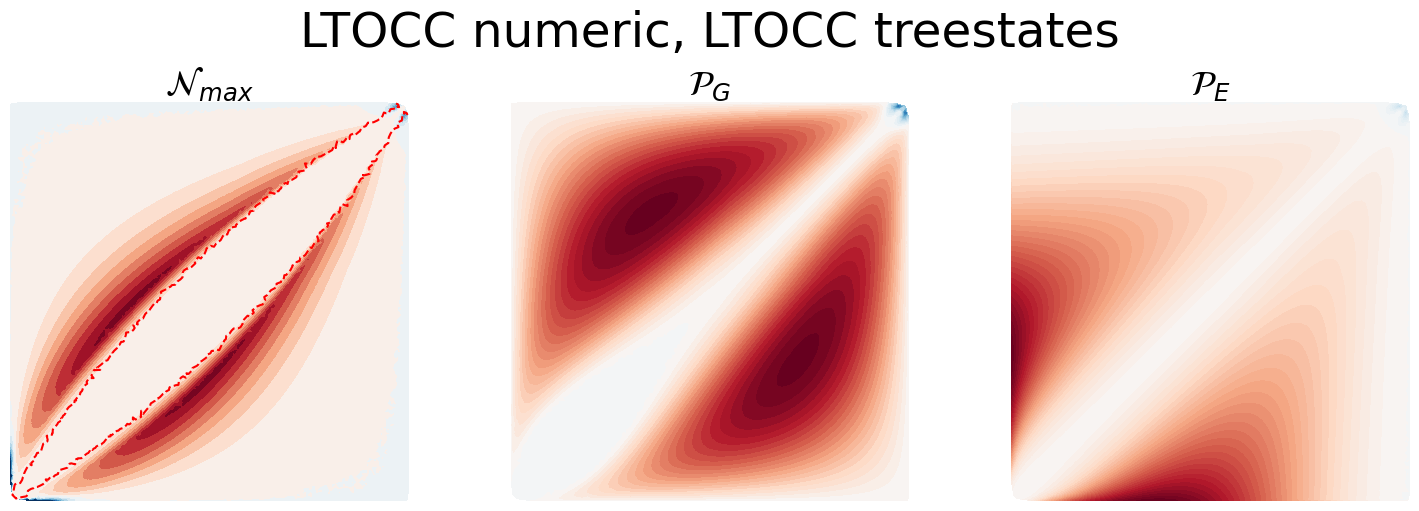}
            \hspace{.02\linewidth}
            \includegraphics[width = 0.48\linewidth]{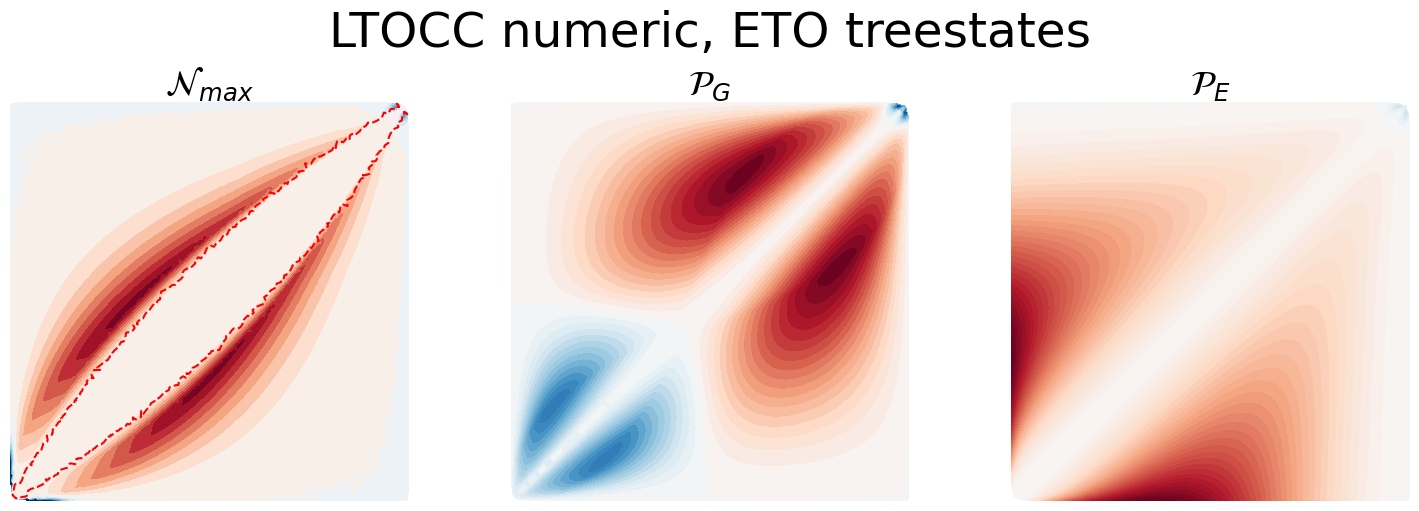}
            
            \vspace{1em}
            
            \includegraphics[width = 0.48\linewidth]{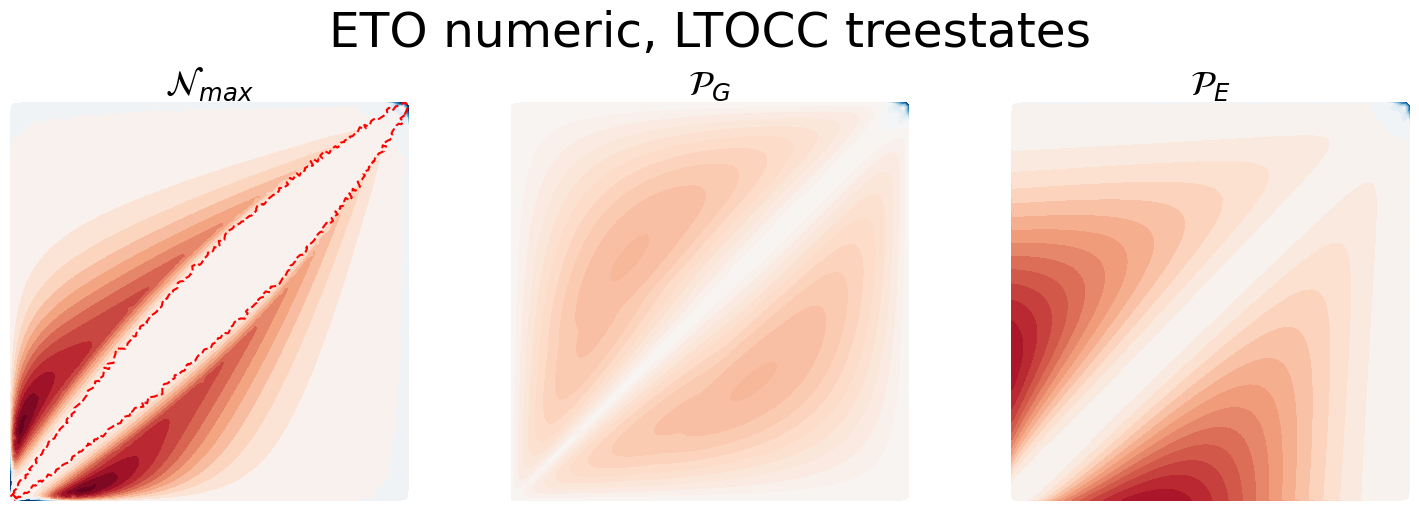}
            \hspace{.02\linewidth}
            \includegraphics[width = 0.48\linewidth]{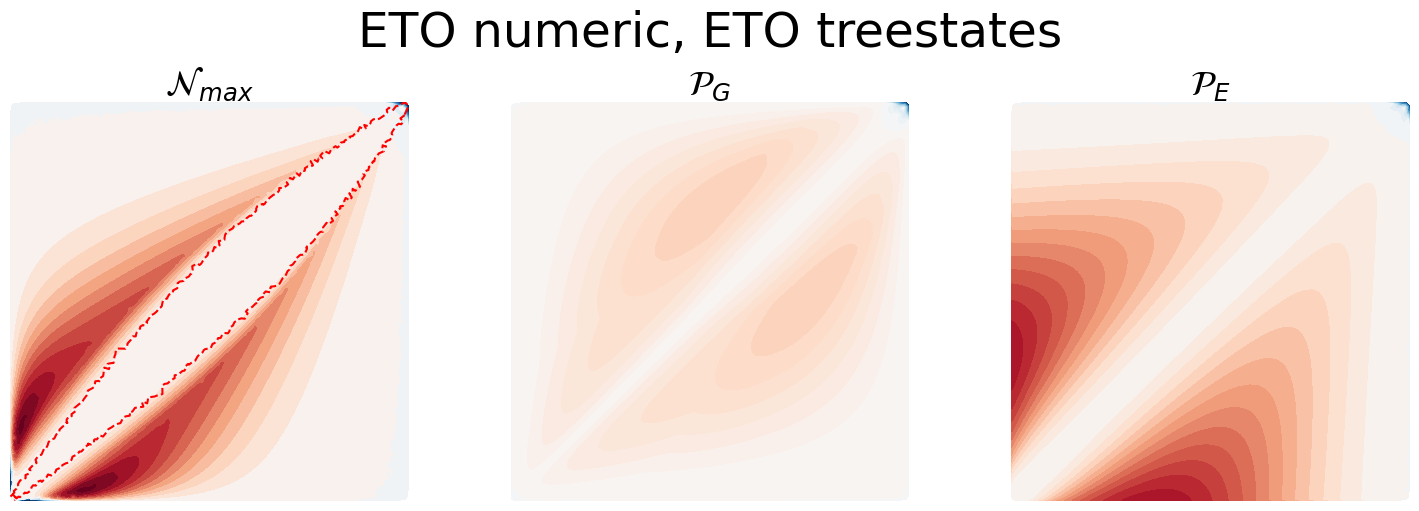}            \caption{
                \textbf{Tree-state vs. full engine comparison:} In the above four sets of qualitative plots we compare tree-state performance with complete engine action for both ETO and LTOCC and, for completeness, we provide a cross-comparison. Red and blue regions correspond to full engine and tree-state advantage, respectively. Note that blue region appears only for the ground state when comparing full LTOCC engine with ETO tree-states; this is to be expected, as ETO$\supset$LTOCC.
            }
            \label{fig:tree_LTOCC_ETO_comparison}
        \end{figure}

        \subsubsection{Two-qubit thermal operations}

        The last scenario in which both qubits interact with the thermal bath in the same temperature at any given time is modelled by full thermal operations at each stroke of the engine. 
        We stress that in this model and any other one discussed above there exists an upper bound for maximally excited population given by $p_{d,d} \leq e^{- \beta_{c}(E_d - E_{d-1}) }$  where $d$ is a dimension of each local subsystem \cite{KAMIL}. In our particular case of two-qubit engine, it is given by $p_{11}\leq e^{-\beta_c}$. 
        
        Although the extension of tree states to non-elementary operation is possible as described in Appendix \ref{App:h_tree_&_Kamil}, in the present case they do not tighten the constraint obtained from tree states using elementary thermal operations for a 2-qubit engine. Moreover, the far from equilibrium states constructed in Proposition 5 form  \cite{KAMIL} are in principle applicable in this case, but they do not lead to any significant improvement in approximation as well, see Appendix \ref{App:h_tree_&_Kamil}. On the other hand, an exhaustive analytical approach is not applicable, as the extreme operations available are highly dependent on the temperature. In particular, one can identify three critical temperatures, corresponding solutions of equations
        \begin{subequations}\label{eq:crit_temps}
            \begin{align}
                2e^{-\beta_1} + e^{-2\beta_1} = 1 &\quad\Longrightarrow\quad e^{-\beta_1} = \sqrt{2}-1 = \sigma^{-1}\\
                2e^{-\beta_2} = 1 &\quad\Longrightarrow\quad e^{-\beta_2} = \frac{1}{2} \\
                e^{-\beta_3} + e^{-2\beta_3} = 1 &\quad\Longrightarrow\quad e^{-\beta_3} = \frac{\sqrt{5}-1}{2} = \varphi^{-1}
            \end{align}
        \end{subequations}
        The appearance of golden ratio $\varphi$ and silver ratio $\sigma$, although fully expected, is a curiosity which deserves a highlight. Additionally, for convenience, we define $\beta_0 = 0$ and $\beta_4 = \infty$. 
        The existence of three critical temperatures entails $4^2$ different temperature regimes (together with boundary regimes, in which at least one of the temperatures is critical), in which different sets of extreme operations are available at each engine cycle, with their number on the order of $(4!)^2 = 24^2$, thus rendering the full analytical analysis infeasible. 
        As such, we have mostly resorted to extended numeric to investigate in detail the power provided by full thermal operations. First, it is instructive to note, that the critical temperatures as outlined in \eqref{eq:crit_temps} indeed influence the performance of the engines with respect to all considered metrics, dividing the temperature-parameter space into 16 full-dimensional regions, 24 critical line segments on which one of the temperatures is exactly critical, and 9 critical, for which both baths are set at their respective critical temperatures. 
        The capacities of the thermal engine in all these regions are presented and described in Figure~\ref{fig:full_to_engine}. Further numerical study of TO engine, with emphasis on critical behaviour, is presented in Appendix \ref{app:edge_detect}.
        For the sake of completeness, we present one more important relation between different types of thermal engines.

        \begin{lem}
            The set of protocols possible to implement on the LTOCC engine forms a subset of TO engine protocols.
        \end{lem}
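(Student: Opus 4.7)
The plan is to decompose an arbitrary LTOCC protocol into its atomic ingredients and show each ingredient is realizable as a TO on the joint system, then invoke closure of TO under composition (which we assumed from the outset for our engine theories). The atomic ingredients are: (i) a local thermal operation on one party, (ii) a measurement in the local energy eigenbasis, (iii) a classical message sent to the other party, and (iv) an operation conditioned on a received message. Additionally, convex combinations via shared randomness must be handled.

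First I would address (i): a local thermal operation on subsystem $A$ uses a bath $\mathcal{B}^{(A)}$ and a unitary commuting with $H^{(A)} + H^{\mathcal{B}^{(A)}}$. Extending this unitary trivially to $V = U^{(A\mathcal{B}^{(A)})} \otimes \mathbbm{1}^{(B)}$ acting on the joint system $AB$ together with bath $\mathcal{B}^{(A)}$, one readily checks $[V,\, H^{(A)} + H^{(B)} + H^{\mathcal{B}^{(A)}}] = 0$, which is precisely the TO condition on the joint system (with a single joint bath at the common temperature, here $\mathcal{B}^{(A)}$; an analogous argument handles local operations on $B$).

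Next I would address (ii)-(iv) jointly: because LTOCC is restricted to energy-incoherent states and to measurements in the energy eigenbasis, the measure-and-condition step of the form $M_{ij,kl} = \Lambda^{(A)}_{ik}\,T^{(B)}_{jkl}$ (cf.\ the one-round expression in the preliminaries) can be simulated by a coherent controlled unitary: replace Alice's measurement by a controlled copy of her energy label onto the bath register, then implement Bob's conditioned thermal operation as a controlled unitary whose control is Alice's energy-eigenbasis projector. Since both the controls and the targets act in the energy eigenbasis of the respective local Hamiltonians, the resulting joint unitary commutes with the total Hamiltonian $H^{(A)} + H^{(B)} + H^{\mathcal{B}}$, and therefore the partial trace over the joint bath defines a valid TO. Multi-round protocols with memory reduce to the same argument by embedding the memory register into an ancilla that is included in the joint bath and never traced out until the end.

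The main obstacle, and the step that demands care, is precisely this simulation of classical communication by a coherent controlled unitary while respecting the energy-conservation constraint. The key observation that makes it go through is that we restrict to energy-incoherent inputs and to operations diagonal-preserving in the energy basis, so the dephasing induced by the coherent ``measurement by control'' is invisible and classical branching is faithfully reproduced. Once each primitive step is cast as a TO, the full LTOCC protocol---including any shared randomness, which by convexity of the TO set (Proposition in Section~\ref{sec:prelims}) remains a TO---lies inside the TO set, establishing the inclusion.
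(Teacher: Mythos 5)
Your core argument is essentially the paper's own: the measure-and-feedforward step is realized as a controlled unitary $V=\sum_k \op{E_k}^{(A)}\otimes U_k^{(B\mathcal{B})}$ with each $U_k$ energy-preserving on Bob and his bath; since the control is diagonal in Alice's energy eigenbasis, $V$ commutes with the total Hamiltonian, so conditioning is itself a thermal operation at the common stroke temperature (the paper disposes of this by citing exactly this fact and Lemma~\ref{Lem:El_inclusion}), and closure of TO under composition plus convexity handles arbitrary protocols and shared randomness. One technical point to repair: the auxiliary step of ``copying Alice's energy label onto the bath register'' is not available within TO, because a bath ancilla must be initialized in its Gibbs state; a register with degenerate Hamiltonian then starts maximally mixed rather than blank, so it cannot faithfully record the outcome. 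This step is, however, redundant for the memoryless protocols that constitute the LTOCC engine, since the control of Bob's conditioned operation can be taken directly on Alice's system, as you also note.

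The genuine problem is your final claim that multi-round protocols \emph{with memory} reduce to the same argument ``by embedding the memory register into an ancilla that is included in the joint bath.'' This step fails for the reason above (the would-be memory must enter thermal, not blank), and it is demonstrably false as a statement about engines: the paper's treatment of LTOCC with memory shows that parallel LTOCC protocols reach every sharp state, including $p_{11}=1$, whereas any TO engine protocol obeys $p_{11}\leq e^{-\beta_c}$. Hence memory-assisted LTOCC is strictly beyond the TO paradigm, and no embedding of the memory into the thermal bath can make it a TO. This overreach does not affect the lemma itself, which concerns the memoryless LTOCC engine, but the sentence should be removed or explicitly restricted, since as written it contradicts a later result of the paper.
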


        \begin{proof}
            The claim follows from exactly the same arguments as presented in Lemma \ref{Lem:El_inclusion}.
        \end{proof}

        \begin{figure}[H]
            \centering
            \includegraphics[width=1\linewidth]{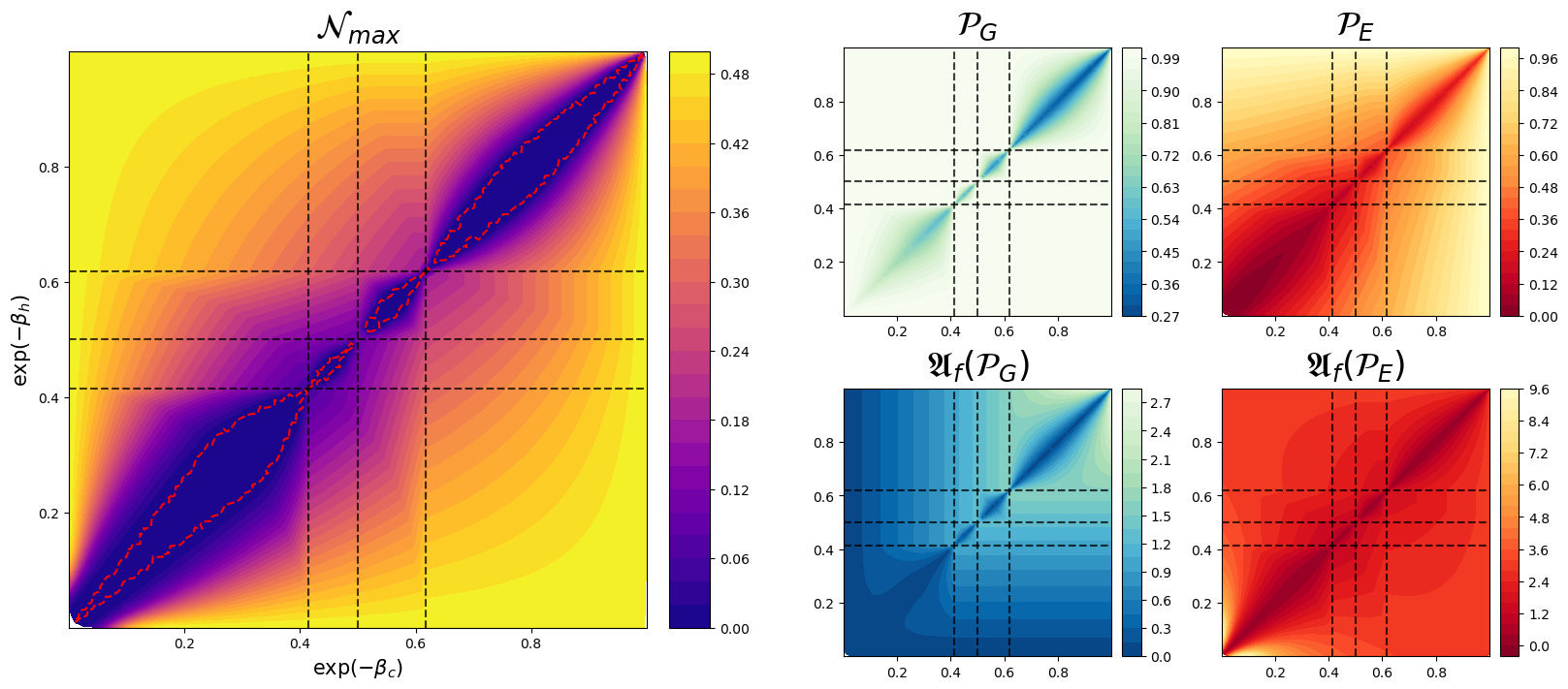}
            \caption{\textbf{Full TO engine:} By allowing arbitrary thermal operations on two qubits at each stroke, we find a significant achievable advantage. First, we note that the no-negativity regions, $\mathcal{N}_{max} = 0$, are restricted to $\beta_{i-1}\geq \beta_c,\beta_h \geq\beta_i$ for $i = 1,2,3,4$. The advantage is even more pronounced for ground-state population $\mathcal{P}_G$, which is not saturated only in the aforementioned regions. Finally, for the excited state population the appearance of critical temperatures is also visible, but no obviously describable properties can be elucidated. The corresponding advantage  $\mathfrak{A}_f(\mathcal{P}_E)$ shows growth over LTOCC model, with small values concentrated only around $\beta_c \approx \beta_h$.}
            \label{fig:full_to_engine}
        \end{figure}

    \subsection{Independent bath temperature control}

    In the following sections we will focus on cases where we allow the qubits to be in contact with baths of different temperatures, while still interacting. Note, that the initial example -- two qubits without interaction -- did not differentiate between this situation and case of an identical bath for two qubits, as in either case qubits could be driven to the same end states.

    We begin by considering LTOCC with unequal temperatures, showcasing that asymmetry arising in one-round-per-stroke protocols vanishes for two-rounds-per-stroke variant, which is shown to be equivalent to the equal temperature case. We next move on SLTO with their elementary variant limited to 2-level operations. There, we show that even an arbitrary small temperature difference leads to entanglement and ground state saturation, while achievable maximally excited state population is shown to be a faithful monotone under SLTO engine operations. 

        \subsubsection{LTOCC}

        For the distant laboratory paradigm one can easily envision a scenario, where two laboratories are synchronising so that they exchange the roles of measured system and conditionally evolved system, while at the same time exchanging temperature, so that the measured constituent is always at the same temperature, no matter if it is on Alice's or Bob's side. This introduces a pronounced asymmetry between both sides, which can be easily seen in the numerics presented in Fig.~\ref{fig:LTOCC_1r_diff_temps}. Unfortunately, due to extensive complications, connected to the number of potential extreme operations and comparison between the respective states, analytical approach becomes prohibitively difficult.  

        Extending the above scenario to 2-round LTOCC with different temperatures, however, turns out to yield the same results as the temperature-symmetric case, in line with Lemma \ref{Lem:2_vs_multi_LTOCC} and the accompanying discussion.
        By the same arguments as therein, thermal post-processing can be absorbed by conditioned thermal operations. Furthermore, in conditioned operations, the temperature of only one subsystem plays a role, so it doesn't matter if the conditioning system is in the same or different temperature. Since one performs conditioned  thermal operation on both subsystems in both temperatures we arrive at the following conclusion

        \begin{thm}
            Two-round LTOCC engine is equivalent to any other LTOCC engine with the same or larger number of rounds in the regime on arbitrary long protocols, irrespective of chosen temperature control.
        \end{thm}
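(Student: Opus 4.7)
The plan is to mirror the argument used to establish Lemma~\ref{Lem:2_vs_multi_LTOCC}, with extra care to handle the added freedom of letting Alice and Bob sit at different bath temperatures during a single stroke. The easy direction is immediate: every two-round LTOCC channel is trivially an $n$-round LTOCC channel for $n \geq 2$, obtained by padding with identity operations. The content of the theorem lies in the reverse inclusion, and in showing that the asymmetric-temperature variant does not enlarge the reachable set beyond the symmetric one covered by Lemma~\ref{Lem:2_vs_multi_LTOCC}.

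First I would reapply the two reductions already used in Lemma~\ref{Lem:2_vs_multi_LTOCC}. Namely: (i) any thermal post-processing performed by one party at the end of a stroke can be re-encoded as a trivially conditioned thermal operation at the start of a subsequent stroke, at the cost of at most doubling the number of strokes; and (ii) within a single stroke, any free thermal operation one party performs between another party's measurement and that party's own conditioned move can be fused with the latter, since compositions of thermal operations on a single subsystem are themselves thermal operations. Iterating these two moves collapses an arbitrary $n$-round stroke into a concatenation of two-round strokes of the canonical form \eqref{one_round}.

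The new ingredient, relative to Lemma~\ref{Lem:2_vs_multi_LTOCC}, is a temperature-insensitivity argument. I would emphasise that the Gibbs-preservation constraints \eqref{LTLOCC_req} on a conditioned thermal operation $T^{(B)}_{jkl}$ involve only the local Gibbs state $\gamma^{(B)}$ of the party executing the operation, and are wholly insensitive to the bath temperature on the conditioning party's side. Consequently, the set of admissible tensors $T^{(B)}_{jkl}$ available in a given stroke is determined solely by Bob's own local temperature at that stroke. Since over an arbitrarily long protocol each subsystem is eventually brought into contact with each of the two baths, the collection of admissible conditioned moves in the asymmetric-temperature variant coincides, stroke-by-stroke, with that of the symmetric variant, and Lemma~\ref{Lem:2_vs_multi_LTOCC} applies verbatim to close the loop.

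The main obstacle I expect is keeping the temperature bookkeeping consistent during the post-processing absorption step: when an operation performed under one local temperature is pushed forward into the next stroke, one must verify that it remains Gibbs-preserving with respect to the temperature assignment of the stroke it is inserted into. This is handled by inserting, if necessary, an auxiliary stroke in which the temperatures are restored to the original configuration and only the absorbed operation acts nontrivially, which amounts to a further constant-factor blow-up in the number of strokes and is therefore innocuous in the arbitrarily-long-protocol limit.
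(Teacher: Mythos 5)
Your proposal is correct and follows essentially the same route as the paper: reduce the general case to the reductions behind Lemma~\ref{Lem:2_vs_multi_LTOCC} (absorbing post-processing into trivially conditioned operations and skipping strokes at the wrong temperature, at only a constant-factor cost in protocol length), and then note that the Gibbs-preservation constraints \eqref{LTLOCC_req} depend only on the acting party's local temperature, so the asymmetric-temperature variant gives no new conditioned operations. The paper's own proof is just a terser statement of exactly these two points.
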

    
        \begin{proof}
        The equivalence between TLOCC engines in different temperature controls was discussed in the paragraph above. Furthermore, the equivalence between LTOCC engines with different numbers of rounds follows from exactly the same arguments as in Lemma \ref{Lem:2_vs_multi_LTOCC}.
        \end{proof}

        \begin{figure}[H]
            \centering
            \includegraphics[width=1\linewidth]{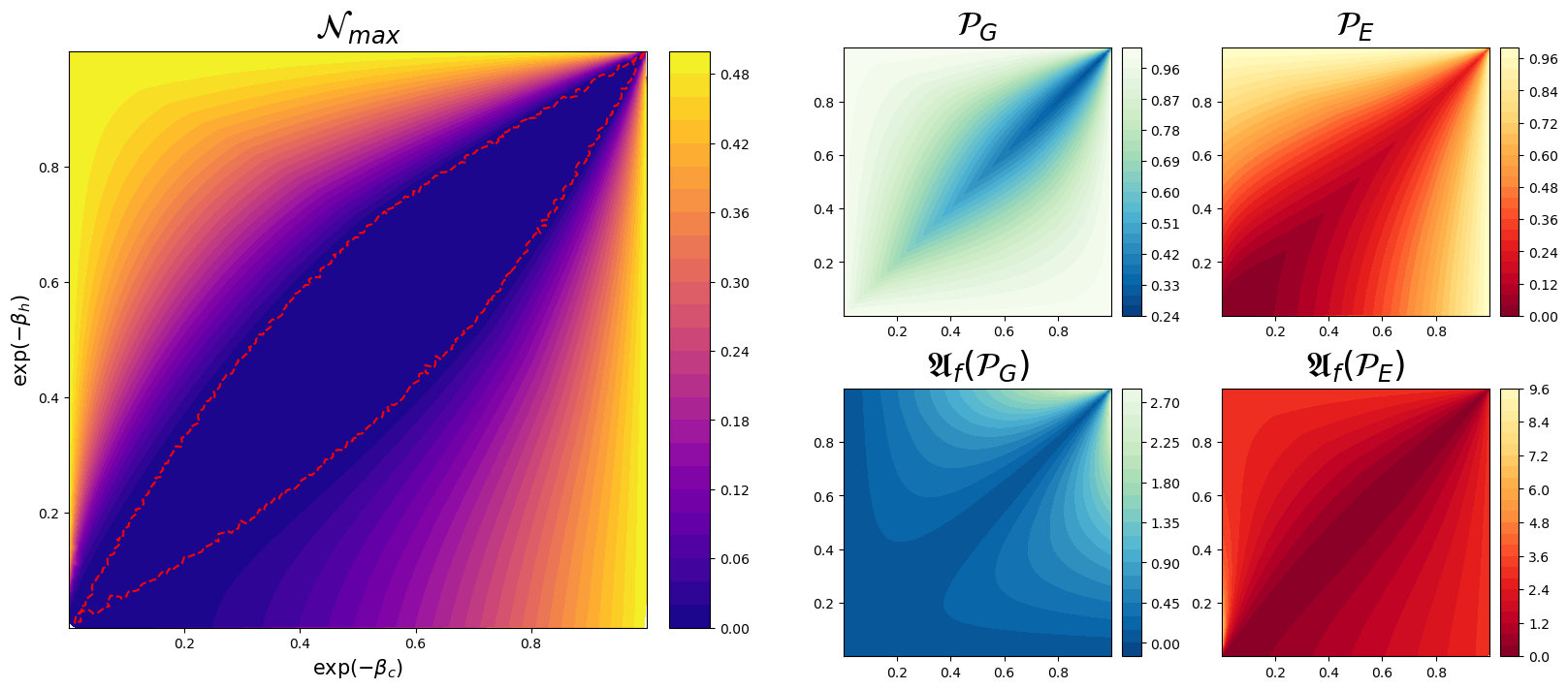}
            \caption{\textbf{One-round LTOCC engine with asymmetric temperatures:} in the case of LTOCC with temperature asymmetry at each stroke one party measures and post-processes at temperature $\beta_c$, while the other processes their system conditionally at $\beta_h$, and the roles are strictly tied to the temperature. This asymmetry is most pronounced for negativity, where for $\beta_h \geq \beta_c$ the level sets resemble the ones for separate qubit engines (Fig.~\ref{fig:sep_qubits}), while $\beta_h \leq \beta_c$ resembles the single-round LTOCC with equal temperatures (Fig.~\ref{fig:LTOCC_v1_1r_vals}). Both advantages $\mathfrak{A}_f(\mathcal{P}_G)$ and  $\mathfrak{A}_f(\mathcal{P}_E)$ exhibit similar asymmetry.
            }
            \label{fig:LTOCC_1r_diff_temps}
        \end{figure}

        \subsubsection{ESLTO and SLTO}\label{subsubsec:STLO}

        Finally, we move to the last pair of thermal engines directly connected to completely thermal processes, in which we consider thermal engines based on Semilocal thermal operations (SLTO) and their elementary variant (ESLTO).  These engines turn out to be the most powerful of all considered. We start by presenting an upper bound for the engine performance and then present an explicit construction saturating this bound. Remarkable to saturate the bound for SLTO engine it is sufficient to use elementary SLTO operations. 
        In the following, we present a general result for a $d^2$-level system consisting of two identical subsystems with the same Hamiltonians and mark population on the level with local energies $E_n$ and $E_m$ in consecutive subsystems as $p_{n,m}$. 

        One stroke of SLTO engine corresponds to connecting the first subsystem with the first bath and the second with the second one. Between rounds, the baths are exchanged with each other, effectively alternating the changing temperatures of the subsystems.

        We start by presenting a modification of Theorem 1 form \cite{KAMIL} adjusted to SLTO engine.

        \begin{lem}\label{SLTO_upper_bound}
        Consider an SLTO engine operating on a pair of $d$-level subsystems with two different temperatures $\beta_h < \beta_c$.
        Then for each state $\vb{q}$ achievable from $\vb{p}$ in a possibly infinite amount of rounds, the following holds:
        \begin{equation}
            q_{d,d} \leq \mathcal{M}_0(\vb{p}) := \max\{p_{d,d}, e^{-\beta_h(E_d - E_{d-1})} \}
        \end{equation}
        Thus $ \mathcal{M}_0(\vb{p}) := \max\{p_{d,d}, e^{-\beta_h(E_d - E_{d-1})} \}$ monotonically decrease under SLTO engine action. \end{lem}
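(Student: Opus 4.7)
The plan is to adapt the single-system argument underlying Theorem 1 of \cite{KAMIL} to the bipartite Gibbs-preserving setting, and then iterate across strokes. By Theorem \ref{SLTO_majo}, one stroke acts as a stochastic matrix $M$ preserving the product $g_{ij}=\gamma^{(A)}_i\gamma^{(B)}_j$ of the two local Gibbs states that label the current bath assignment. Writing out the population at the top level,
\begin{equation*}
    q_{d,d} \;=\; \sum_{i,j} M_{(d,d),(i,j)}\, p_{ij} \;=\; g_{d,d}\sum_{i,j}\frac{M_{(d,d),(i,j)}\,g_{ij}}{g_{d,d}}\,\frac{p_{ij}}{g_{ij}},
\end{equation*}
and invoking the Gibbs-preservation row condition $\sum_{ij}M_{(d,d),(i,j)}g_{ij}=g_{d,d}$, the coefficients $w_{ij}=M_{(d,d),(i,j)}g_{ij}/g_{d,d}$ form a probability distribution. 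Consequently $q_{d,d}/g_{d,d}$ is a convex combination of the ratios $p_{ij}/g_{ij}$, and hence bounded above by $\max_{ij}p_{ij}/g_{ij}$.

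Next I would bound these ratios individually. For $(i,j)\neq(d,d)$, using $p_{ij}\leq 1$ one has $p_{ij}/g_{ij}\leq 1/g_{ij}$. Since the local Gibbs weights are decreasing in energy, the smallest $g_{ij}$ with $(i,j)\neq(d,d)$ is attained at whichever of $(d,d-1)$ or $(d-1,d)$ places the gap $E_d-E_{d-1}$ on the subsystem coupled to the cooler bath. A direct comparison of $g_{d,d}/g_{d-1,d}=e^{-\beta^{(A)}(E_d-E_{d-1})}$ and $g_{d,d}/g_{d,d-1}=e^{-\beta^{(B)}(E_d-E_{d-1})}$ gives
\begin{equation*}
    \frac{g_{d,d}}{\min_{(i,j)\neq(d,d)} g_{ij}} \;=\; e^{-\beta_h(E_d-E_{d-1})} \;\equiv\; c,
\end{equation*}
so $\max_{(i,j)\neq(d,d)}p_{ij}/g_{ij}\leq c/g_{d,d}$. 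Importantly, this constant depends only on $\beta_h$ and is insensitive to which of the two subsystems sits at the hot bath in the current stroke, which is what will allow the bound to survive the bath swap between strokes.

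A short case split then closes the one-stroke bound. If $p_{d,d}\geq c$, the ratio $p_{d,d}/g_{d,d}$ dominates every off-top entry, and the convex-combination bound yields $q_{d,d}\leq p_{d,d}$; if $p_{d,d}<c$, every ratio $p_{ij}/g_{ij}$ (including the one at $(d,d)$) is bounded by $c/g_{d,d}$, so $q_{d,d}\leq c$. In both cases $q_{d,d}\leq\max\{p_{d,d},c\}=\mathcal{M}_0(\vb{p})$. Monotonicity over an arbitrary number of strokes then follows by induction, using the idempotence $\max\{\max\{x,c\},c\}=\max\{x,c\}$ together with the swap invariance of $c$ established above. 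The main obstacle I anticipate is precisely this swap invariance: one must verify that $\beta_h(E_d-E_{d-1})$ is the operative exponent in both stroke configurations, which is where the hypothesis $\beta_h<\beta_c$ plays the decisive role by selecting the correct off-top pair as the minimiser of $g_{ij}$.
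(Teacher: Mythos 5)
Your argument is correct, but it takes a genuinely different route from the paper's. The paper proves the bound with thermomajorization machinery: it replaces $\vb{p}$ by ``almost maximally excited'' states $\bar{\vb{p}}^{(12)},\bar{\vb{p}}^{(21)}$ that thermomajorize it with respect to the relevant product Gibbs state, nests the corresponding future thermal cones, and imports from Appendix C of \cite{KAMIL} the explicit maximum of the top-level population over the cones of such boosted states, chaining two strokes with alternating bath assignments before iterating. You instead note that for any single Gibbs-stochastic stroke the preservation condition on the output row $(d,d)$ makes $q_{d,d}/g_{d,d}$ a convex combination of the ratios $p_{ij}/g_{ij}$ (in effect, monotonicity of the max-relative-entropy-type quantity $\max_{ij} p_{ij}/g_{ij}$ under Gibbs-stochastic maps), bound the off-top ratios crudely via $p_{ij}\leq 1$ and $\min_{(i,j)\neq(d,d)} g_{ij}=g_{d,d}\,e^{\beta_h(E_d-E_{d-1})}$, and observe that the constant $e^{-\beta_h(E_d-E_{d-1})}$ is unchanged when the baths are swapped, so induction over strokes (with linearity and continuity handling convex mixtures and infinite protocols) finishes the proof. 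Your route is shorter and more elementary: it uses only Gibbs preservation of a stroke, not the full thermomajorization characterization, and it exhibits the per-stroke monotonicity of $\mathcal{M}_0$ directly; the paper's cone-based route is heavier but yields sharper per-stroke information (e.g.\ the exact achievable maximum $(1-r_{d,d})e^{-\beta_h(E_d-E_{d-1})}$ from the boosted states), which dovetails with the explicit constructions that later saturate the bound. One small slip in your prose: the minimiser of $g_{ij}$ over $(i,j)\neq(d,d)$ is the configuration placing the gap $E_d-E_{d-1}$ on the subsystem coupled to the \emph{hotter} bath (smaller $\beta$), not the cooler one; your displayed identity $g_{d,d}/\min_{(i,j)\neq(d,d)} g_{ij}=e^{-\beta_h(E_d-E_{d-1})}$ and everything downstream of it remain correct.
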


        \begin{proof}
        Since the proof of the above theorem is rather technical and follows the steps of Theorem 1 form \cite{KAMIL} we present it in the Appendix \ref{App:SLTO}.
        \end{proof}

        Next we present how this upper bound can be saturated using ESLTO engine. 

        \begin{thm}\label{thm:ESLTO_properties}
            Consider an ESLTO engine operating with $d$-level subsystems with two different temperatures. Then, in the limit of an infinite number of rounds, all sharp states can be achieved with arbitrary precision except for the maximally excited level, for which one can obtain population $p_{d,d} = e^{-\beta_h(E_d - E_{d-1})}$. Furthermore, one can achieve a state with that population on the maximally excited level and all the remaining population on any other energy level.
        \end{thm}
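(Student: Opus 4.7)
The plan is to give a constructive proof, exhibiting explicit ESLTO protocols that realise each claim, with the first statement (sharp states on any $(i^*,j^*)\neq(d,d)$) used as a lemma for the second (the saturation on $(d,d)$ together with a freely chosen location for the remaining mass).

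For the first part I would construct a pumping protocol that combines two complementary species of elementary two-level SLTOs. The sharp drainage is provided by energy-degenerate pairs $(i,j),(j,i)$: their Gibbs-preserved ratios at the two bath arrangements (subsystem $A$ at $\beta_c$ versus $A$ at $\beta_h$) are reciprocals of each other, so alternating the maximum two-level swap between the two arrangements drives the population on such a pair arbitrarily close to one side. The remaining transport is supplied by non-degenerate two-level swaps applied at the arrangement biased toward the target, each of which shifts a fixed fraction of the population in the desired direction. I would then arrange these pieces into a cascade in which every non-target level is connected to $(i^*,j^*)$ through a short path whose non-degenerate links are applied at the favourable arrangement and whose degenerate links are sharply flipped by alternation. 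One iteration of the cascade multiplies each non-target population by a factor strictly smaller than one, so after sufficiently many iterations the distribution concentrates on $(i^*,j^*)$ to arbitrary precision.

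For the second part the first part is used to prepare a state sharp on $(d-1,d)$. A single maximum two-level ESLTO swap on the pair $\{(d-1,d),(d,d)\}$, performed at the arrangement with $A$ coupled to $\beta_h$, conserves $p_{d-1,d}+p_{d,d}$ and sends $(1,0)$ to $\bigl(1-e^{-\beta_h(E_d-E_{d-1})},\,e^{-\beta_h(E_d-E_{d-1})}\bigr)$, exactly saturating the upper bound of Lemma~\ref{SLTO_upper_bound}. To relocate the surviving $1-e^{-\beta_h(E_d-E_{d-1})}$ to an arbitrary level $(i^*,j^*)\neq(d,d)$, I would re-apply the construction of the first part, but restricted to elementary SLTOs whose level-pair does not contain $(d,d)$; the subgraph of levels different from $(d,d)$ remains connected under two-level SLTOs, so the drainage still works, while $p_{d,d}$ is preserved throughout.

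The main obstacle lies in the first part, specifically for targets whose ``natural'' direction of thermal flow opposes the required pumping (most notably $(i^*,j^*)=(0,0)$). Individual non-degenerate two-level SLTOs at those targets admit only bounded equilibrium ratios, so a naive tree-state analysis traps the state at a Gibbs-like distribution that is strictly bounded away from sharpness. The technical heart of the proof will be showing that interleaving such partial swaps with sharp degenerate-pair flips produces a genuine per-round contraction of every non-target population, and then quantifying this contraction to establish geometric convergence to the desired sharp state in the limit of infinitely many rounds.
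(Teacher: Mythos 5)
Your two building blocks (the sharp flip on an energy-degenerate pair $(i,j),(j,i)$ obtained by alternating the maximal swap over the two bath arrangements, and ordinary two-level swaps for transport) are the right ones, and your part 2 -- a single swap on $\{(E_{d-1},E_d),(E_d,E_d)\}$ at the arrangement giving ratio $e^{-\beta_h(E_d-E_{d-1})}$, followed by rearranging the leftover mass without ever touching $(E_d,E_d)$ -- is exactly right and saturates Lemma~\ref{SLTO_upper_bound}. The gap is precisely the step you defer in part 1, and the deferred claim is false in the form you state it. A cascade that repeatedly swaps population \emph{into} the target cannot contract the total non-target mass by a uniform factor per iteration: whenever the target is the lower level of a Gibbs-preserving swap, the swap re-excites the fraction $g\,p_\ast$ out of it, where $p_\ast$ is the target population and $g$ a fixed Boltzmann ratio bounded away from zero for fixed temperatures. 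Hence, immediately after the last swap touching the target, the non-target mass is at least $g$ times the target's pre-swap population, so once $p_\ast$ is of order one your iterations cannot push the non-target mass below roughly $g/(1+g)$; the only way to end near a sharp target state is to have the target essentially empty and its swap partner carrying essentially all the mass just before a \emph{single} final swap. (Pumping \emph{upwards} into a target is even worse: the swap transfers only a Boltzmann fraction of the lower population while dumping the entire target population down.) So the ``genuine per-round contraction'' you promise does not exist for targets such as $(0,0)$, and no quantification of it can rescue the direct cascade.

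The paper's proof resolves this by never touching the target until the very last step. All population is first accumulated in the degenerate pair $\{(E_d,E_{d-1}),(E_{d-1},E_d)\}$: one member is cleared to arbitrary precision by the alternating flips (possible only because the two members exchange the roles of ground and excited state when the baths are exchanged); the cleared, almost-top level is then swapped with an arbitrary level $(E_n,E_m)\neq(E_d,E_d)$, capturing a Boltzmann fraction of that level's population; and the captured mass is immediately flipped onto the other member of the pair, where no later pumping swap touches it. Iterating, every other population decays geometrically and, up to exponentially small leaks, everything sits sharply on $(E_d,E_{d-1})$. Since this level lies above every level except $(E_d,E_d)$, a single downward thermal swap deposits all of it on any desired target, with back-flow proportional to the target's pre-swap population, which is negligible at that moment; a final swap with $(E_d,E_d)$ at the hot arrangement then gives $p_{d,d}=e^{-\beta_h(E_d-E_{d-1})}$, and the remainder can be relocated exactly as you describe. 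You should therefore replace your direct cascade by this ``accumulate at the top degenerate pair, then one sharp drop'' protocol; as written, the technical heart you postpone is where the proposal breaks.
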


        \begin{proof}
        We prove the theorem by constructing a protocol to obtain extremal (sharp) states using multiple strokes of ESLTO engine.
        
        Let us first consider a 2-dimensional subspace spanned by energy levels $(E_d,E_{d-1})$ and $(E_{d-1},E_d)$. The elementary thermal operation in this subspace---thermal swap---is given by 
        \begin{equation}
        S_{d,d-1}^{(\beta_h,\beta_c)} =  \left(\begin{matrix}
        1-g & 1\\
        g & 0
        \end{matrix}\right)
        \end{equation}
        where $g = e^{- (\beta_c - \beta_h)(E_{d} - E_{d-1})}$ is the ratio of populations between those two levels for the product of Gibbs states. 
        If one performs this swap and then repeats it after the temperature change, which results in changing the order of levels, the resulting operation is
        \begin{equation*}
        S_{d,d-1}^{(\beta_c,\beta_{h})} S_{d,d-1}^{(\beta_h,\beta_c)} = 
        \left(\begin{matrix}
        0 & g\\
        1 & 1-g
        \end{matrix}\right)
        \left(\begin{matrix}
        1-g & 1\\
        g & 0
        \end{matrix}\right) = 
        \left(\begin{matrix}
        g^2 & 0\\
        1-g^2 & 1
        \end{matrix}\right)~.
        \end{equation*}
        Thus, by repeating those thermal swaps, one can move the population from energy levels $(E_d, E_{d-1})$, $(E_{d-1},E_d)$ to only one of those energy levels with exponential precision $g^{k}$ where $k$ is the number of strokes.
    
        Furthermore, after "clearing" one of the levels up to $\epsilon$, ex. $(E_d, E_{d-1})$ one can couple it with any other level $(E_n,E_m)$, with arbitrary $n$, $m$  except level $(E_d, E_d)$, by the thermal swap. Thus taking $p_{n,m} e^{-\beta_h(E_d - E_n)} e^{-\beta_c(E_{d-1} - E_m)}$ or $p_{n,m} e^{-\beta_c(E_d - E_n)} e^{-\beta_h(E_{d-1} - E_m)}$ population from new level, depending on bath's attachment and adding $\epsilon$ population form $(E_d,E_{d-1})$. 
        In the special case of coupling with a maximally excited state, 
        the thermal swap drops all population from it, exciting only $\epsilon\; e^{-\beta_c (E_d - E_{d-1})}$ of population.
    
        Thus, one can easily notice that after many repetitions of such "pumping" of \newline 
        $\{(E_d,E_{d-1}), (E_{d-1}, E_d)\}$ subspace one can accumulate all populations except for exponentially decaying part $\delta$. Finally, using the compositions on the thermal swap in this subspace one can (up to exponential precision) create a sharp state on the level  $(E_d, E_{d-1})$.
        Technically the operations discussed so far could be considered as a tree state in which levels $(E_{d-1},E_d)$ and $(E_{d},E_{d-1})$ are coupled and any other level is coupled to the one of those with zero population. However, there is a substantial difference between this scenario and the previous discussion of tree states since at the temperature change the states $(E_{d-1},E_d)$ and $(E_{d},E_{d-1})$ exchange role between the ground one and excited one. This unexpected phenomenon enables us to group all populations in a single energy level with almost maximal energy.
    
        This sharp state can then be transformed into any other sharp state (except for the maximally excited one) by using single thermal swap between the level with energies $(E_d, E_{d-1})$ and the level of interest.
        Finally to obtain the maximal population on the maximal energy level, one performs the thermal swap between $(E_d, E_{d-1})$, $(E_d,E_d)$ resulting in the state $(0,0,\cdots,1-e^{-\beta_h(E_d - E_{d-1})}, e^{-\beta_h(E_d - E_{d-1})})$ which ends the proof.
        \end{proof}

    For the sake of completeness, we point out that by the above proof in a 2-qubit engine based on SLTO operations maximal achievable population of ground state is $p_{00} = 1$, of the excited state in $p_{11} = e^{- \beta_h}$ and the maximal entanglement is obtained for the sate with $p_{10} = 1$ or $p_{01} = 1$ and is equal $\mathcal{N}_{max} = 1/2$, which can be seen in the Fig.~\ref{fig:SLTO_engine}. Thus one can easily compute the advantages and efficiency analytically presented also therein.
    From the above proof, we can also obtain the faithful monotone of the SLTO engine theory.

    \begin{cor}
    The monotone  $\mathcal{M}(\vb{p}) := \max\{p_{d,d}- e^{-\beta_h(E_d - E_{d-1})},0 \}$ is a faithful monotone for SLTO engine theory.
    \end{cor}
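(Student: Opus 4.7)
The plan is to verify the two defining properties of a faithful monotone separately: monotonicity under the engine's free operations, and faithfulness, i.e.\ vanishing exactly on the engine's free set $\mathfrak{F}$.

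For \emph{monotonicity}, I would invoke Lemma~\ref{SLTO_upper_bound} directly. Setting $\alpha := e^{-\beta_h(E_d - E_{d-1})}$, that lemma says that for any state $\vb{q}$ reachable from $\vb{p}$ under SLTO engine action, $q_{d,d} \leq \max\{p_{d,d},\alpha\}$. A two-case argument finishes it: if $p_{d,d} \leq \alpha$ then $q_{d,d} \leq \alpha$ and both sides of $\mathcal{M}(\vb{q}) \leq \mathcal{M}(\vb{p})$ are zero, while if $p_{d,d} > \alpha$ then $q_{d,d} \leq p_{d,d}$, so $\mathcal{M}(\vb{q}) \leq p_{d,d} - \alpha = \mathcal{M}(\vb{p})$.

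For \emph{faithfulness}, the forward implication $\vb{p} \in \mathfrak{F} \Rightarrow \mathcal{M}(\vb{p}) = 0$ follows cleanly from Theorem~\ref{thm:ESLTO_properties}. That theorem identifies the extremal free states as (i) the sharp states $\delta_{(n,m)}$ with $(n,m) \neq (d,d)$ (for which $p_{d,d} = 0$) and (ii) mixtures of the form $(1-\alpha)\delta_{(n,m)} + \alpha\,\delta_{(d,d)}$ (for which $p_{d,d} = \alpha$). By convexity of the map $\vb{p}\mapsto p_{d,d}$, every free state inherits $p_{d,d} \leq \alpha$, so $\mathcal{M}$ vanishes on $\mathfrak{F}$.

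The converse, $\mathcal{M}(\vb{p}) = 0 \Rightarrow \vb{p} \in \mathfrak{F}$, is the one step that requires explicit construction rather than a citation, and is where I expect the main (mild) obstacle to lie: one must check that the ``ceiling'' condition $p_{d,d} \leq \alpha$ is not only necessary but also sufficient for membership in $\mathfrak{F}$. I would exhibit a convex decomposition directly. Writing $\beta := p_{d,d}/\alpha \in [0,1]$ and denoting the extremal free states from (ii) above by $f_{(n,m)}$, I would set
\begin{equation}
\vb{p} = \sum_{(n,m)\neq(d,d)} \frac{p_{n,m}}{1-p_{d,d}}\bigl[\beta\, f_{(n,m)} + (1-\beta)\,\delta_{(n,m)}\bigr],
\end{equation}
and verify by direct expansion that $\beta f_{(n,m)} + (1-\beta)\delta_{(n,m)} = (1-p_{d,d})\delta_{(n,m)} + p_{d,d}\,\delta_{(d,d)}$, so the right-hand side reproduces $\vb{p}$ level by level. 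This is well-defined when $p_{d,d} < 1$; the boundary case $p_{d,d} = 1$ is incompatible with $p_{d,d} \leq \alpha < 1$ and need not be considered. Together with the forward direction and monotonicity, this establishes that $\mathcal{M}$ is a faithful monotone for the SLTO engine theory.
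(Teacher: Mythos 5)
Your proposal is correct and follows essentially the same route as the paper: monotonicity comes from Lemma~\ref{SLTO_upper_bound}, saturation of the ceiling comes from the states constructed in Theorem~\ref{thm:ESLTO_properties}, and your explicit convex decomposition is just a concrete rendering of the paper's appeal to ``convex composition of different protocols'' to reach every $\vb{p}$ with $p_{d,d}\leq e^{-\beta_h(E_d-E_{d-1})}$. One small attribution point: the forward direction (every free state satisfies $p_{d,d}\leq e^{-\beta_h(E_d-E_{d-1})}$) is not supplied by Theorem~\ref{thm:ESLTO_properties}, which only proves achievability of those extremal states; it follows, as in the paper, from the monotonicity you already established via Lemma~\ref{SLTO_upper_bound} together with the fact that $\mathcal{M}$ vanishes on the initial Gibbs products.
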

    
    \begin{proof}
    By the above Lemma \ref{SLTO_upper_bound}, we know that $\mathcal{M}(\vb{p})$ decreases monotonically under SLTO action, and one can easily check that it is zero for the product of Gibbs states.
    Furthermore, the proof of the above theorem demonstrated that in the SLTO engine, populations can be assembled on any level, thus, using the convex composition of different protocols, freely manipulated. The only exception is population $p_{d,d}$, which can only increase up to $e^{-\beta_h(E_d - E_{d-1})}$ (corresponding to the value of monotone $\mathcal{M} = 0$) and if this population is any larger in can only decrease by the Lemma \ref{SLTO_upper_bound}.
    Hence, given the state $\vb{p}$, one can create any other energy-incoherent state $\vb{q}$ with the only restriction given by the Lemma \ref{SLTO_upper_bound}, which ends the proof.   
    \end{proof}

    Finally, we can extend the above results for the SLTO engine with a catalyst:

    \begin{cor}
    Consider an SLTO engine with (pair) of catalyst subspaces, then in the limit of an infinite number of rounds, all states are achievable, even with qubit catalyst.
    \end{cor}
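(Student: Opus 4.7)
The plan is to reduce the catalytic statement to a direct application of Theorem \ref{thm:ESLTO_properties} to an enlarged system in which the forbidden maximally excited level is shifted out of the target subspace. Concretely, I would attach one qubit catalyst $C_A$ to party $A$ and one qubit catalyst $C_B$ to party $B$, each with a single energy gap $\Delta > 0$. The enlarged subsystems $A' = S_A \otimes C_A$ and $B' = S_B \otimes C_B$ have local dimension $2d$ and local Hamiltonians $H^{(A)} + H^{C_A}$ and $H^{(B)} + H^{C_B}$, so that an SLTO engine on $A', B'$ is again covered by the definition \eqref{SLTO_def}--\eqref{SLTO_constr2}, with the catalyst simply absorbed into each party's local Hilbert space. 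Consequently, Theorem \ref{thm:ESLTO_properties} can be applied verbatim to the composite system $S \otimes C$.

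Next I would fix the catalyst reference state as the sharp product $|0,0\rangle_{C_A C_B}$ and, for every target sharp state $|n,m\rangle_S$ on the original system, identify the lifted sharp state $|n,m\rangle_S \otimes |0,0\rangle_{C_A C_B}$ on $S\otimes C$. The crucial observation is that the joint maximally excited level of $S\otimes C$ is now $|d-1,d-1\rangle_S \otimes |1,1\rangle_{C_A C_B}$, so \emph{every} sharp state of the form $|n,m\rangle_S \otimes |0,0\rangle_{C_A C_B}$, including the previously forbidden $|d-1,d-1\rangle_S \otimes |0,0\rangle_{C_A C_B}$, lies strictly below the maximum. Theorem \ref{thm:ESLTO_properties} therefore guarantees that each such state is reachable with arbitrary precision by ESLTO strokes, and since the catalyst occupies its reference state both at the beginning and at the end, the protocol is manifestly catalytic.

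Finally, I would extend from sharp targets to an arbitrary energy-incoherent state on $S$ by convexly composing the above protocols; this is admissible because SLTO is closed under convex combination (shared randomness), and by Proposition \ref{prop_combine} the engine's set of free operations inherits this convex structure. Hence the full population simplex on $S$ is reachable while the qubit catalyst pair is returned intact, proving the corollary.

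I do not expect a true obstacle here; the only delicate point is to verify that treating $(C_A, C_B)$ as part of $(S_A, S_B)$ respects both SLTO commutation relations \eqref{SLTO_constr1}--\eqref{SLTO_constr2}, which follows immediately from the additive structure of the joint Hamiltonian, so that Theorem \ref{thm:ESLTO_properties} applies without modification. The conceptual content is entirely in the shift of the excluded level: once the maximum moves off of $|d-1,d-1\rangle_S$, the monotone $\mathcal{M}_0$ of Lemma \ref{SLTO_upper_bound} no longer obstructs the transition on $S$.
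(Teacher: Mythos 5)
Your proposal is correct and takes essentially the same route as the paper: attach a pair of ground-state qubit catalysts (one per party), apply Theorem \ref{thm:ESLTO_properties} to the enlarged system, and note that any target of the form $\vb{q}\otimes\vb{p}_{cat}$ with the catalysts in their ground state never involves the joint maximally excited level, so the obstruction of Lemma \ref{SLTO_upper_bound} is lifted and the catalyst is returned intact. Your additional remarks on Hamiltonian additivity and the convexity step for non-sharp targets merely spell out details the paper leaves implicit.
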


    \begin{proof}
    Consider a pair of qubit catalysts both in the ground state $\vb{p}_{cat} = (1,0,0,0)$. Then, after combining catalysts with the SLTO system in the state $\vb{p}$ into $\vb{p}_{cat}\otimes \vb{p}$ by the above theorem one can transform it into arbitrary state of the form $\vb{p}_{cat}\otimes \vb{q}$, since it does not involve maximally excited state of joint system.
    \end{proof}

        \begin{figure}[H]
            \centering
            \includegraphics[width=1\linewidth]{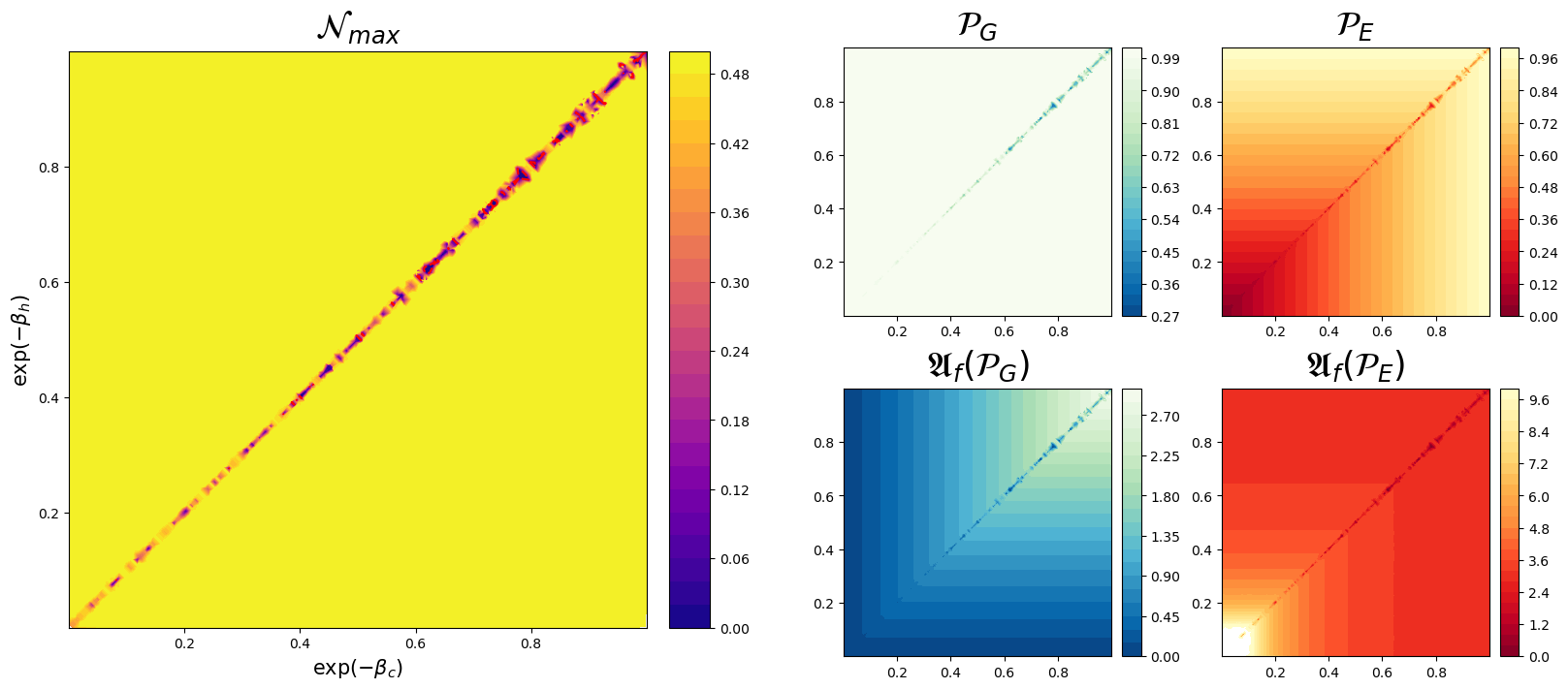}
            \caption{\textbf{SLTO engine:} Engine operating under semilocal thermal operations with each stroke exchanging allows one to achieve maximal negativity $\mathcal{N}_{max}$ and ground state population $\mathcal{P}_G$ for arbitrarily small difference between $\beta_h$ and $\beta_c$, while the excited state is affected only by the higher temperature, making $\mathcal{P}_E = e^{-\beta_h}$. Note that the points near the diagonal are a result of the non-convergence of the numerical method. The excited state advantage $\mathfrak{A}_f(\mathcal{P}_E)$ has been clipped at $10$, as it goes to infinity in the limit of zero temperatures $\beta_h, \beta_c \rightarrow \infty$, since the excited state population in Gibbs states tends to zero faster than the bound from Theorem \ref{thm:ESLTO_properties}. }
            \label{fig:SLTO_engine}
        \end{figure}

    \subsection{LTOCC with memory advantage}

    As a final example we will focus on an engine which is demonstrably beyond TO paradigm and is realised by supplying LTOCC with the memory of prior measurement results and the possibility of using them, referred to as LTOCC+M. To present the impact of memory, it is sufficient to consider parallel LTOCC and its symmetric variant, $\mathcal{S}$LTOCC, as described in \cite{bistron2024local}.     
    We begin by discussing multiple rounds of parallel LTOCC operations at a single temperature followed by consideration of the engine, operating with the joint bath temperature control. 
    In this way we can consider both symmetric and asymmetric parallel LTOCC in a single scenario.

    To describe parallel LTOCC operations it is sufficient to focus on extremal channels.  Each channel consists of two thermal tensors $M_{ij,kl} = T_{ikl}^{(1)} T_{jkl}^{(2)}$, each of them having two thermal operations as its layers. Since there are two extremal thermal operations on qubit, identity and the thermal swap, we have $2^4 = 16$ extremal operations in total. 
    Out of those we focus on four channels:
    \begin{equation}
    \begin{aligned}
    & M^{(00)} =\left(
    \begin{array}{cccc}
     1 & 0 & 0 & 1 \\
     0 & 1-\gamma  & 0 & 0 \\
     0 & 0 & 1-\gamma  & 0 \\
     0 & \gamma  & \gamma  & 0 \\
    \end{array}
    \right)
    & M^{(01)} = \left(
    \begin{array}{cccc}
     1-\gamma  & 0 & 1-\gamma  & 0 \\
     0 & 1 & \gamma  & 0 \\
     \gamma  & 0 & 0 & 1 \\
     0 & 0 & 0 & 0 \\
    \end{array}
    \right) \\
    & M^{(10)} = \left(
    \begin{array}{cccc}
     1-\gamma  & 1-\gamma  & 0 & 0 \\
     \gamma  & 0 & 0 & 1 \\
     0 & \gamma  & 1 & 0 \\
     0 & 0 & 0 & 0 \\
    \end{array}
    \right)
    & M^{(11)} = \left(
    \begin{array}{cccc}
     (1-\gamma )^2 & 1 & 1 & 0 \\
     (1-\gamma ) \gamma  & 0 & 0 & 0 \\
     (1-\gamma ) \gamma  & 0 & 0 & 0 \\
     \gamma ^2 & 0 & 0 & 1 \\
    \end{array}
    \right)
    \end{aligned}
    \label{pLTOCC}
    \end{equation}
    each of which has a different sharp state as a fixed point. 
    Thus one can obtain all two-qubit energy-incoherent states using convex combinations of multiple products of those LOCC operations.
    Finally in thermal engines the multiple repetitions of the same channel form \eqref{pLTOCC} also result in convergence to the appropriate sharp state.
    
    Note that the channel $M^{(11)}$ is also a symmetric LOCC which can be explicitly checked. Furthermore one can use the remaining three extremal $\mathcal{S}$LTOCC, obtained form extremal bi-thermal tensors \cite{bistron2024local},
    \begin{equation*}
    \begin{aligned}
    & M_S^{(1)} = \left(
    \begin{array}{cccc}
     (1-\gamma ) (1-(1-\gamma ) \gamma ) & 1-\gamma  & 1-\gamma  & 0 \\
     (1-\gamma )^2 \gamma  & \gamma  & \gamma  & 0 \\
     \gamma  (1-(1-\gamma ) \gamma ) & 0 & 0 & 1 \\
     (1-\gamma ) \gamma ^2 & 0 & 0 & 0 \\
    \end{array}
    \right) \\
    & M_S^{(2)} = \left(
    \begin{array}{cccc}
     (1-\gamma ) (1-(1-\gamma ) \gamma ) & 1-\gamma  & 1-\gamma  & 0 \\
     \gamma  (1-(1-\gamma ) \gamma ) & 0 & 0 & 1 \\
     (1-\gamma )^2 \gamma  & \gamma  & \gamma  & 0 \\
     (1-\gamma ) \gamma ^2 & 0 & 0 & 0 \\
    \end{array}
    \right) \\
    & M_S^{(3)} = \left(
    \begin{array}{cccc}
     (1-(1-\gamma ) \gamma )^2 & (1-\gamma )^2 & (1-\gamma )^2 & 1 \\
     (1-\gamma ) \gamma  (1-(1-\gamma ) \gamma ) & (1-\gamma ) \gamma  & (1-\gamma ) \gamma  & 0 \\
     (1-\gamma ) \gamma  (1-(1-\gamma ) \gamma ) & (1-\gamma ) \gamma  & (1-\gamma ) \gamma  & 0 \\
     (1-\gamma )^2 \gamma ^2 & \gamma ^2 & \gamma ^2 & 0 \\
    \end{array}
    \right)
    \end{aligned}
    \end{equation*}
    to change its sharp state into any other one.
    Once again we can extend this procedure to the $\mathcal{S}$LTOCC engine operating between two different temperatures by repeatedly using $M^{(11)}$ channel in both temperatures.

    Therefore we conclude that all energy-incoherent states of two-qubit parallel LTOCC and $\mathcal{S}$LTOCC thermal engines are free and there exists an engine protocol that can transform one into any other (and so is the case for parallel LTOCC and $\mathcal{S}$LTOCC operations without restricted number of rounds in constant temperature).

\section{Conclusions}
\label{sec:conc}

In this work, we substantially extended the investigation of resource engines, as presented in \cite{KAMIL}. Having in mind the potential use of resource engines utilising different resource theories, we start with an abstract formalisation of resource engines, which lead to general statements about the set of realisable operations and attainable states, with a necessary and sufficient condition on a resource engine to trivialise. In addition, we introduce quantifiers of the advantage provided by resource engines and identify a quantity which we refer to as \textit{engine efficiency}, which can be used to consider explicitly engine theories as particular cases of resource theories.

Within this framework, we explore thermal engines based on different resource theories of athermality: thermal operations (TO), local thermal operations with classical communication (LTOCC), and semilocal thermal operations (SLTO)—with a focus on their free states and operational limitations. To provide concrete insights, we examined key thermodynamic tasks, such as cooling, heating and state transformations, highlighting bounds on maximal excitation, de-excitation, and entanglement generation.

To provide analytically accessible interior bounds on the sets of free states of TO and LTOCC engines we devised a notion of \textit{tree-states}, which are a special class of engine's free states achievable (up to arbitrary precision) using simple graph-based protocols consisting solely of two-level operations.  Due to the physical realisability of elementary thermal operations via eg. Jaynes-Cummings interaction \cite{jaynes-cummings_original, Shore1993}, tree-states give not only a theoretical approximation of the free states in a given thermal engine theory, but a premise of experimental construction of such highly non-trivial out-of-equilibrium engine states. We also proved equivalence between different classes of LTOCC engines, reducing the need for protocols with more than two rounds per engine stroke. 

Our second achievement is a full description of free states in the engine for which working qubits can be coupled selectively to cold and hot baths, modelled by the SLTO. Additionally, we derived a faithful monotone for this engine theory and discussed advantages arising from the use of catalysts. We present an upper bound on the SLTO engine's free states, analogous to the one for the TO engine from \cite{KAMIL}, followed by a constructive proof that this bound can be saturated using elementary SLTO operations.

In addition to theoretical results, we have conducted extensive numerical simulations with arbitrarily complex protocols for each of the engine models. In particular, a comparison between optimal states achieved numerically and analytical results obtained from tree-states shows that the interior bound obtained from the latter is not tight, thus requiring protocols beyond the graph-based structure of tree-states.

The outlook of our study is twofold. On one hand, the concept of resource engines, first presented in \cite{KAMIL} and formalized in this work is of general purpose, thus may be compelling to consider resource engines of coherence \cite{Streltsov2017}, magic \cite{Veitch2014} or even a fully coherent treatment of resource theory of athermality based on either thermal operations or Gibbs-preserving operations \cite{Korzekwa2017Arrow, Lostaglio2015coherentthermo, Gour2018qmajoriz, deoliveira2022geothermal, czartowski2024catalytic}. Furthermore, as pointed out in \cite{KAMIL}, code-switching \cite{Anderson2014, JochymOConnor2014}, used to enable universal gate set while applying error correction codes, can be reinterpreted through the lens of resource engines, suggesting potential optimizations for universal fault-tolerant quantum computation.

On the other hand, the presented study of thermal engines is far from complete.
Although we conducted an intensive investigation of the set of free states the structure of free operations and their associated monotones remains incomplete-—a challenge even for constant-temperature regimes \cite{Mazurek_2018}. Future work could leverage asymptotic limits to gain deeper insights into engine operations, paving the way for both theoretical and experimental breakthroughs.

\section*{Acknowledgments}

The authors thank A. de Oliveira Jr. and Jeongrak Son for useful discussions and comments concerning the paper.
RB acknowledges support by the National Science Centre, Poland, under the contract number 2023/50/E/ST2/00472. JCz is supported by the start-up grant of the Nanyang Assistant Professorship at the Nanyang Technological University in Singapore, awarded to Nelly Ng.

\appendix

\section{Numerical approach to probing engine space}\label{app:numerical_proc}

Due to the exceeding number of extreme operations even for a single pair of hot-cold strokes, the exact analytical treatment for most thermal engine models we have presented turns out to be prohibitively complicated. Therefore we have resorted to a numerical approach which we describe above.

Consider an engine operating under a selected pair of free operations $\mathcal{O}_1, \mathcal{O}_2 \subset \mathfrak{O}$ and free states $\mathcal{F}_1, \mathcal{F}_2 \subset \mathfrak{F}$ constituting convex resource theory. To numerically obtain a set of free states we applied the following steps:
\begin{enumerate}
    \item[0.] $S_0 = \qty{s_0}$ The starting state is taken as $s_0 \in \mathcal{F}_1$. 
    \item $S_{i+1} = \mathcal{O}_{(i+1\,\text{mod}\,2)+1} S_i = \qty{O s: O \in\mathcal{O}_{(i+1\,\text{mod}\,2)+1},\,s\in S_i}$: We use the operations from the free set for the second resource theory and apply it to all the states in the current set of achievable states $S$.
    \item $S_{i+1}\mapsto \operatorname{ext}(\operatorname{conv}(S_{i+1}))$: Due to the convexity of the theory, we know that $\mathfrak{F}$ is a convex set, and thus it is enough to keep the extremal points of convex hull of $S$ in each iteration.
    \item \textbf{Optional:} For any pair $\qty{s,s'}\subset S_{i+1}$ we remove $s'$ whenever $\qty[\frac{s}{\delta}] - \qty[\frac{s}{\delta}] = 0$ for $\delta\ll 1$ -- this step is employed to reduce computational load, eliminating states that are identical up to a selected rounding error $\delta$, while keeping the unrounded states as elements of $S$.
    \item Repeat steps 1-3 until either of the \textbf{break} conditions is met:
    \begin{enumerate}
        \item $\frac{\operatorname{vol}(S_{i+1}) - \operatorname{vol}(S_i)}{\operatorname{vol}(S_i)} \leq \epsilon$: Relative improvement in volume of achievable states reaches a threshold, signifying that $S_{i+1}\subset\mathfrak{F}$ provides a sufficiently good interior approximation of $\mathfrak{F}$.
        \item $i\geq N$: To guarantee end of the algorithm, an upper bound on the number of steps is implemented.
    \end{enumerate}
\end{enumerate}
For all numerical experiments, the parameters have been set to $N = 200,\,\epsilon = 10^{-7},\,\delta = 10^{-5}$ and points have been selected randomly so that the distribution of $e^{-\beta_c}$ is uniform on $[0,1]$ interval.  Additionally, for full TO and SLTO, step 1 could be simplified by generating $S_{i+1}$ by use of thermomajorization, which reduces the number of points generated while ensuring that the extreme points are included \cite{Horodecki2013}.

\section{Extension of tree states and other bounds of thermal engines performance}\label{App:h_tree_&_Kamil}

    We start this section by presenting lower bounds for achievable populations and entropy in the TO engine based on work \cite{KAMIL}.
    To obtain the lower bounds on the maximal population on $00$ and $11$ levels, we can use directly Proposition 4 of \cite{KAMIL}, which gives
    \begin{small}
    \begin{equation*}
    \begin{aligned} 
    \max p_{00} > & \frac{e^{-\text{$\Delta $E} \beta_c} \left(-e^{\text{$\Delta $E} \beta_c}+e^{\text{$\Delta $E} \left(2 \beta_c+\beta_h\right)}+e^{\text{$\Delta $E} \beta_h}-1\right){}^2}{\left(-2 e^{\text{$\Delta $E} \beta_c}+e^{\text{$\Delta $E} \left(\beta_c+\beta_h\right)}+e^{\text{$\Delta $E} \left(2 \beta_c+\beta_h\right)}+e^{\text{$\Delta $E} \beta_h}-1\right){}^2} \times\\
    &\times \frac{e^{\text{$\Delta $E} \beta_c} \left(\text{csch}\left(\text{$\Delta $E} \beta_h\right) \sinh \left(\text{$\Delta $E} \left(\beta_c+\beta_h\right)\right)+2\right)-1}{\text{csch}\left(\text{$\Delta $E} \beta_h\right) \sinh \left(\text{$\Delta $E} \left(\beta_c+\beta_h\right)\right)+2}\\
    \max p_{11} > & \frac{e^{-\Delta E \beta_h} \sinh \left(\Delta E \beta_c\right)}{\sinh \left(\Delta E \left(\beta_c+\beta_h\right)\right)+2 \sinh \left(\Delta E \beta_h\right)} \\
    \end{aligned}
    \end{equation*}
    \end{small}
    
    To estimate maximal entanglement one has to compute it for all "extremal" points $\vb{f^b}$ from Proposition 5 of \cite{KAMIL}. The construction of these states is based on first accumulating or decreasing the population of the highest level, then accumulating or decreasing the remaining population at the next to the highest level without involving the higher one, and so on until all populations are fixed. The bitstrings in the superscript denote whether the population was accumulated or decreased in consecutive states.
    We found that the state $\vb{f}^{[0,1,0]}$ yields maximal entanglement of this class of states, however, we omit the explicit formula due to its length. 

    We found out that, depending on different values of temperatures, the use of tree-states or the $\vb{f^b}$ states gives stronger bounds. Nevertheless, we consider tree-states beneficial since they provide a clear prescription on how to construct the state using only \textit{elementary} operations.

    Finally, we present the generalization of tree-states, by combining them with the highly non-equilibrium states from Proposition 5. of \cite{KAMIL}.
    To do so, we introduce hyper-edges on the subset of vertices, corresponding to energy levels which can freely interact by allowed thermal operation. We treat them as new couplings and associate them with (unnormalized) $\vb{f^b}$ states in the same way as we associated pairs of vertices with the states $\vb{\tilde{\gamma}}$ or $\vb{\tilde{\Gamma}}$.
    By the same tokens as in the construction of tree-states, it is sufficient to consider only spanning trees of such hypergraphs, which ensures that the corresponding hypertree-states are well-defined.
    
    To provide an example of a hypertree state, consider state $\vb{p}$ for which the ratio of populations $(p_{00}, p_{01}, p_{11})$ are the same as in the state $\vb{f}^{[1,0]}$ whereas the ratio between populations $(p_{00}, p_{10})$ are the same as in the state $\vb{\tilde{\gamma}}$. By construction, such a state would have populations strongly shifted from $(E_1,E_0)$ level and toward $(E_0,E_1)$ level compared to the Gibbs state.
    However, for 2-qubit engines in no scenario we found hypertree states beneficial in comparison to $\vb{f}^b$ states or tree states.

\section{Technical results for SLTO}\label{App:SLTO}

    In this appendix, we provide technical proof of Lemma \ref{SLTO_upper_bound}, which content we invoke here for completeness. Throughout the proof we leverage the proper, that for any two energy incoherent states $\vb{p}$, $\vb{q}$ there exists SLTO operation $M$ mapping one into another $\vb{q} = M \vb{p}$ if and only if one state termomajorize the other with respect to the product of Gibbs states $\vb{p} \succ_{\gamma_1 \otimes \gamma_2} \vb{q}$ \cite{Bera2021}.

    \vspace{0.2 cm}
    \textbf{Lemma 9.}\textit{
        Consider an SLTO engine operating on a pair of $d$-level subsystems with two different temperatures $\beta_h < \beta_c$.
        Then for each state $\vb{q}$ achievable form $\vb{p}$ in a possibly infinite amount of rounds, the following holds:
        \begin{equation}
            q_{d,d} \leq \mathcal{M}_0(\vb{p}) := \max\{p_{d,d}, e^{-\beta_h(E_d - E_{d-1})} \}
        \end{equation}
        Thus $ \mathcal{M}_0(\vb{p}) := \max\{p_{d,d}, e^{-\beta_h(E_d - E_{d-1})} \}$ monotonically decrease under SLTO engine action.}

    \begin{proof}
    Let us denote future thermal cones of the set $X$ under SLTO operations with temperatures corresponding to the product of Gibbs states $\gamma_1$, $\gamma_2$ as $\mathcal{T}_{\gamma_{12}}(X)$ and $\mathcal{T}_{\gamma_{21}}(X)$  ie.
    \begin{equation*}
    \begin{aligned}
    & \mathcal{T}_{\gamma_{12}}(X) = \text{conv}[\{\vb{q}|\;\exists\, \vb{p} \in X: \vb{p} \succ_{\gamma_1 \otimes \gamma_2} \vb{q}\}] \\
    & \mathcal{T}_{\gamma_{21}}(X) = \text{conv}[\{\vb{q}|\;\exists\, \vb{p} \in X: \vb{p} \succ_{\gamma_2 \otimes \gamma_1} \vb{q}\}]~. \\
    \end{aligned}
    \end{equation*}
    Next, for any energy-incoherent state $\vb{p}$, let us define its "almost maximally excited" versions in the following way
    \begin{equation}
    \bar{\vb{p}}^{(12)} = (0,0,\cdots,1-p_{d,d},0,p_{d,d} ) ~~,~~~ \bar{\vb{p}}^{(21)} = (0,0,\cdots,0,1-p_{d,d},p_{d,d} )
    \end{equation}
    with population $p_{d,d}$ occupying level $(E_d,E_d)$ and the population $(1-p_{d,d})$ occupying levels $(E_{d-1},E_d)$ and $(E_d,E_{d-1})$ respectively. It is straightforward to see that $\bar{\vb{p}}^{(12)} \succ_{\gamma_1\otimes\gamma_2} \vb{p}$ and $\bar{\vb{p}}^{(21)} \succ_{\gamma_2\otimes\gamma_1} \vb{p}$, thus
    \begin{equation*}
    \vb{p} \succ_{\gamma_1\otimes\gamma_2} \vb{q} \implies \bar{\vb{p}}^{(12)} \succ_{\gamma_1\otimes\gamma_2} \vb{q} ~~\text{and} ~~ \vb{p} \succ_{\gamma_2\otimes\gamma_1} \vb{q} \implies \bar{\vb{p}}^{(21)} \succ_{\gamma_2\otimes\gamma_1} \vb{q}~,
    \end{equation*}
    from which we can deduce the following inclusions between future cones
    \begin{equation*}
    \mathcal{T}_{\gamma_{12}}(\mathcal{T}_{\gamma_{21}}(\vb{p})) \subset \mathcal{T}_{\gamma_{12}}(\mathcal{T}_{\gamma_{21}}(\bar{\vb{p}}^{(21)})) \subset \mathcal{T}_{\gamma_{12}}(\{\bar{\vb{q}}^{(12)}| \vb{q} \in \mathcal{T}_{\gamma_{21}}(\bar{\vb{p}}^{(21)}) \})~,
    \end{equation*}
    where we abused the notation by substituting single-element set ex. $\{\vb{p}\}$ with its element $\vb{p}$. 
    Moreover, by step-by-step repeating the argumentation presented in Appendix C of \cite{KAMIL} one can show that
    \begin{equation}
    \begin{aligned}
    & \max_{\vb{q}} \{q_{d,d}| \vb{q} \in \mathcal{T}_{\gamma_{xy}}(\bar{\vb{r}}^{(xy)})\} = \left\{
    \begin{matrix}
    r_{d,d} & \text{ for } r_{dd} \geq  \frac{1}{1 + e^{\beta_h(E_d - E_{d-1}) }}\\ 
    (1 - r_{d,d})e^{\beta_h(E_{d-1} - E_d) } &\text{ for } r_{dd} \leq  \frac{1}{1 + e^{\beta_h(E_d - E_{d-1}) }}
    \end{matrix}
    \right. \\
    & \min_{\vb{q}} \{q_{d,d}| \vb{q} \in \mathcal{T}_{\gamma_{xy}}(\bar{\vb{r}}^{(xy)})\} = 0~,\\
    \end{aligned}
    \end{equation}
    where the indices $xy$ denotes $12$ or $21$.
    Using the following properties, we can upper bound the final population $q_{d,d}$ of the highest energy level after two interactions with thermal baths as
    \begin{equation}
    \begin{aligned}
    & \max_{\vb{q}}~~\{q_{d,d}|\mathcal{T}_{\gamma_{12}}(\mathcal{T}_{\gamma_{21}}(\vb{p}))\} \leq \\
    &\leq \max_{\vb{q}}~~ \left\{q_{d,d}| \vb{q} \in \mathcal{T}_{\gamma_{12}}(\{\bar{\vb{r}}^{(12)}| \vb{r} \in \mathcal{T}_{\gamma_{21}}(\bar{\vb{p}}^{(21)}) \}) \right\} = \\
    &  = \max_{\vb{q}} \max_{\vb{r} \in \mathcal{T}_{\gamma_{21}}(\bar{\vb{p}}^{(21)}) } \{q_{d,d}| \vb{q} \in \mathcal{T}_{\gamma_{12}}(\bar{\vb{r}}^{(12)}) \} = \\
    & = \max_{\vb{r} \in \mathcal{T}_{\gamma_{21}}(\bar{\vb{p}}^{(21)}) } \max_{\vb{q}} \{q_{d,d}| \vb{q} \in \mathcal{T}_{\gamma_{12}}(\bar{\vb{r}}^{(12)}) \} \leq \\
    & \leq \max_{\vb{r} \in \mathcal{T}_{\gamma_{21}}(\bar{\vb{p}}^{(21)}) } \max\left\{r_{d,d}, \; (1 - r_{d,d}) (1 - r_{d,d})e^{\beta_h(E_{d-1} - E_d) }   \right\} \leq  \\
    & \leq \max\left\{  \max_{\vb{r} \in \mathcal{T}_{\gamma_{21}}(\bar{\vb{p}}^{(21)}) } r_{d,d}, \; (1 - \min_{\vb{r} \in \mathcal{T}_{\gamma_{21}}(\bar{\vb{p}}^{(21)}) } r_{d,d})e^{\beta_h(E_{d-1} - E_d) }   \right\} \leq \\
    &\leq  \max\left\{p_{d,d}, (1 - p_{d,d})e^{\beta_h(E_{d-1} - E_d)}, e^{\beta_h(E_{d-1} - E_d)}  \right\} = \max \left\{p_{d,d},  e^{\beta_h(E_{d-1} - E_d)}  \right\}
    \end{aligned}
    \end{equation}

    Since after one sequential interaction the final population of the highest energy level is bounded by either the initial population of this level or by a constant, we conclude that the same bound holds after arbitrarily many repetitions
    (strokes).
    \end{proof}

    \section{Additional figures for performance of thermal engines} \label{app:extra_figures}

    The following three figures follow the overall scheme for all the previous plots of negativity, ground state population and excited state populations, with data obtained from numerical procedure as described in Appendix \ref{app:numerical_proc} for 2-round LTOCC with equal temperatures at each stroke, asymmetric temperatures at each stroke and for elementary thermal operations. We include them for completeness, but due to their qualitative similarity they have been deferred. For better understanding of differences we refer the reader to Fig.~\ref{fig:tree_LTOCC_ETO_comparison}.

        \begin{figure}[H]
            \centering
            \includegraphics[width=1\linewidth]{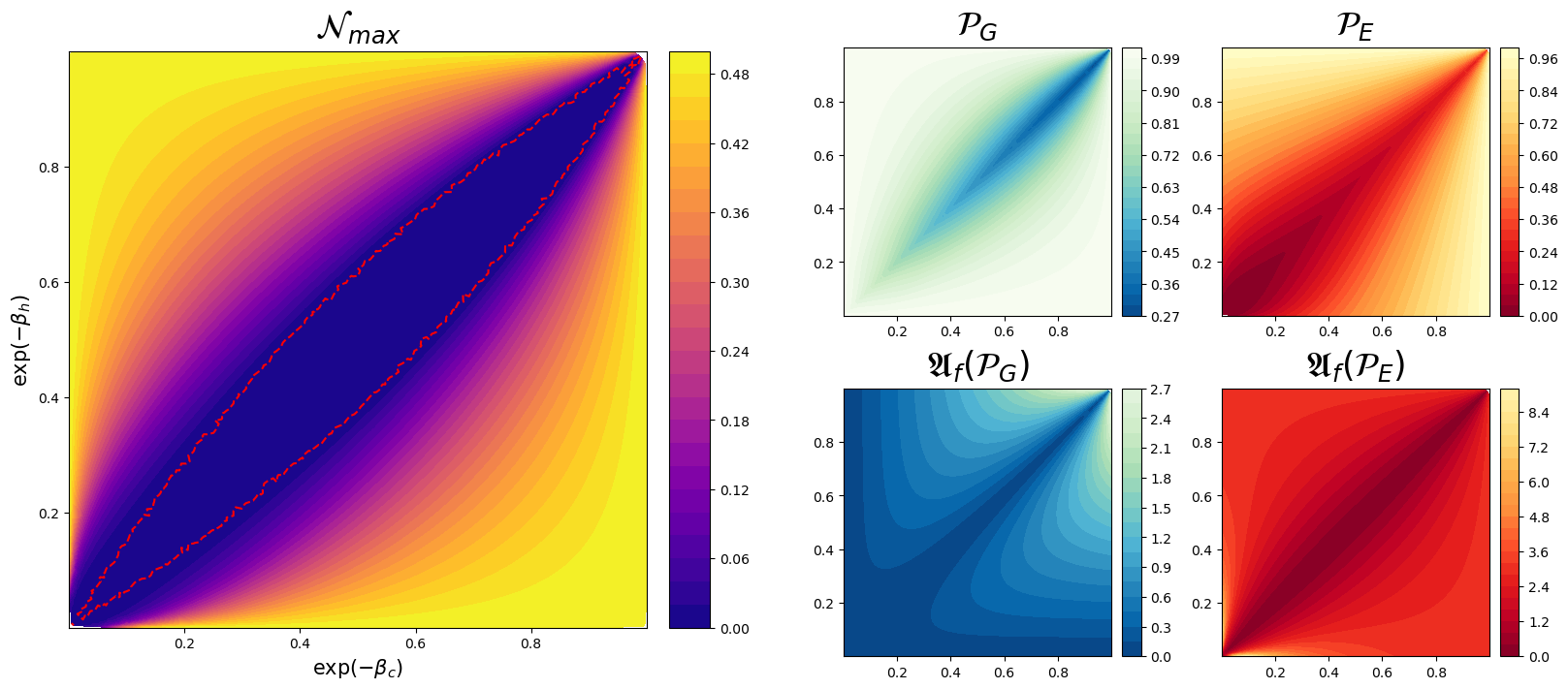}
            \caption{\textbf{Two-round LTOCC engine}}
            \label{fig:LTOCC_v1_2r_vals}
        \end{figure}

        \begin{figure}[H]
            \centering
            \includegraphics[width=1\linewidth]{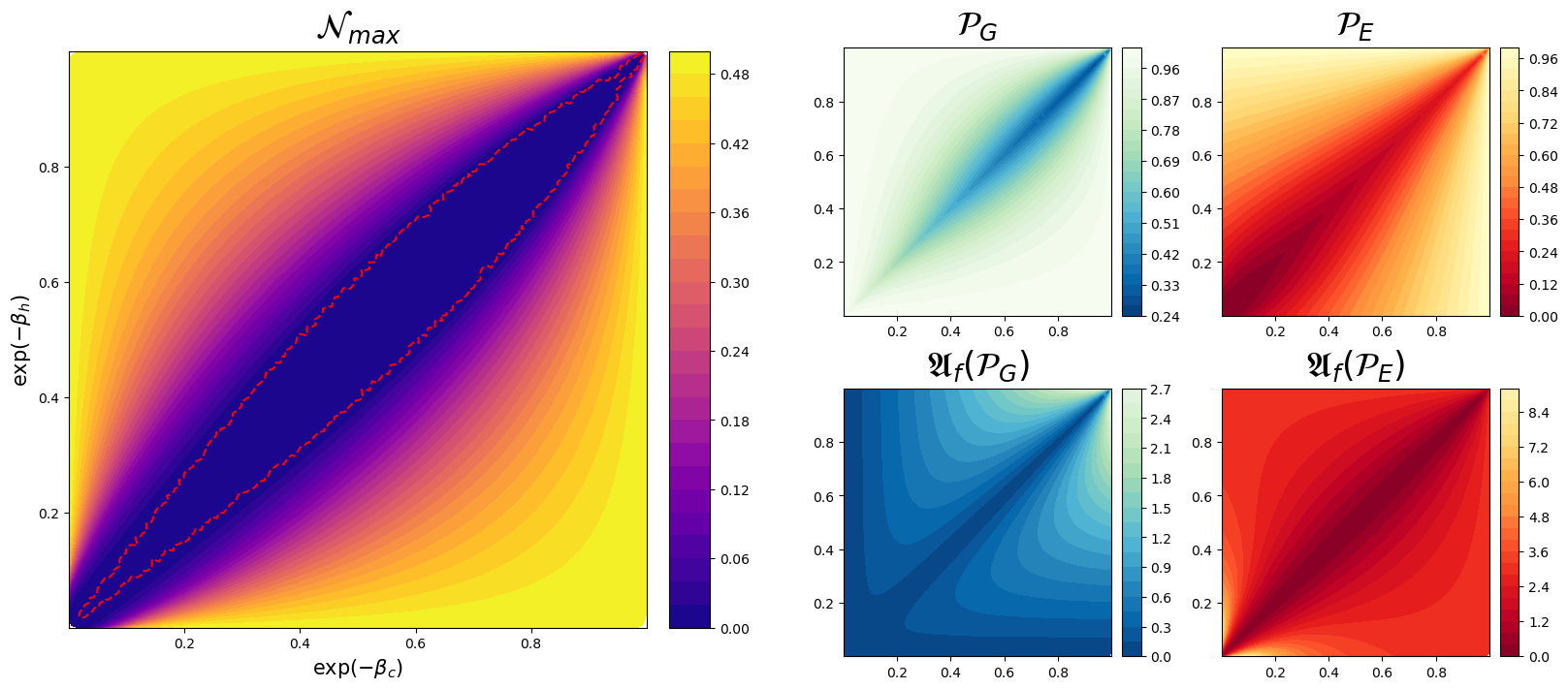}
            \caption{\textbf{ETO engine}}
            \label{fig:ETO_vals}
        \end{figure}

        \begin{figure}[H]
            \centering
            \includegraphics[width=1\linewidth]{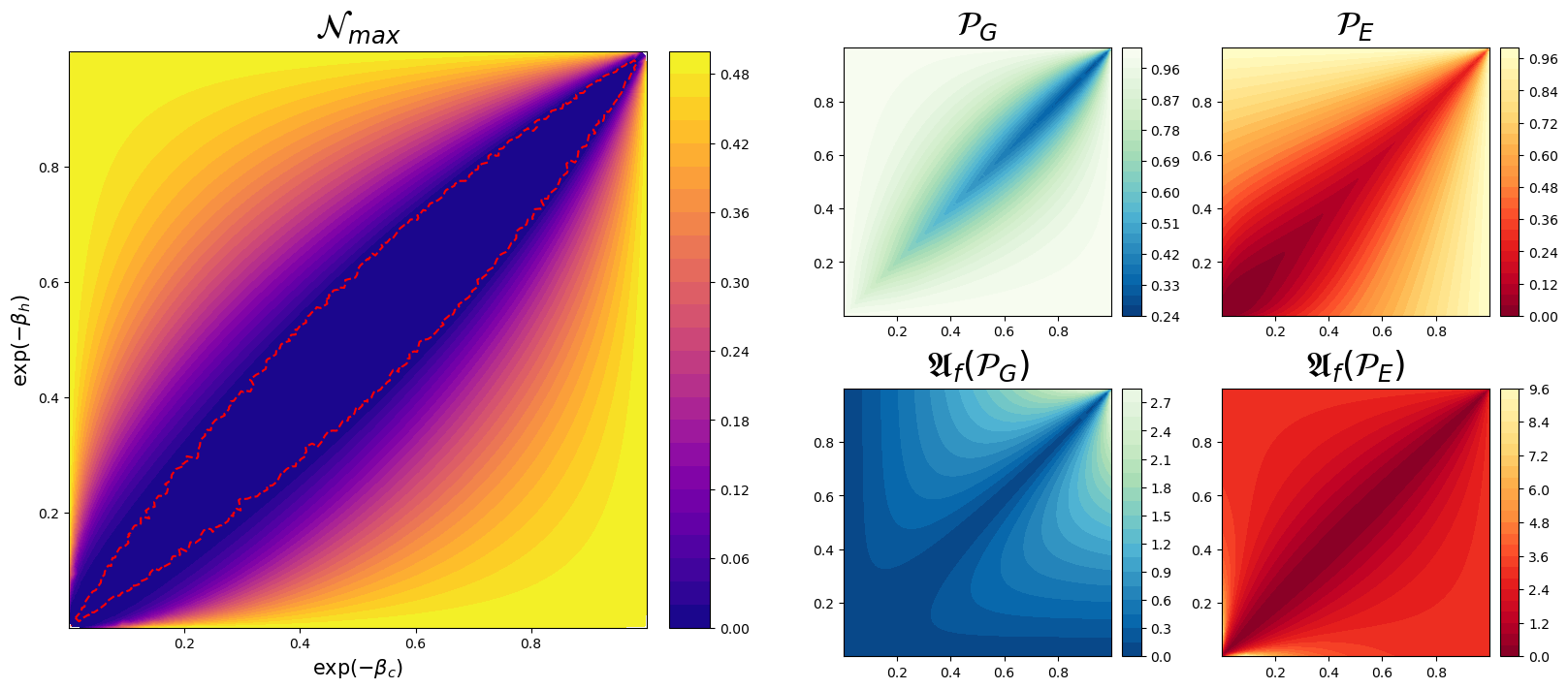}
            \caption{\textbf{Two-round LTOCC engine with asymmetric temperatures}}
            \label{fig:enter-LTOCC_2r_engine}
        \end{figure}

        Additionally, we include a figure depicting numerical results for an SLTO constrained to elementary operations. Note that the granular structure occurring around $\beta_c = \beta_h$ should be regarded as an artefact of numerical approach.
        
        \begin{figure}[H]
            \centering
            \includegraphics[width=1\linewidth]{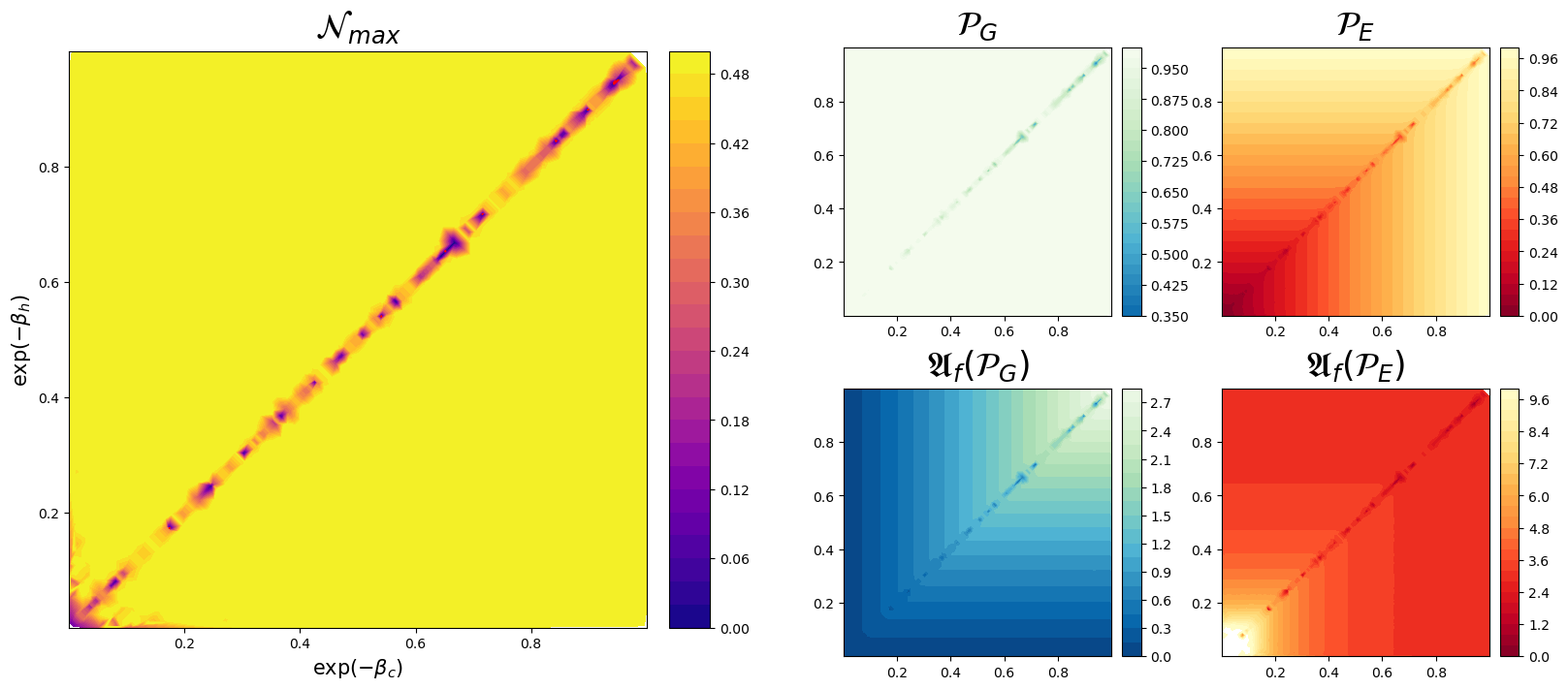}
            \caption{\textbf{ESLTO engine}}
            \label{fig:enter-ESLTO_engine}
        \end{figure}

        \section{Edge detection methods}\label{app:edge_detect}

        In this Appendix, we present a numerical study of critical behaviour in thermal engines. We start by discussing the methods and presenting the results for TO engine for which we predicted critical temperatures \eqref{eq:crit_temps}. Next, to validate our methods, we apply them to ETO tree-states for which the critical behaviour is fully analytically derived (Fig.~\ref{fig:treestates_reg}).

        \subsection{Edge detection for full thermal operations}

        Using the numerical results for engine based of full thermal operations, we have analysed the obtained data for edges, which are related to transitions in the behaviour of the respective quantities. The results of such analysis is collected in Figure~\ref{fig:edge_detect}. Specifically, the data generated according to the procedure described in Appendix \ref{app:numerical_proc} has been interpolated using linear and cubic interpolation to generate a $10^3\times10^3$ regular grid of points. Subsequently, the points have been subject to the double application of the Sobel edge detection procedure, thus for a given two-variable function $f$ we have approximated $\norm{\grad\qty(\norm{\grad f})}$, which has allowed us to find discontinuities of second derivative by visual means. 
        \begin{figure}[H]
            \centering
            \includegraphics[width=.8\linewidth]{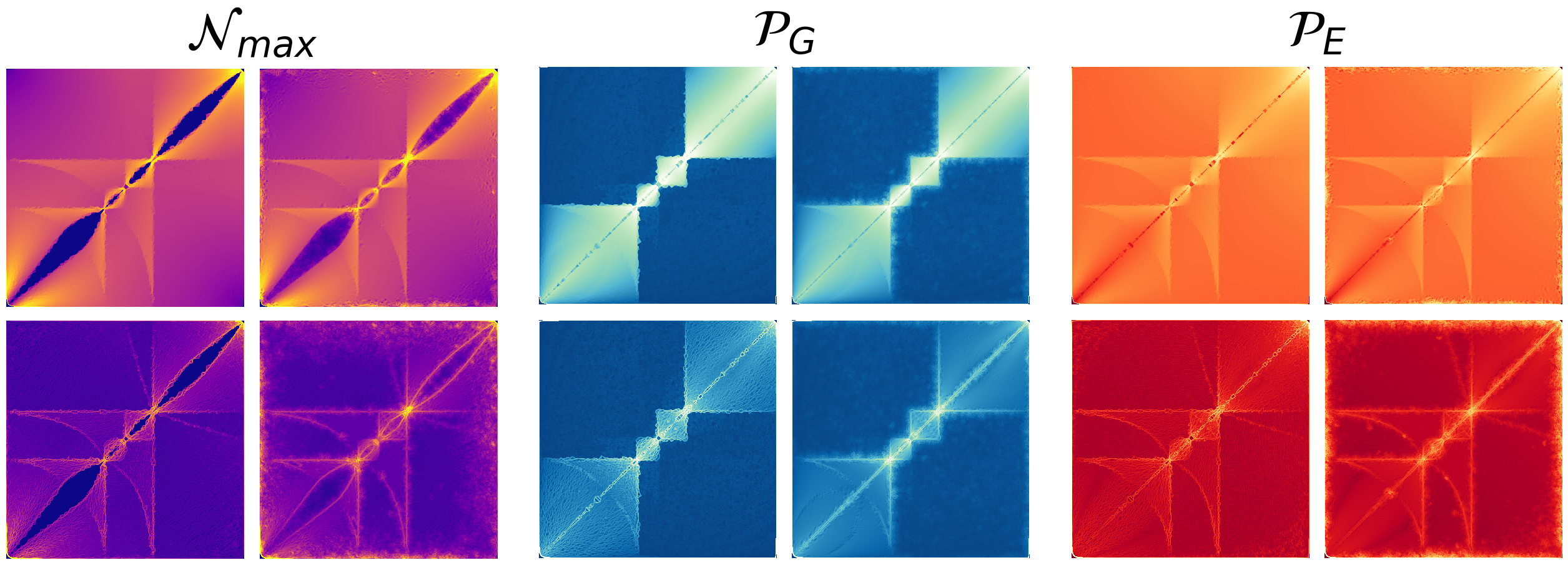}
            \caption{\textbf{Edge detection for full TO engine:} In order to identify all criticality behaviour akin to second-order phase transitions. To do this, we have employed the Sobel filter twice, with top row showing estimation of $\norm{\grad f}$ after first application and bottom row the estimation of $\norm{\grad\qty(\norm{\grad f})}$. In each subgrid left and right columns correspond to linear and cubic interpolation, respectively.}
            \label{fig:edge_detect}
        \end{figure}

        In Fig.~\ref{fig:ent_edge} we focus on the edges detected in the $\mathcal{P}_E$ by taking an average of cubic and linear interpolations, thus obtaining a clearer picture of the critical lines. Note that all continuous green lines correspond to the already known critical values of $\beta$, namely $e^{-\beta}\in\qty{\sqrt{2}-1,1/2,(\sqrt{5}-1)/2}$. On the other hand, the cyan dashed lines are only approximate, and except for the straight line connecting the points $((\sqrt{5}-1)/2,(\sqrt{5}-1)/2)$ and $(1/2,1)$ are just approximation and should not be taken as exact. Unfortunately, we have not been able to determine their origin.

        \begin{figure}[H]
            \centering
            \includegraphics[width=0.45\linewidth]{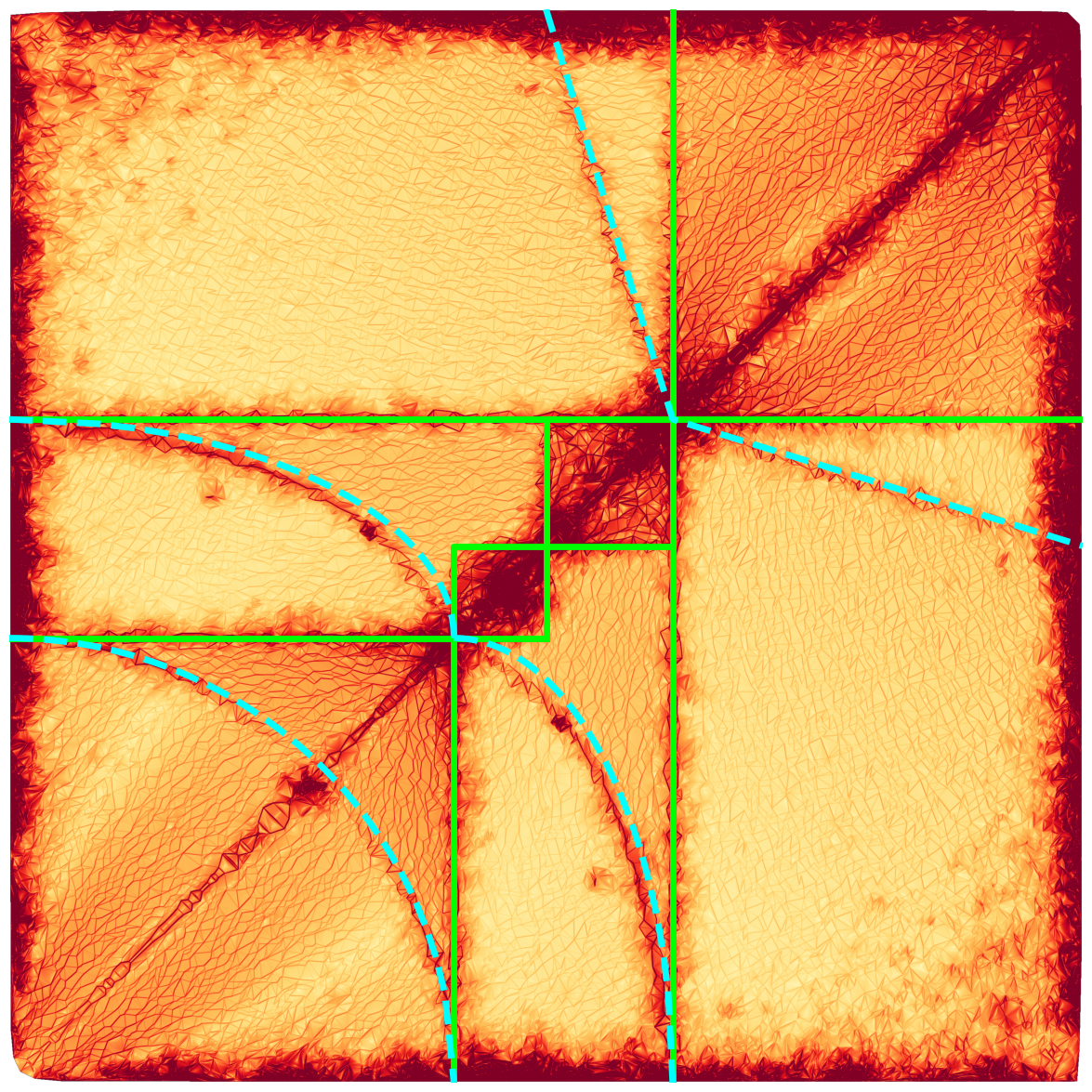}
            \caption{\textbf{Edge detection for $\mathcal{P}_E$ with full TO engine:} Using Sobel edge detection on average between cubic and linear interpolation, we obtain a better image of critical behaviour of $\mathcal{P_E}$ under full thermal operations' engine. Green lines correspond to critical temperatures that can be obtained directly from the properties of Gibbs state, while cyan dashed lines are approximation of behaviour observed in numerics that we have not been able to explain analytically.}
            \label{fig:ent_edge}
        \end{figure}

        \subsection{Edge detection for tree-states}

        As an additional confirmation, we have applied the Sobel edge detection to the data that can be generated from analytical formulas for tree-states, which have been used to generate Figure~\ref{fig:treestates_vals}, as the boundaries between different optimal states have not been used explicitly in the generation method. Below we present the results, which match the boundaries as presented in Fig.~\ref{fig:treestates_reg}. 
        \begin{figure}[H]
            \centering
            \includegraphics[width=.9\linewidth]{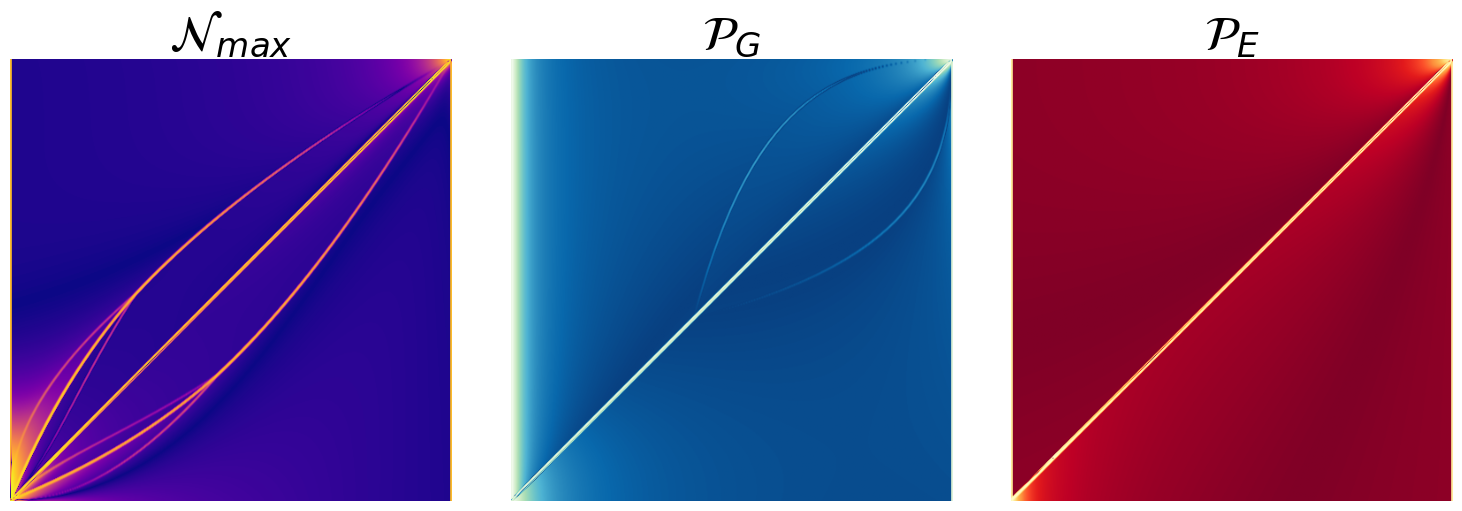}
            \caption{\textbf{Numerical retrieval of tree-state regions:} Division of regions obtained by numerical evaluation of tree-states' regions by Sobel edge detection should be compared against Figure~\ref{fig:treestates_reg} and equations \eqref{tree_ltocc_heating_cooling} and \eqref{tree_states_ETO}. Note that the plots do not exhibit complete uniformity, as the values change depending on the choice of specific temperatures.}
            \label{fig:treestate_edges}
        \end{figure}

\bibliographystyle{quantum_abbr}
\bibliography{references}
\end{document}